\tikzstyle{every state}=[scale=0.9, inner sep=1pt,minimum size=15pt]
\begin{document}

\title{Bounding Average-energy Games} 
\author{%
  Patricia Bouyer\inst{1}\thanks{Supported by ERC project EQualIS (308087).}
 \and 
  Piotr Hofman\inst{1,2} 
 \and 
  Nicolas Markey\inst{1,3}$^\star$
 \and
  Mickael Randour\inst{4}\thanks{F.R.S.-FNRS postdoctoral researcher.}
 \and 
  Martin Zimmermann\inst{5}\thanks{Supported by the DFG project TriCS (ZI 1516/1-1).}} 
%\date{}
\institute{%
  LSV, CNRS \& ENS Cachan, Universit\'e Paris Saclay, France
 \and 
University of Warsaw, ul.\ Banacha 2, 02-097 Warszawa, Poland
 \and
  IRISA, CNRS \& INRIA \& U. Rennes 1, France
 \and
  Computer Science Department, ULB - Universit\'e libre de Bruxelles, Belgium
 \and 
  Reactive Systems Group, Saarland University, 66123 Saarbrücken, Germany}
\maketitle

\begin{abstract}
  We consider average-energy games, where the goal is to minimize the
  long-run average of the accumulated energy. While several results
  have been obtained on these games recently, decidability of
  average-energy games with a lower-bound constraint on the energy
  level (but no upper bound) remained open; in~particular, so~far
  there was no known upper bound on the memory that is required for
  winning strategies.

  By reducing average-energy games with lower-bounded energy to
  infinite-state mean-payoff games and analyzing the density of
  low-energy configurations, we show an almost tight doubly-exponential upper
  bound on the necessary memory, and that the winner of average-energy
  games with lower-bounded energy can be determined in
  doubly-exponential time.  We~also prove \EXPSPACE-hardness of this
  problem.

  Finally, we consider multi-dimensional extensions of all types of
  average-energy games: without bounds, with only a lower bound, and
  with both a lower and an upper bound on the energy. We~show that the
  fully-bounded version is the only case to remain decidable in
  multiple dimensions.
\end{abstract}

\section{Introduction}
\label{sec-intro}
Quantitative two-player games of infinite duration provide a natural
framework for synthesizing controllers for reactive systems with resource restrictions in
an antagonistic environment (see e.g.,~\cite{BCHJ09,Ran13}). In such games, player~$\pO$ (who
represents the system to be synthesized) and player $\pI$ (representing
the antagonistic environment) construct an infinite path by moving a
pebble through a graph, which describes the interaction between the
system and its environment. The objective, a subset of the infinite
paths that encodes the controller's specification,
determines the winner of such a play. Quantitative games extend this
classical model by weights on edges for modeling costs, consumption,
or rewards, and by a quantitative objective to encode the
specification in terms of the weights. 

\begin{wrapfigure}{r}{.3\textwidth}
\centering
\begin{tikzpicture}[auto,node distance=2 cm, 
 thick]
\node[carre,jaune] 	(a)	at (0,1.8)				 	{$s_2$};
\node[rond,jaune]	(b)	at (1.5,0)	{$s_1$};
\node[rond,jaune]	(c)	at (-1.5,0)	{$s_0$};

\path[use as bounding box] (a);

\path[-latex']
	(-2.3,0) edge [] node [] {} (c)
	(a) edge [bend left] 	node[below left=-2pt] 	{$\scriptstyle  -4$}	(b)
	(a) edge [bend right] 	node[below right=-2pt]	{$\scriptstyle 2$}	(c)
	(b) edge [bend right = 15] 	node[above] 	{$\scriptstyle  0$}	(c)
	(c) edge [bend right = 15] 	node[below] 	{$\scriptstyle  -2$}	(b)
	(b) edge [bend right = 85] 	node[above right=-2pt]	{$\scriptstyle  -1$}	(a)
	(c) edge [bend left = 85] 	node[above left=-2pt]	{$\scriptstyle  4$}	(a)
; 
	\end{tikzpicture} 
      \caption{Simple weighted game.}
\label{fig:sampleGame}
	\vspace{-15pt}
\end{wrapfigure}
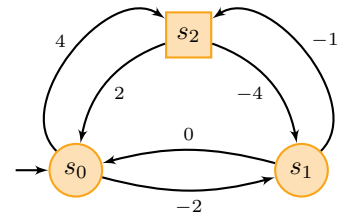

As an example, consider the game in Fig.~\ref{fig:sampleGame}: we
interpret negative weights as energy consumption and correspondingly
positive weights as recharges. Then, $\pO$ (who moves the pebble at
the circular states) can always maintain an energy level (the~sum of
the weights seen along a play prefix starting with energy zero) between
zero and six using the following strategy: when at state~$s_0$ with
energy level at least two, go~to state~$s_1$, otherwise go to
state~$s_2$ in order to satisfy the lower bound. At~state~$s_1$, 
always move to~$s_0$. It~is straightforward to verify that the
strategy has the desired property when starting at the initial
state~$s_0$ with initial energy zero. Note that this strategy requires
memory to be implemented, as its choices depend on the current energy
level and not only on the state the pebble is currently~at.

Formally, the \textit{energy objective} requires $\pO$ to maintain an energy level within some (lower and/or upper) bounds, which are either given as input or existentially quantified. In the example above, $\pO$ has a strategy to win for the  energy objective with lower bound zero and upper bound six. Energy objectives~\cite{bouyer2008,Chatterjee2013,Juhl13,DBLP:conf/icalp/JurdzinskiLS15} and their combinations with parity objectives~\cite{DBLP:journals/tcs/ChatterjeeD12,Chatterjee2013} have received significant attention in the literature.

However, a plain energy (parity) objective is sometimes not sufficient to
adequately model real-life systems. For example, consider the following
specification for the controller of an oil pump, based on a case study~\cite{hscc2009-CJLRR}: it has to keep the amount of oil in an accumulator within
given bounds (an energy objective with given lower and upper bounds)
while keeping the average amount of oil in the
accumulator below a given threshold \emph{in the long run}. The latter requirement reduces the wear and tear of the
system, but cannot be expressed as an energy objective nor as a parity
objective. Constraints on the long-run average energy level (which exactly represents the amount of oil in our example) can be specified using the \textit{average-energy
objective}~\cite{BMRLL16}. As seen in this example, they are typically studied in conjunction with bounds
on the energy level.

Recall the example in Fig.~\ref{fig:sampleGame}. The aforementioned strategy for $\pO$ guarantees a long-run average energy level, i.e., \textit{average-energy}, of at most~$11/4$ (recall we want to minimize it): the outcome with worst average-energy is~$\big(s_0 s_2 (s_0 s_1)^3\big)^\omega$, with energy levels $(4,6,4,4,2,2,0,0)^\omega$.

The average-energy objective was first introduced by Thuijsman and Vrieze
under the name \emph{total-reward}~\cite{TV87} (there is an unrelated,
more standard, objective called total-reward, see~\cite{BMRLL16} for a
discussion). Recently, the average-energy objective was independently
revisited by Boros \textit{et~al.}~\cite{BEGM15} and by Bouyer
\textit{et~al.}~\cite{BMRLL16}. The~former work studies Markov decision
processes and stochastic games with average-energy objectives.
The latter studies non-stochastic games with
average-energy objectives, with or without lower and upper
bounds on the energy level; it~determines the complexity of computing
the winner and the memory requirements for winning
strategies in such games.
In~particular, it~solves games with average-energy
objectives with \textit{both} an upper and a lower bound on the energy level by a
reduction to mean-payoff games: to~this~end, the~graph is extended to
track the energy level between these bounds (a~losing sink for
$\pO$ is reached if these bounds are exceeded). Thus, the~bounds
on the energy level are already taken care of and the average-energy
objective can now be expressed as a \textit{mean-payoff
objective}~\cite{EM79}, as the new graph encodes the
current energy level in its weights. This reduction yields an exponential-time decision algorithm. Moreover, it~is shown in~\cite{BMRLL16} that these games are
indeed \EXPTIME-complete. Note that the algorithm crucially depends on
the upper bound being given as part of the input, which implies that
the graph of the extended game is still finite.

One problem left open in~\cite{BMRLL16} concerns average-energy
games \textit{with only a lower bound} on the energy level:
computing the winner is shown to be \EXPTIME{}-hard, but it is 
problem is decidable at~all. Similarly, pseudo-polynomial lower bounds (i.e.,
lower bounds that are polynomial in the \emph{values} of the weights,
but possibly exponential in the size of their binary representations) on
the necessary memory to implement a winning strategy for $\pO$
are given, but no upper bound is known. The major obstacle toward
solving these problems is that without an upper
bound on the energy, a~strategy might allow arbitrarily large
energy levels while still maintaining a bounded average, by enforcing long stretches with a small energy level to offset the
large levels.

A step toward resolving these problems was taken by considering two variants of energy and average-energy objectives where (i) the upper bound on the energy level, or 
(ii) the threshold on the average-energy, is existentially
quantified~\cite{LLZ15}. It~turns out that these two variants are
equivalent. One direction is trivial: if~the~energy is bounded, then the average-energy is bounded. On the other hand, if $\pO$ can guarantee some upper bound on the
average, then he can also guarantee an upper bound on the energy
level, i.e., an (existential) average-energy objective can always be satisfied with bounded energy levels. This is shown by transforming a strategy satisfying a bound on the average (but possibly allowing arbitrarily high energy levels) into one that bounds the energy by skipping parts of plays where the energy level is much higher than the threshold on the average. However, the proof is not effective:
it~does not yield an upper bound on the necessary energy
level, just a guarantee that some bound exists. Even more so, it is still possible that the average has to increase when keeping the energy bounded. Hence, it does not answer our open problem: does achieving a \textit{given} threshold on the average-energy require unbounded energy levels and infinite memory?

Another potential approach toward solving the problem is to extend
the reduction presented in~\cite{BMRLL16} (which goes from average-energy games with both lower and upper bound on the energy level to
mean-payoff games) to games without such an upper bound, which results
in an infinite graph. This graph can be seen as the configuration
graph of a one-counter pushdown system, i.e., the stack height
corresponds to the current energy level, and the average-energy
objective is again transformed into a mean-payoff objective, where the
weight of an edge is given by the stack height at the target of the
edge. Hence, the weight function is unbounded. To the best of our
knowledge, such mean-payoff games have not been studied before.
However, mean-payoff games on pushdown systems with bounded weight 
functions are known to be undecidable~\cite{CV12}.

\paragraph{Our Contribution.}
This paper gives a full presentation of the results published in its conference version~\cite{fossacs2017}. We develop the first algorithm for
solving games with average-energy objectives and  a lower bound
on the energy level, and give
an upper bound on the necessary
memory to implement a winning strategy for~$\pO$ in such games.

First, we present an algorithm solving such games in doubly-exponential time (for the case of a binary encoding of the weights). The algorithm is based on the characterization of an average-energy game as a mean-payoff game on an infinite graph described above. If the average-energy of a play is bounded by the threshold~$t$, then configurations with energy level at most~$t$ have to be visited frequently. As there are only finitely many such configurations, we obtain cycles on this play. By a more fine-grained analysis, we obtain such a cycle with an average of at most~$t$ and whose length is bounded exponentially. Finally, by~analyzing strategies for reachability objectives in pushdown games, we show that $\pO$ can ensure that the distance between such cycles is bounded doubly-expo\-nen\-tially. From these properties, we obtain a doubly-expo\-nen\-tial upper bound on the necessary energy level to ensure an average-energy of at most~$t$. The~resulting equivalent average-energy game with a lower and an upper bound can be solved in doubly-exponential time. Furthermore, if the weights and the threshold are encoded in unary (or are bounded polynomially in the number of states), then we obtain an exponential-time algorithm.

Second, from the reduction sketched above, we also obtain a doubly-expo\-nen\-tial upper bound on the necessary memory for~$\pO$, the~first such bound. In~contrast, a~certain succinct one-counter game due to Hunter~\cite{Hunter14arxiv}, which can easily be expressed as an average-energy game with threshold zero, shows that our bound is almost tight: in the resulting game of size $n$, energy level $2^{({2^{\sqrt{n}}}/{\sqrt{n}})-1}$ is necessary to win. Again, in the case of unary encodings, we obtain an (almost) tight exponential bound on the memory requirements.

Third, we improve the lower bound on the complexity of solving average-energy games with only a lower bound on the energy level from $\EXPTIME{}$ to $\EXPSPACE{}$ by a reduction from succinct one-counter games~\cite{Hun15}.

Fourth, we show that multi-dimensional average-energy games are undecidable, both for the case without any bounds and for the case of only lower bounds. Only the case of games with both lower and upper bounds turns out to be decidable: it~is shown to be both in $\NEXPTIME$ and in $\co\NEXPTIME$. This~problem trivially inherits $\EXPTIME{}$-hardness from the one-dimensional case.

\section{Preliminaries}
\label{sec-defs}
\paragraph{Graph games.} 
We consider finite turn-based weighted games played on graphs between
two players, denoted by $\pO$ and~$\pI$. Such a game is a tuple $\Game
= (S_0, S_1, \trans)$ where (i)~$S_0$~and $S_1$ are disjoint sets of \textit{states} belonging to $\pO$ and~$\pI$, with $S = S_0
\uplus S_1$, (ii)~$\trans \subseteq S \times [-W;W] \times S$, for some~$W\in\bbN$, is a set of
integer-weighted \textit{edges}.  Given an edge $e=(s, w, t) \in \trans$, we write
$\src(e)$ for the source state~$s$ of~$e$, $\tgt(e)$ for its target state~$t$,
and $\weg(e)$ for its weight~$w$. We assume that for every $s \in S$, there is at least one outgoing edge $(s,w,s')
\in E$.

Let $s\in S$.  A~\textit{play} from~$s$ is an infinite sequence of edges $\play = (e_i)_{1\leq i}$ such that $\src(e_1) = s$ and $\tgt(e_i)=\src(e_{i+1})$ for all
$i \ge 1$. A play induces a corresponding sequence of states, denoted $\hat{\play} = (s_j)_{0\leq j}$, such that for any $e_i$, $i \geq 1$, in $\play$, $s_{i-1} = \src(e_i)$ and $s_{i} = \tgt(e_i)$. We write $\first(\play)=s_0$ for its initial state (here,~$s$). A~play \textit{prefix} from~$s$ is a finite sequence of edges $\rho = (e_i)_{1\leq i \leq k}$ following the same rules and notations. We additionally write $\last(\rho)= s_k = \tgt(e_k)$ for its last state. We~let~$\epsilon_s$ (or~$\epsilon$ when $s$~is clear from the context) denote the empty play prefix from~$s$, with $\last(\epsilon_s) = \first(\epsilon_s) = s$. A~non-empty prefix $\rho$ such that $\last(\rho) = \first(\rho)$ is called a \textit{cycle}. The length of a prefix $\rho = (e_i)_{1\leq i \leq k}$ is its number of edges, i.e., $\length(\rho) = k$. For a play~$\pi$, $\length(\play) = \infty$. 
Given a prefix~$\rho$ and a play (or prefix)~$\pi$ with $\last(\rho)=\first(\pi)$,
the~concatenation between~$\rho$ and~$\pi$ is denoted by $\rho\cdot\pi$.

For~a play $\play = (e_i)_{1\leq i}$
and $1 \le j \le k$, we~write $\play_{[j,k]}$ to denote the finite
sequence $(e_i)_{j\leq i\leq k}$, which is a prefix from~$\src(e_j)$;
we write $\play_{\le k}$ for $\play_{[1,k]}$. For~any $i \geq 1$ and $j \geq 0$, we write $\play_i$ for edge $e_i$ and $\hat{\play}_j$ for state $s_j$. Similar notations are used for prefixes~$\rho$, with all indices bounded by $\length(\rho)$.

The set of all plays in~$\Game$ from a state~$s$ is denoted by
$\plays(\Game,s)$, and the set of all such prefixes is denoted by
$\fruns(\Game,s)$. We write $\plays(\Game)$ and $\fruns(\Game)$ for the unions of those sets over all states. We~say that a prefix $\prefix \in \prefs(\Game)$ belongs to
$\player{i}$, for $i \in \{0,1\}$, if $\last(\prefix) \in S_i$. The~set of
prefixes that belong to $\player{i}$ is denoted by $\fruns_i (\Game)$, and we define $\fruns_i (\Game,s) = \fruns_i (\Game) \cap \fruns (\Game,s)$.

\paragraph{Payoffs.} 
 Given a non-empty prefix $\rho =
(e_i)_{1\leq i\leq n}$, we
define the following payoffs:
\begin{itemize}
\item its \emph{energy level} as
  $
  \EL(\rho) = \sum_{i = 1}^{n}
  \weg(e_i);
  $
\smallskip
\item its \textit{mean-payoff} as
  $\MP(\rho) =
  \frac{1}{n} \sum_{i = 1}^{n} \weg(e_i) = \frac{1}{n} \EL(\rho);$
\smallskip
\item its \textit{average-energy} as
  $
    \AEpay(\rho) = \frac{1}{n} \sum_{i = 1}^{n} \EL(\rho_{\le i}).
  $
\end{itemize}
These definitions are extended to plays by taking the upper limit
of the respective functions applied to the sequence of
prefixes of the plays, e.g.,
\[  
    \AEsup(\play) = 
     \limsup\nolimits_{n\to\infty}\frac{1}{n} \sum\nolimits_{i = 1}^{n} \EL(\play_{\le i}).
\]

\begin{example}
  We~illustrate those definitions on a small example depicted in Fig.~\ref{fig-ex}: it displays two small (1-player, deterministic) weighted games, together with the evolution of the energy level and average-energy along their unique play. As noted in~\cite{BMRLL16}, the \textit{average-energy} can help in discriminating plays that have identical \textit{total-payoffs} (i.e., the limits of high and low points in the sequence of energy levels), in the same way that total-payoff can discriminate between plays having the same \textit{mean-payoff}. Indeed, in our example, both plays have mean-payoff equal to zero and supremum (resp.~infimum) total-payoff equal to three (resp.~$-1$), but they end up having
  different averages: the~average-energy is~$1/2$ for the left play, while it is~$3/2$ for the right one.
\end{example}

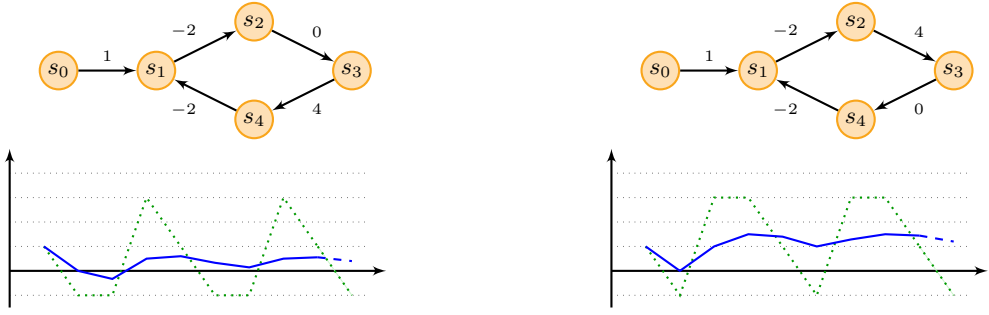
\begin{figure}[t]
\centering
\begin{tikzpicture}[thick]
\begin{scope}[scale=1.3]
\draw (0,-.25) node[medrond,jaune] (a) {} node {$s_0$};
\draw (1,-.25) node[medrond,jaune] (b) {} node {$s_1$};
\draw (2,-.75) node[medrond,jaune] (d) {} node {$s_4$};
\draw (3,-.25) node[medrond,jaune] (e) {} node {$s_3$};
\draw (2,.25) node[medrond,jaune] (c) {} node {$s_2$};
\draw[-latex'] (a) -- (b) node[midway,above] {$\scriptstyle 1$};
\draw[-latex'] (b) -- (c) node[midway,above left] {$\scriptstyle -2$};
\draw[-latex'] (c) -- (e) node[midway,above right] {$\scriptstyle 0$};
\draw[-latex'] (e) -- (d) node[midway,below right] {$\scriptstyle 4$};
\draw[-latex'] (d) -- (b) node[midway,below left] {$\scriptstyle -2$};
\begin{scope}[yshift=-2.3cm,xshift=-.5cm,xscale=.35,yscale=.25]
\draw[-latex'] (0,0) -- (11,0);
\draw[-latex'] (0,-1.5) -- (0,5);
\foreach \y in {-1,1,2,3,4} 
  {\draw[dotted,line width=.2pt] (0,\y) -- +(10.5,0);}
\draw[vert,fill=none,line width=.8pt,dotted] (1,1) -- (2,-1) -- (3,-1) --
  (4,3) -- (5,1) -- (6,-1) -- (7,-1) -- (8,3) -- (9,1);
\draw[dashed,vert,line width=.8pt,dotted] (9,1) -- (10,-1);
\draw[blue,line width=.8pt] (1,1) -- (2,0) -- (3,-.33) -- (4,.5) --
  (5,.6) -- (6,.33) -- (7,.143) -- (8,.5) -- (9,.556);
\draw[blue,dashed,line width=.8pt] (9,.556) -- (10,.4);
\end{scope}
\end{scope}
\begin{scope}[xshift=8cm,scale=1.3]
\draw (0,-.25) node[medrond,jaune] (a) {} node {$s_0$};
\draw (1,-.25) node[medrond,jaune] (b) {} node {$s_1$};
\draw (2,-.75) node[medrond,jaune] (d) {} node {$s_4$};
\draw (3,-.25) node[medrond,jaune] (e) {} node {$s_3$};
\path[use as bounding box] (0,0);
\draw (2,.25) node[medrond,jaune] (c) {} node {$s_2$};
\draw[-latex'] (a) -- (b) node[midway,above] {$\scriptstyle 1$};
\draw[-latex'] (b) -- (c) node[midway,above left] {$\scriptstyle -2$};
\draw[-latex'] (c) -- (e) node[midway,above right] {$\scriptstyle 4$};
\draw[-latex'] (e) -- (d) node[midway,below right] {$\scriptstyle 0$};
\draw[-latex'] (d) -- (b) node[midway,below left] {$\scriptstyle -2$};
\begin{scope}[yshift=-2.3cm,xshift=-.5cm,xscale=.35,yscale=.25]
\draw[-latex'] (0,0) -- (11,0);
\draw[-latex'] (0,-1.5) -- (0,5);
\foreach \y in {-1,1,2,3,4} 
  {\draw[dotted,line width=.2pt] (0,\y) -- +(10.5,0);}
\draw[vert,fill=none,line width=.8pt,dotted] (1,1) -- (2,-1) -- (3,3) --
  (4,3) -- (5,1) -- (6,-1) -- (7,3) -- (8,3) -- (9,1);
\draw[dashed,vert,line width=.8pt,dotted] (9,1) -- (10,-1);
\draw[blue,line width=.8pt] (1,1) -- (2,0) -- (3,1) -- (4,1.5) --
  (5,1.4) -- (6,1) -- (7,1.286) -- (8,1.5) -- (9,1.444);
\draw[blue,dashed,line width=.8pt] (9,1.444) -- (10,1.2);
\end{scope}\end{scope}
\end{tikzpicture}
\caption{Two plays with identical mean-payoffs and total-payoffs. The left one has average-energy $1/2$, in contrast to $3/2$ for the right one. Green (dotted) and blue curves respectively represent the energy level and the average-energy over prefixes.}
\label{fig-ex}
  \end{figure}

\paragraph{Strategies.} 
A~\textit{strategy} for $\player{i}$, with $i \in \{0,1\}$, from a state $s$ is a function $\St_i\colon \fruns_i(\Game, s) \to \trans$ satisfying
${\src(\St_i(\prefix)) = \last(\prefix)}$ for all
$\prefix \in \fruns_i(\Game, s)$. We denote by $\strats_{i}(\Game, s)$, the set of strategies for~$\player{i}$ from state~$s$. We~drop $\Game$ and~$s$ when they are clear from the context.

A play $\play = (e_j)_{1\leq j}$ from $s$ is called an \textit{outcome} of strategy~$\St_i$
of~$\player{i}$ if, for all $k \ge 0$ where $\pi_{\le k} \in
\fruns_i(\Game, s)$, we have $\St_i(\play_{\le k}) = e_{k+1}$. Given a state
$s \in \states$ and strategies $\St_0$ and $\St_1$ from $s$ for both
players, we denote by $\out(s, \St_0,\St_1)$ the unique 
play that starts in $s$ and is an outcome of both $\St_0$
and~$\St_1$. When fixing the strategy of only~$\player{i}$, we denote
the set of outcomes by 
\[
\outs(s, \St_i) = 
 \{\outs(s, \St_0,\St_1) \mid \St_{1-i}\in\strats_{1-i}(\Game, s)\}.
 \]

\paragraph{Objectives.} 
An~objective in $\Game$ is a set $\mathcal{W} \subseteq
\plays(\Game)$.  Given a game $\Game$, an~initial state~$\initState$,
and an objective~$\mathcal{W}$, a~strategy $\St_0 \in \strats_{0}$ is
winning for $\pO$ if $\out(\initState, \St_0) \subseteq \mathcal{W}$.

We consider the following objectives for~$P_0$:
\begin{itemize}
\item The 
  \textbf{lower-bounded energy} objective
  $\LBound = \{ \play \in \plays(G)$ $\mid \forall\, n
  \ge 1,\penalty100\ \EL(\play_{\le n}) \geq 0 \}$ requires
a  non-negative energy level at all times.\footnote{For the sake of readability, we assume the initial credit to be zero for energy objectives throughout this paper. Still, our techniques can easily be generalized to an arbitrary initial credit~$\initCredit \in \bbN$.}
\item Given an upper bound $U \in \bbN$, the \textbf{lower- and upper-bounded energy} objective $\LUBound(U) = \{ \play \in
  \plays(\Game) \mid \forall\, n \ge 1,\ \EL(\play_{\le
    n}) \in [0,U] \}$ requires that the energy always remains
  non-negative and below the upper bound $U$ along a play.
\item Given a threshold $t \in \bbQ$, the \textbf{mean-payoff} objective $\MeanPayOff(t) = \{ \play \in \plays(\Game) \mid
  \MPsup(\play) \le t \}$ requires that the mean-payoff is at
  most~$t$.
\item Given a threshold $t \in \bbQ$, the \textbf{average-energy} objective $\AvgEnergyLevel(t) = \{ \play \in
  \plays(\Game) \mid \AEsup(\play) \le t \}$ requires
  that the average-energy is at most~$t$.
\end{itemize}
For the $\MeanPayOff$ and $\AvgEnergyLevel$ objectives, $\pO$ aims to
\textit{minimize} the payoff.

\paragraph{Decision problem.} 
In this paper, we focus on weighted games with a combination of energy and
average-energy objectives, by a detour via mean-payoff objectives. 
The exact problem we tackle is named the \textit{AEL threshold problem} and is defined as follows: given a finite game~$\Game$, an~initial state $\initState \in
\states$, and a threshold $t
\in \bbQ$ given as a fraction $\frac{t_1}{t_2}$ with $t_1$ and $t_2$ two natural numbers given in binary, decide whether~$\pO$ has a winning
strategy from~$\initState$ for the objective~$\AvgLower(t) =\LBound \cap
\AvgEnergyLevel(t)$.
As for the threshold, we consider a binary encoding of the weights in $\Game$: we thus study the complexity of the problem with regard to the length of the input's binary encoding (i.e., the number of bits used to represent the graph and the numbers involved).

Variants of this problem involving the above-mentioned payoff
functions, and combinations thereof, had been previously investigated,
see Table~\ref{tab:results} for a summary of the results. In this paper, we focus on
the remaining case, namely 2-player games with \textit{AEL} objectives, for
which decidability was not known, and proving the computational- and
memory complexities given in the corresponding cells of the table.

\renewcommand{\arraystretch}{1.1}
\begin{table*}[t]\centering
\scalebox{1}{\begin{tabular}{cccccc}\toprule
Game objective & \textbf{1-player} && \textbf{2-player}&& \textbf{memory}  \\
 \midrule

\textit{MP}  & \PTIME~\cite{Kar78} && \NP $\cap$ \coNP~\cite{ZP96} && memoryless~\cite{EM79} \\
\textit{EGL}  & \PTIME \cite{bouyer2008} && \NP $\cap$ \coNP \cite{bouyer2008,emsoft2003-CAHS} && memoryless~\cite{emsoft2003-CAHS} \\
\textit{EGLU} & \PSPACE-c. \cite{FJ13} && \EXPTIME-c. \cite{bouyer2008} &&
exponential \cite{BMRLL16} \\
 \midrule

\textit{AE}  & \PTIME \cite{BMRLL16} && \NP $\cap$ \coNP
\cite{BMRLL16}  && memoryless \cite{BMRLL16}  \\
\textit{AELU} & \PSPACE-c. \cite{BMRLL16}  &&
\EXPTIME-c. \cite{BMRLL16}  && exponential \cite{BMRLL16}  \\
\textit{AEL}  & \PSPACE-e.~/~\NP-h. \cite{BMRLL16}  && {\color{red}
\EXPTIME[2]-e.~/~\EXPSPACE-h.} && \textcolor{red}{super-exp.~/~doubly-exp.} \\
\bottomrule
\end{tabular}}
\vspace{1mm}
\caption{Complexity of deciding the winner and memory requirements for
  quantitative games: \textit{MP}~stands for mean-payoff, \textit{EGL} (resp.~\textit{EGLU}) for lower-bounded (resp.~lower- and
  upper-bounded) energy, \textit{AE} for average-energy,
  \textit{AEL} (resp.~\textit{AELU}) for average-energy under a lower
  bound (resp.~and upper bound $U \in \mathbb{N}$) on the energy, c.~for
  complete, e.~for easy, and h.~for hard. All memory bounds are tight (except for \textit{AEL}).}
\label{tab:results}
\end{table*}

\section{Equivalence with an Infinite-State Mean-payoff Game}
\label{subsec:reductionToMP}
Let $G=(S_0,S_1,E)$ be a finite weighted game, $\initState \in S$ be
an initial state, and $t
\in \bbQ$ be a threshold.  We define its \emph{expanded infinite-state
weighted game} as $G'=(\Gamma_0,\Gamma_1,\Delta)$ defined by
\begin{itemize}
\item $\Gamma_0 = S_0 \times \bbN$, and $\Gamma_1 = S_1 \times \bbN
  \uplus \{\bot\}$ (where
  $\bot$ is a fresh sink state that does not belong to~$G$); then $\Gamma =
  \Gamma_0 \uplus \Gamma_1$ is the global set of states;
\item $\Delta$~is composed of the following edges:
  \begin{itemize}
  \item a transition $((s,c),c',(s',c')) \in \Delta$
    whenever there is $(s,w,s') \in E$ with $c' = c+w \ge 0$;
  \item a transition $((s,c),\ceil{t}+1,\bot) \in \Delta$ whenever there is $(s,w,s') \in E$
    such that $c+w <0$;
  \item finally, a transition $(\bot,\ceil{t}+1, \bot) \in \Delta$.
  \end{itemize}
\end{itemize}
In this expanded game, elements of $\Gamma$
are called \emph{configurations}, and the initial configuration is set
to $(\initState,0)$.

\begin{restatable}{lemma}{lemReducToMP}
\label{lem:reducToMP}
  Player $\pO$ has a winning strategy in~$G$ from state~$\initState$ for the objective
  $\AvgLower(t)$ if, and only~if, he has a winning strategy in~$G'$ from
  configuration $(\initState,0)$ for the objective $\MeanPayOff(t)$.
\end{restatable}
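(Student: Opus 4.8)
\emph{Proof plan.} The plan is to leverage how $G'$ is built: a configuration $(s,c)$ records the energy level $c$ accumulated so far, the sink $\bot$ is entered exactly when the next step would make the energy negative, and --- crucially --- the weight of an edge of $G'$ equals the energy level \emph{at its target} (with the self-loop at $\bot$ weighted $\ceil{t}+1 > t$). Thus, to a play $\pi$ of $G$ from $\initState$ whose energy never goes negative I would associate its \emph{lift} $\pi'$ in $G'$ from $(\initState,0)$, obtained by annotating the $j$-th visited state with $\EL(\pi_{\le j})$; the weight of the $i$-th edge of $\pi'$ is then $\EL(\pi_{\le i})$, so for every $n$ one gets $\MP(\pi'_{\le n}) = \frac1n\sum_{i=1}^n \EL(\pi_{\le i}) = \AEpay(\pi_{\le n})$ and hence $\MPsup(\pi') = \AEsup(\pi)$. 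If $\pi$ instead violates the lower bound, the analogous lift reaches $\bot$ at the first violating step and then loops there, so its mean-payoff is $\ceil{t}+1 > t$. This lifting, together with the inverse \emph{projection} dropping the energy component of each configuration, is a bijection between $\bot$-free prefixes of $G'$ from $(\initState,0)$ and prefixes of $G$ from $\initState$ along which the energy stays non-negative, and it preserves the owner of a prefix since it keeps the state component. Since $\bot\in\Gamma_1$ is a sink, every prefix ending in $\Gamma_0$ is automatically $\bot$-free, so $\pO$-strategies in $G'$ need only be specified on such prefixes.

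From $G$ to $G'$: given a winning strategy $\St_0$ of $\pO$ for $\AvgLower(t)$ in $G$, I would lift it to $\St_0'$ in $G'$, letting $\St_0'(\rho')$ be the $G'$-edge corresponding to $\St_0$ applied to the projection of $\rho'$; this edge never enters $\bot$ precisely because $\St_0$ keeps the energy non-negative. Any outcome of $\St_0'$ then projects to an outcome $\pi$ of $\St_0$, which lies in $\LBound\cap\AvgEnergyLevel(t)$; the first conjunct means the outcome of $\St_0'$ avoids $\bot$, and the second, via $\MPsup(\pi')=\AEsup(\pi)$, gives $\MPsup(\pi')\le t$. Hence $\St_0'$ is winning for $\MeanPayOff(t)$.

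From $G'$ to $G$: the key preliminary fact is that a \emph{winning} strategy $\St_0'$ for $\MeanPayOff(t)$ never selects an edge into $\bot$ from a configuration reached by some play consistent with $\St_0'$ --- otherwise $\pI$ could realize that play, after which the forced $\bot$-loop yields mean-payoff $\ceil{t}+1 > t$. So the projection $\St_0$ of $\St_0'$ is well-defined on prefixes of $G$ along which the energy stays non-negative (define it arbitrarily elsewhere, as those prefixes are already losing). Every outcome $\pi$ of $\St_0$ then stays non-negative: at a first violating step, either $\pO$ moves --- but the corresponding $G'$-edge would go to $\bot$ from a configuration reachable under $\St_0'$ (the prefix so far lifts to a $\bot$-free $\St_0'$-consistent prefix), contradicting the preliminary fact --- or $\pI$ moves, and then $\pI$ could realize that lifted prefix and take the available $\bot$-edge, again contradicting that $\St_0'$ wins. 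Given that $\pi$ stays non-negative, it lifts to an outcome $\pi'$ of $\St_0'$, so $\MPsup(\pi')\le t$, i.e.\ $\AEsup(\pi)\le t$; combined with $\pi\in\LBound$ this gives $\pi\in\AvgLower(t)$, so $\St_0$ is winning.

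I expect the only delicate part to be the bookkeeping around $\bot$: one must exclude negative-energy steps uniformly, whichever player causes them --- $\pO$'s bad moves because a winning mean-payoff strategy never routes to $\bot$, and $\pI$'s bad moves because $\pI$ would then realize a losing $\bot$-ending play in $G'$. The remaining ingredients --- the prefix bijection, the one-line identity $\MP(\pi'_{\le n}) = \AEpay(\pi_{\le n})$, and transporting outcomes through the lift and the projection --- are routine.
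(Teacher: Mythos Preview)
Your proposal is correct and follows essentially the same approach as the paper: both define the lift/projection correspondence between plays of $G$ and $\bot$-free plays of $G'$, observe the identity $\MP(\pi'_{\le n}) = \AEpay(\pi_{\le n})$, and transport strategies through this correspondence, using the fact that any $\bot$-visiting play has mean-payoff $\ceil{t}+1>t$ to rule out $\bot$ under a winning strategy. Your backward direction is in fact slightly more explicit than the paper's, spelling out the $\pO$-move vs.\ $\pI$-move case analysis for why outcomes of the projected strategy never go energy-negative, whereas the paper compresses this into the single observation that no outcome of $\sigma_0'$ visits $\bot$.
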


\begin{proof}  
  Let $\iota$ be the mapping that assigns to every play $\pi =
  (e_i)_{1\leq i}$ from~$\initState$ an expanded play in~$G'$ from
  $(\initState,0)$, defined inductively as follows.
	\begin{itemize}
		\itemsep0.3em
	\item $\iota(\epsilon) = \epsilon$;
	\item 
	Let $\gamma_n=\last(\iota(\pi_{\leq n}))$ for $n \geq 0$ with $\gamma_0 = last(\iota(\epsilon)) = (\initState, 0)$, the initial configuration. For $n \geq 1$, $\pi_{\leq n} = \pi_{\leq n - 1} \cdot (s_{n-1}, w_n, s_{n})$, we define $\iota(\pi_{\leq n}) =\iota(\pi_{\leq n-1}) \cdot \delta_n$ where $\delta_n \in \Delta$ is defined as follows:
	\vspace{0.3em}
		\begin{itemize}
		\itemsep0.3em
	  		\item if $\gamma_{n -1} = \bot$ then $\delta_n=(\bot, \ceil{t}+1, \bot)$;
			\item else $\gamma_{n-1} = (s_{n-1}, c_{n-1})$ and if $c_{n-1} + w_n < 0$ then $\delta_n=(\gamma_{n-1},\ceil{t}+1, \bot)$;
			\item else $\delta_n=(\gamma_{n-1}, c_n, \gamma_{n})$ for $c_n = c_{n-1} + w_n$ and $\gamma_{n}= (s_{n},c_n)$.
		\end{itemize}
	\end{itemize}

  One easily checks that the following properties hold:
  \begin{itemize}
  \item for any $n \geq 0$, $\length(\iota(\pi_{\leq n})) = \length(\pi_{\leq n})$,
  \item $\iota(\pi)$ reaches $\bot$ if, and only if, $\EL(\pi_{\leq j})<0$ for some~$j\geq 1$,
  \item if $\iota(\pi)$ never reaches $\bot$, then for all $j \geq 1$,
   \[
  \MP(\iota(\pi_{\leq j})) = \AEpay(\pi_{\leq j}).
  \]
\end{itemize}   
Hence, if a play $\pi$ in $G$ always keeps its energy level non-negative, we have that $\MPsup(\iota(\pi)) = \AEsup(\pi)$, and if it does not, that $\MPsup(\iota(\pi)) = \ceil{t}+1$ thanks to the sink state $\bot$. Furthermore, observe that if a play $\pi'$ in $\Game'$ does not reach $\bot$, then there is a unique corresponding play $\pi = \iota^{-1}(\pi')$ in $\Game$: i.e., we have a bijection for plays that avoid~$\bot$.
  
  Now, pick a winning strategy~$\sigma_0$ for~$\pO$ in~$G$ from
  $\initState$ for $\AvgLower(t)$. First notice that
  for every $\pi \in \outs(\initState,\sigma_0)$, for every $j \ge 1$,
  $\EL(\pi_{ \le j}) \ge 0$; in particular, $\iota(\pi)$
  never visits~$\bot$ in~$G'$. We define the strategy~$\sigma'_0$ in~$G'$ that mimics~$\sigma_0$ based on the aforementioned bijection. By construction, we have that any outcome $\pi'$ of $\sigma'_0$ avoids $\bot$ and thus that its mean-payoff is equal to the average-energy of $\iota^{-1}(\pi')$ in $G$, which is an outcome of $\sigma_0$. Hence, by hypothesis on $\sigma_0$, $\sigma'_0$ is winning for $\MeanPayOff(t)$.
  
  Conversely, pick a winning strategy~$\sigma'_0$ in~$G'$ from
  $(\initState,0)$ for objective
  $\MeanPayOff(t)$. No~outcome may visit $\bot$, otherwise the mean-payoff would be $\ceil t+1>t$. Hence the bijection holds for all outcomes of $\sigma'_0$ and we use it to define the strategy $\sigma_0$ that mimics $\sigma'_0$ in $G$. By hypothesis on $\sigma'_0$, any outcome $\pi$ of $\sigma_0$ will keep the energy level non-negative at all times, and will be such that $\AEsup(\pi) = \MPsup(\iota(\pi))$, hence will have its average-energy bounded by $t$ since $\iota(\pi)$ is an outcome of $\sigma'_0$. Thus $\sigma_0$ is winning for $\AvgLower(t)$, which concludes our proof.\qed
\end{proof}

For the rest of this paper, we fix a weighted game $G = (S_0,S_1,E)$ and a threshold $t \in \bbQ$, and work on the corresponding expanded weighted game $G' = (\Gamma_0,\Gamma_1,\Delta)$. We~write $t = \frac{t_1}{t_2} = \Int{t} + \frac{t'}{t_2}$,
where $t_1, t_2, t' \in \bbN$ (recall they are given in binary), and $0 \le t' < t_2$, and $\Int{t}$ stands for
the integral part of~$t$.
We~also let $\tildet = \Int{t}+1-t = 1-\frac{t'}{t_2}$. Hence $\tildet=1$
indicates that $t$~is an integer. For a given
threshold~$t\in \bbQ$, we consider $\Gamma^{\le t} = \{(s,c) \in
\Gamma \mid c \le t\}$, i.e., the set of configurations below the
threshold.

Note that $G'$ can be interpreted as a one-counter pushdown mean-payoff game with an unbounded weight function. While it is
well-known how to solve mean-payoff games over \textit{finite} arenas,
not much is known for infinite arenas (see
Section~\ref{sec-intro}). However, our game has a special structure
that we will exploit to obtain an algorithm. Roughly, our approach consists in transforming the $\AvgLower(t)$
objective into an equivalent $\AvgLowerUpper(t,U) = \LUBound(U) \cap
  \AvgEnergyLevel(t)$ objective, where (the value
of)~$U$ is doubly-exponential in the input by analyzing plays and strategies in $G'$. In other terms, we show
that any winning strategy for $\AvgLower(t)$ can be transformed into
another winning strategy along which the energy level remains
upper-bounded by~$U$. 

The proof is two-fold: we~first show (in Section~\ref{sec-bounding}) that we can
bound the energy level for the special case where the objective
consists in reaching a finite set of configurations of the game (with only a
lower bound on the energy level). This is achieved by a detour to
pushdown games: while there are known algorithms for solving
reachability pushdown games, to the best of our knowledge, there are no (explicit) results bounding the maximal
stack height.

As a second step (in Section~\ref{sec-algo}), we~identify \emph{good cycles}
in winning outcomes, and prove that they can be shown to have bounded length. The
initial configurations of those cycles will then be the targets of the
reachability games above. Combining these two results yields the desired upper bound on the energy levels.

\section{Bounding One-counter Reachability Games}
\label{sec-bounding}
We~focus here on a reachability objective in~$G'$, where the
target set is a subset~$\Gamma'\subseteq \Gamma^{\le t}$: we~aim at
bounding the maximal energy level that needs to be visited with a winning
strategy.

The game $G'$ is a particular case of a pushdown game~\cite{Walukiewicz01}. Hence we 
use results on pushdown games, and build a new winning strategy, in
which we will be able to bound the energy level at every visited
configuration. Note that the bound~$M'$ in the following lemma is doubly-exponential,
as we encode~$W$, the~largest absolute weight in~$G$, and the
threshold~$t$, in binary. The proof of the next lemma is based
on the reformulation of the algorithm from~\cite{Walukiewicz01} made
in~\cite{FridmanZimmermann12}.

\begin{restatable}{lemma}{lemPushdownGames}
  \label{lemma:pushdown_games}
  Fix $M \in \mathbb{N}$. There exists $M' = 2^{\mathcal{O}(M +
    \size{S} + \size{E} \cdot W + \size{S} \cdot (\lceil t\rceil
    +1))}$ such that for every $\gamma=(s,c)$ with $c \le M$ and for
  every $\Gamma' \subseteq \Gamma^{\le t}$, if there is a strategy for
  $\pO$ to reach $\Gamma'$ from $\gamma$ in $G'$, then there is also a
  strategy which ensures reaching $\Gamma'$ from $\gamma$ without
  exceeding energy level $M'$.
\end{restatable}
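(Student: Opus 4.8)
The plan is to leverage the known structure of winning strategies in pushdown reachability games, as reformulated in \cite{FridmanZimmermann12}, and to extract from such a strategy an explicit bound on the stack height (= energy level) that needs to be visited. First I would recall that $G'$ is the configuration graph of a one-counter process: the stack alphabet is essentially unary, and the control states are those of $G$ (plus $\bot$); the counter value is the energy level. A reachability game on such a system can be solved by the saturation-style/attractor-style algorithm of \cite{Walukiewicz01}, and the reformulation in \cite{FridmanZimmermann12} produces a \emph{finite-memory} winning strategy whose memory structure is built from the "stair" decomposition of plays — that is, a play is decomposed at the positions where the stack reaches a new minimal height that is never again undercut, and between two consecutive such "milestones" the play is a bracketed excursion that pushes and pops symbols but never goes below the milestone height. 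The key quantitative fact I would invoke is that the relevant finite game (summarizing one such excursion, together with the control state and the "mode" of the strategy) has size singly exponential in $\size{S}$, $\size{E}\cdot W$ and $\size{S}\cdot(\lceil t\rceil+1)$, matching the exponent in $M'$.

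The main steps, in order, are as follows. (1) Encode $G'$ as a one-counter pushdown game and phrase reachability of $\Gamma'\subseteq\Gamma^{\le t}$ in the pushdown formalism — note $\Gamma'$ is a set of configurations with bounded counter value (at most $\lceil t\rceil$), so membership in $\Gamma'$ is decided by control state together with the bottom $O(\lceil t\rceil)$ counter cells, which is the source of the $\size{S}\cdot(\lceil t\rceil+1)$ term. (2) Apply the algorithm of \cite{Walukiewicz01} in the form of \cite{FridmanZimmermann12} to obtain, whenever $\pO$ wins from $\gamma=(s,c)$ with $c\le M$, a winning strategy $\tau$ that is finite-memory, the memory being the "prediction" of which control states / targets are reachable after popping back to the current milestone height; this memory set has size $2^{O(\size{S}+\size{E}\cdot W)}$ (it ranges over subsets of a "summary" alphabet of that size). (3) Use the stair decomposition: between two consecutive milestones $\tau$ enforces a bounded-length excursion, because within a single excursion the effective game is played on a finite arena — the product of $G$'s states, the bounded set of counter offsets that matter before the next milestone, and the finite memory — of size $N = 2^{O(M+\size{S}+\size{E}\cdot W+\size{S}\cdot(\lceil t\rceil+1))}$; a strategy in a finite reachability game can be taken to reach its target within $N$ steps, hence without raising the counter by more than $N$. (4) Since each excursion raises the energy by at most $N$ above its milestone and the milestones themselves are non-increasing in height (they go down toward $\Gamma'$, whose configurations have energy $\le\lceil t\rceil$), and the starting height is $c\le M$, the total energy never exceeds $M + N$, which is still $2^{O(M+\size{S}+\size{E}\cdot W+\size{S}\cdot(\lceil t\rceil+1))}$. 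Setting $M'$ to this value finishes the argument.

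The step I expect to be the main obstacle is step (3): pinning down precisely why the excursion between consecutive milestones can be bounded, i.e.\ why the strategy produced by \cite{Walukiewicz01,FridmanZimmermann12} can be assumed to keep each "bracketed" sub-play of bounded length and bounded height rather than merely of finite height. The subtlety is that, a priori, a winning strategy in a pushdown reachability game may push arbitrarily high before popping back; one has to argue that on the current milestone segment the relevant game state lives in a finite set (states of $G$ $\times$ bounded counter window $\times$ finite strategy memory), so that one can shortcut any pumping and bound the excursion by the size of that finite arena. Handling the interaction between the counter offset that "matters" (how close we still are to hitting $\Gamma'$ from above) and the abstract pushdown memory is where the bookkeeping is delicate; everything else is routine once the finite-arena summary is in place. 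I would also remark that it is essential that $\Gamma'\subseteq\Gamma^{\le t}$: without the bound $t$ on the target configurations' energy, the milestones could not be assumed to descend, and no finite summary would exist.
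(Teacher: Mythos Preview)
Your proposal invokes the right machinery (Walukiewicz via \cite{FridmanZimmermann12}) and correctly identifies where each term in the exponent should come from, but the route you take through ``milestones'' and bounded ``excursions'' is not the argument in the paper, and step~(3) as written has a real gap.

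The paper does not decompose a play along new stack-minima. Instead it builds a concrete pushdown parity game~$\mathcal P$ in three phases: a deterministic phase producing the start configuration $(s,c)$ (this is where the $M$ states come from), a simulation phase for~$G'$ (contributing $\mathcal O(|S|+|E|\cdot W)$ states, because multi-weight edges are unrolled into single push/pop steps), and a checking phase in which $\pO$ may stop and have the stack height compared against~$\lceil t\rceil$ to decide membership in~$\Gamma'$ (contributing $|S|\cdot(\lceil t\rceil{+}1)+2$ states). Walukiewicz's reduction turns~$\mathcal P$ into a finite parity game~$\mathcal G$ of size $2^{\mathcal O(|\mathcal P|)}$ with colors $\{0,1\}$ (plus an auxiliary odd color). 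The single quantitative fact used is the one stated in \cite{FridmanZimmermann12}: for any play prefix of~$\sigma_{\mathcal P}$ that has not yet seen color~$0$, the \emph{current stack height} equals the number of color-$1$ vertices visited along the corresponding prefix in~$\mathcal G$. If the stack ever exceeds $|\mathcal G|$, a color-$1$ vertex of~$\mathcal G$ repeats before any~$0$; since $\sigma_{\mathcal G}$ is memoryless, this yields an infinite $\sigma_{\mathcal G}$-play with maximal color~$1$, contradicting that $\sigma_{\mathcal G}$ wins. So the stack height is bounded by $|\mathcal G|=M'$ outright --- no excursion analysis needed.

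Your step~(3) is where your version breaks down. You assert that between consecutive milestones the ``effective game'' lives in a finite arena formed in part by a ``bounded set of counter offsets that matter'', but you never justify why those offsets are bounded; indeed this is exactly what you are trying to prove. Within an excursion the counter can a~priori climb arbitrarily high, and the Walukiewicz memory is itself a stack of predictions (one per level), so the product state$\times$offset$\times$memory is not finite in any obvious sense. The paper's argument avoids this circularity entirely by using the stair-score identity, which \emph{directly} equates stack height with the length of a path in the finite game~$\mathcal G$ and hence bounds it by~$|\mathcal G|$ via pigeonhole. If you want to salvage your route, the missing idea is precisely this identity; once you have it, the milestone decomposition becomes unnecessary.
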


\begin{proof}
\newcommand{\pdg}{\mathcal{P}}
\newcommand{\fpg}{\mathcal{G}}

We rely on Walukiewicz's~\cite{Walukiewicz01} solution of parity games on configuration graphs of pushdown machines via a simulation by parity games on finite graphs as presented in \cite{FridmanZimmermann12}. First, we construct such a pushdown parity game~$\pdg$ that is won by $\pO$ if, and only if, there is a strategy for $\pO$ to reach $\Gamma'$ from $\gamma$ in $G'$. Then, we apply Walukiewicz's  reduction, which yields a winning strategy~$\sigma_\pdg$ for $\pdg$ that is induced by a memoryless winning strategy~$\sigma_\fpg$ for  the parity game $\fpg$ constructed in the reduction. Finally, we show that this strategy can be turned into a strategy for~$G'$ that ensures reaching~$\Gamma'$ without exceeding energy level~$M'$. 

When interpreting $G'$ as a pushdown game, the energy level is stored in unary on the stack, and the stack alphabet has only two symbols, one for the counter, and one (denoted by $\bot$) for the bottom of the stack. Our weights here are encoded in binary, which implies that an exponential number of stack symbols may be pushed onto or may be popped of the stack during a single transition. On the other hand, Walukiewicz's construction considers pushdown machines where at most one symbol may be pushed or popped during a single transition. Obviously, pushing and popping multiple symbols can be simulated in the latter model by adding additional states, at the price of an exponential blow-up. 

We begin with the definition of $\pdg$, which runs in three phases. In the first phase, the configuration~$\gamma$ is produced. This phase is deterministic, i.e., the players have no choices in this phase. Then, in the second phase, a play of $G'$ starting in $\gamma$ is simulated. At any point of this simulation, $\pO$ can stop the simulation and start a third phase. Say, he stops the simulation of $G'$ in configuration~$\gamma'$. The third phase is designed to check whether $\gamma'$ is in $\Gamma'$. If this is the case, then the configuration~$(q_f, \bot)$ is reached, where $q_f$ is a fresh state. If not, then a configuration~$(q_r, \bot)$ is reached, where $q_r$ is again a fresh state. This is implemented by deterministically determining the stack height of the configuration~$\gamma'$, which can be stopped as soon as it turns out that the stack height is greater than $t$, as this implies $\gamma' \notin \Gamma'$. If $\pO$ does not stop the simulation, the configuration~$(q_f, \bot)$ is never reached.

Thus, the number of states of the pushdown system underlying $\pdg$ can be bounded by $\mathcal{O}(M + \size{S} + \size{E} \cdot  W + \size{S} \cdot  (\lceil t\rceil +1))$, $M+1$ states for the first phase, $\mathcal{O}(\size{S} + \size{E} \cdot  W)$ for the second one, and $\size{S} \cdot  (\lceil t\rceil +1) +2$ states for the third phase.

By construction, the premise of the statement we prove, i.e., there is a strategy for $\pO$  to reach $\Gamma'$ from $\gamma$ in $G'$, is equivalent to $\pO$ having a strategy for $\pdg$ to reach the configuration~$(q_f, \bot) $ from the initial configuration, which is a reachability objective. We turn this into a parity objective as usual: All states are colored by $1$ save for a fresh one with color~$0$ that is only reachable from the configuration~$(q_f, \bot)$. This fresh state is equipped with a self-loop, so that the parity objective is satisfied if, and only if, the fresh state is reached via $(q_f, \bot)$. 

Using the adapted construction of Walukiewicz as presented in \cite{FridmanZimmermann12}, we obtain a parity game~$\fpg$ on a finite graph of size $2^{\mathcal{O}(\size{\pdg})}$ with a designated initial state $s_0$, and such that $0$ is the only even color in $\fpg$ (the larger harmless color needed in \cite{FridmanZimmermann12} for auxiliary vertices can be taken as $3$). The crucial property of $\fpg$ is that $\pO$ has a winning strategy in $\fpg$ from $s_0$ if, and only if, $\pO$ has a winning strategy in $\pdg$ from its initial configuration (that we denote from now on by $(q_0,\bot)$). Even more so, a memoryless winning strategy~$\sigma_\fpg$ for $\pO$ in $\fpg$ from $s_0$ can be turned into a winning strategy~$\sigma_\pdg$ for $\pO$ in $\pdg$ from $(q_0,\bot)$ so that: for every play prefix~$\rho \in \outs((q_0,\bot), \sigma_{\pdg})$ which has not reached yet a configuration of color~$0$, there is a play prefix~$\rho' \in \outs(s_0,\sigma_{\fpg})$ which does not contain any vertex of color~$0$, such that the stack height of $\rho$ corresponds to the number of vertices of color~$1$ occurring in $\rho'$.\footnote{The result proven in \cite{FridmanZimmermann12} is actually more general. The statement here follows from the fact that the stack height of $\rho$ is equal to the stair-score of color~$1$ in the setting we consider here.}

Now, assume such a play prefix $\rho$ in $\pdg$ reaches stack height greater than $\size{\fpg}$. Then, the corresponding play prefix~$\rho'$ in $\fpg$ visits at least one vertex of color~$1$ twice without an occurrence of color~$0$ in between. From such a play prefix, since $\sigma_\fpg$ is memoryless, one can construct an infinite outcome of $\sigma_\fpg$ from $s_0$, which visits vertices of color~$1$ infinitely often, but no vertex of color~$0$. Such a play is losing for $\pO$, which contradicts the fact that $\sigma_\fpg$ is winning from $s_0$. Hence, such a play prefix~$\rho$ does not exist, i.e., the stack height of plays that start in the initial configuration, that are consistent with $\sigma_\pdg$, and have not yet reached $(q_f, \bot)$ is bounded by $\size{\fpg}$. 

Finally, the strategy~$\sigma_\pdg$ can now easily be transformed into a strategy for $G'$ that ensures 
  reaching $\Gamma'$ from $\gamma$ without exceeding energy level~$M' = \size{\fpg} = 2^{\mathcal{O}(M + \size{S} + \size{E} \cdot  W + \size{S} \cdot  (\lceil t\rceil +1))}$.\qed
\end{proof}

\section{A Doubly-exponential Time Algorithm}
\label{sec-algo}
Let $\rho = (e_i)_{1\leq i\leq n}$ be a prefix in~$G'$ such that its configurations are denoted $(s_0, c_0), \ldots{}, (s_n, c_n)$. 
Observe that by construction of the expanded game, $\MP(\rho) = \frac{1}{n}\sum_{i=1}^n c_n$. 
In particular, note that the initial configuration $(s_0, c_0)$ does not appear when computing its mean-payoff.

Let
$\widetilde\Gamma \subseteq \Gamma$ be a
set of configurations of~$G'$ and $\rho$ a play prefix. We~define $\density(\widetilde\Gamma,\rho)$~by:
\[
\density(\widetilde\Gamma,\rho) = \frac{|\{1 \le i \le \length(\rho) \mid \hat{\rho}_i
  \in \widetilde\Gamma\}|}{\length(\rho)},
\] 
which denotes the proportion (or~\emph{density}) of configurations belonging
to $\widetilde\Gamma$ along~$\rho$. 
Observe that the initial configuration $\hat{\rho}_0$ is not
taken into account: this is because $\density(\widetilde\Gamma,\rho)$
will be strongly linked to the mean-payoff, as we now explain.

\subsection{Analyzing Winning Plays}
\label{ssec-winplays}

In this section, we~analyze winning plays in~$G'$, and prove that they
must contain a cycle that is ``short enough'' and has mean-payoff less
than or equal to~$t$.
To achieve this, we first establish a lower bound on the density of configurations below the threshold along a winning play.

\begin{restatable}{lemma}{lemmadensity}
  \label{lemma-density}
  Let~$\pi$ be a play in~$G'$ from $(\initState,0)$ with $\MPsup(\pi)\leq t$. Then, there
  exists $n \in \bbN$ such that for every $n' \ge n$,
  \[
  \density(\Gamma^{\le t},\pi_{\leq n'}) \ge \frac{\tildet}{2(t +1)}.
  \]
\end{restatable}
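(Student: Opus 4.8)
The plan is to exploit the identity $\MP(\pi_{\leq n}) = \frac{1}{n}\sum_{i=1}^{n} c_i$ noted just above, which says the mean-payoff of a prefix is exactly the average of the energy levels (= counter values) of its configurations. Since $\MPsup(\pi) \leq t$, for all large enough $n$ we have $\frac{1}{n}\sum_{i=1}^{n} c_i \leq t + \varepsilon$ for any $\varepsilon$ we like; taking $\varepsilon$ small enough (say $\varepsilon = \tildet/2$ or anything strictly below $1$ works, since the $c_i$ are integers and $\tildet \leq 1$) gives a fixed $n$ beyond which $\frac{1}{n'}\sum_{i=1}^{n'} c_i \leq t + \tildet/2 < \lceil t\rceil + 1$ for all $n' \geq n$.

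First I would split the sum $\sum_{i=1}^{n'} c_i$ into the contribution of configurations in $\Gamma^{\leq t}$ and those outside it. Configurations outside $\Gamma^{\leq t}$ have $c_i \geq \lceil t\rceil + 1$ — here I use that the counter $c_i$ is always a nonnegative integer, so $c_i > t$ forces $c_i \geq \lceil t \rceil + 1$ (and I should note the $\bot$ sink also carries weight $\lceil t\rceil + 1$, so the bound holds uniformly; actually a play with $\MPsup \leq t$ never visits $\bot$, but it costs nothing to observe this). Configurations in $\Gamma^{\leq t}$ have $c_i \geq 0$. Writing $d = \density(\Gamma^{\leq t}, \pi_{\leq n'})$, we get the lower bound
\[
\frac{1}{n'}\sum_{i=1}^{n'} c_i \;\geq\; (1-d)\bigl(\lceil t\rceil + 1\bigr) \;=\; (1-d)\bigl(t + \tildet\bigr).
\]
Combining with the upper bound $\frac{1}{n'}\sum_{i=1}^{n'} c_i \leq t + \tildet/2$ yields $(1-d)(t+\tildet) \leq t + \tildet/2$, i.e. $d \geq 1 - \frac{t + \tildet/2}{t+\tildet} = \frac{\tildet/2}{t+\tildet} = \frac{\tildet}{2(t+\tildet)} \geq \frac{\tildet}{2(t+1)}$, where the last inequality uses $\tildet \leq 1$ in the denominator. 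This is exactly the claimed bound.

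I do not expect a serious obstacle here; the only subtleties are bookkeeping. The first is getting the right $\varepsilon$ in the limsup argument: I want the upper bound on the average of the $c_i$ to be strictly less than $\lceil t\rceil + 1$ so that it sits strictly below the value any ``bad'' configuration contributes, and $t + \tildet/2$ does this since $t + \tildet/2 < t + \tildet = \lceil t\rceil + 1$. The second is making sure I use the convention that the initial configuration $(\initState,0)$ is excluded from the mean-payoff (the sum runs $i=1$ to $n'$, matching the definition of $\density$ which also excludes index $0$), so the two quantities are measured over exactly the same index set — this is precisely why $\density$ was defined to omit $\hat\rho_0$. The third, minor, point is that if $t \geq 0$ (which holds since $t_1,t_2 \in \bbN$) then $t + \tildet = \lceil t \rceil + 1 \geq 1 > 0$, so no division-by-zero issue arises and all inequalities are between well-defined nonnegative rationals.
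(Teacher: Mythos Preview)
Your argument is correct and essentially identical to the paper's: pick $\varepsilon=\tildet/2$, split the indices according to whether $c_i\le t$, lower-bound the high ones by $\Int t+1$, and solve for the density. One small slip: for non-integer $t$, an integer $c_i>t$ only gives $c_i\ge \lfloor t\rfloor+1=\Int t+1=t+\tildet$, not $\lceil t\rceil+1$ (and correspondingly $\lceil t\rceil+1\neq t+\tildet$); since you immediately switch to working with $t+\tildet$, the remainder of the computation and the final bound are unaffected.
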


\begin{proof}
  Write $\pi = (e_i)_{1\leq i}$, and write $\hat{\pi}_j=(s_j,c_j)$ for all $j\geq 0$. That $\MPsup(\pi)\leq t$ means that for any $\varepsilon>0$, there exists $n \in
  \bbN$ such that for every $n' \ge n$,
  \[
  \frac{\sum_{i=1}^{n'} c_i}{n'} \le t +\varepsilon.
  \]
  Pick $0<\varepsilon\leq \frac{\tildet}{2}$, and $n\in\bbN$ for which the above property holds.
  For any $n'\geq n$, let~$I$ be the subset of positions~$i$ in
  $[1,n']$ such that $c_i \le t$, and $J$ its complement: $I \uplus J
  = [1,n']$. We can then compute:
  \begin{eqnarray*}
    t+\varepsilon \geq \frac{\sum_{i=1}^{n'} c_i}{n'} &= &\frac{\sum_{i \in I} c_i +
      \sum_{i \in J} c_i}{ n'} \\
    & \ge & 0 + \frac{\sum_{i \in J} (\Int{t} + 1)}{n'} = 
    \frac{|J| \cdot (\Int{t} +1)}{n'}
  \end{eqnarray*}
  since any $c_i \in \bbN$ that is strictly greater than~$t$ is larger
  than or equal to $\Int{t} +1$.  Hence $t +\varepsilon \ge \frac{|J|
    \cdot(\Int{t} +1)}{n'}$ or equivalently
  \begin{equation}\label{eq:boundOnJ}
    \frac{t+\varepsilon}{\Int{t}+1}\ge \frac{|J|}{n'}.
  \end{equation}
  Now:
  \begin{xalignat*}1
    \density(\Gamma^{\le t}, \pi_{\leq n'}) &=  \frac{|I|}{n'} =
    1-\frac{|J|}{n'} \\
    & \ge  1-\frac{t+\varepsilon}{\Int{t}+1} \tag{due to equation \ref{eq:boundOnJ}}\\
    &\ge  \frac{\Int{t} +1-t-\varepsilon}{\Int{t}+1} = \frac{\tildet-\varepsilon}{\Int{t}+1}\\
    &\ge  \frac{\tildet}{2(\Int{t} +1)} \tag{since
    $\varepsilon \le \frac{\tildet}2$} \\
    &\ge  \frac{\tildet}{2(t +1)}. 
  \end{xalignat*}
  This concludes the proof of the lemma. \qed
\end{proof}

We now deduce that there is at least one configuration below the threshold that appears with lower-bounded density.

\begin{restatable}{lemma}{corodensity}
\label{coro:density}
Let~$\pi$ be a play in~$G'$ from $(\initState,0)$ with
$\MPsup(\pi)\leq t$.  There exists $\gamma \in \Gamma^{\le t}$ such
that for any~$n\in\bbN$, there exists infinitely many
positions~$n'\geq n$ for which
  \[
  \density(\{\gamma\},\pi_{[n,n']}) \ge \frac{\tildet}{4(t +1)^2 |S|}.
  \]
\end{restatable}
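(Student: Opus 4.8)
The plan is to derive this corollary directly from Lemma~\ref{lemma-density} by a straightforward pigeonhole argument, keeping careful track of the fact that $\Gamma^{\le t}$ is a finite set. First I would recall that $\Gamma^{\le t} = \{(s,c) \in \Gamma \mid c \le t\}$ has at most $|S| \cdot (\Int{t}+1)$ elements, since the counter value ranges over $\{0,1,\dots,\Int{t}\}$ and the control state over $S$. Lemma~\ref{lemma-density} gives us an $n_0 \in \bbN$ such that for every $n' \ge n_0$, the density $\density(\Gamma^{\le t},\pi_{\le n'}) \ge \frac{\tildet}{2(t+1)}$; by an averaging argument over suffixes, the same lower bound holds (asymptotically, or after enlarging $n_0$) for $\density(\Gamma^{\le t},\pi_{[n,n']})$ for all $n_0 \le n \le n'$ — this is because a density bound on all prefixes transfers to a density bound on all sufficiently-long infixes, up to the boundary term $n/n'$ which vanishes.

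The core step is then pigeonhole: since $\pi_{[n,n']}$ has at least a $\frac{\tildet}{2(t+1)}$ fraction of its $\hat\pi_i$ inside $\Gamma^{\le t}$, and $\Gamma^{\le t}$ has at most $|S|(\Int{t}+1) \le |S|(t+1)$ distinct configurations, at least one configuration $\gamma \in \Gamma^{\le t}$ must individually appear with density at least $\frac{\tildet}{2(t+1)} \cdot \frac{1}{|S|(t+1)} = \frac{\tildet}{2(t+1)^2|S|}$ within $\pi_{[n,n']}$. The target bound in the statement is $\frac{\tildet}{4(t+1)^2|S|}$, which is weaker, giving slack to absorb the boundary effects from passing between prefixes and infixes. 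Since there are only finitely many configurations in $\Gamma^{\le t}$ but infinitely many choices of $n'$, by an infinitary pigeonhole (one $\gamma$ works for infinitely many $n'$) one can fix a single $\gamma$ that serves simultaneously for all starting points $n$ and for infinitely many $n' \ge n$.

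The one subtle point — and the main obstacle — is the order of quantifiers: the statement asserts the existence of a single $\gamma$ that works for \emph{every} $n \in \bbN$ with infinitely many witnesses $n' \ge n$, not a $\gamma$ depending on $n$. To handle this, I would first apply the finite pigeonhole for each fixed large $n$ to get a configuration $\gamma_n$ with the density bound for infinitely many $n'$, then observe that $(\gamma_n)_{n}$ takes values in the finite set $\Gamma^{\le t}$, so some $\gamma$ occurs as $\gamma_n$ for infinitely many $n$; since the "infinitely many $n'$" property for a given $n$ is inherited by all smaller $n$ (a density lower bound on $\pi_{[n,n']}$ with $n$ large implies one on $\pi_{[n'',n']}$ for $n'' \le n$ up to a vanishing correction — which is exactly what the factor-of-two slack buys us), this $\gamma$ works for all $n$. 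I would carry out the argument in the order: (i)~bound $|\Gamma^{\le t}|$; (ii)~upgrade Lemma~\ref{lemma-density} from prefixes to infixes $\pi_{[n,n']}$ with the same constant, paying attention to the $n/n'$ boundary term; (iii)~finite pigeonhole to extract $\gamma$ for each $n$; (iv)~infinitary pigeonhole plus monotonicity in $n$ to fix one $\gamma$ uniformly; (v)~collect constants and verify $\frac{\tildet}{2(t+1)^2|S|} \ge \frac{\tildet}{4(t+1)^2|S|}$ with room to spare for the error terms.
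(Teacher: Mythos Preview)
Your proposal is correct and follows essentially the same route as the paper: bound $|\Gamma^{\le t}| \le |S|(\Int{t}+1)$, apply pigeonhole over this finite set to Lemma~\ref{lemma-density}, and convert the prefix density bound to an infix bound via the $n/n'$ correction term, absorbing it with the factor-of-two slack. The paper carries out exactly these steps in the same order, though it is somewhat more casual about the quantifier issue you flag as the ``one subtle point'': it first fixes the specific $n$ coming from Lemma~\ref{lemma-density}, obtains the infix bound for that $n$ and infinitely many $n'$, and then argues in one sentence that the statement propagates to all~$n$; your step~(iv) is a more explicit version of that same manoeuvre.
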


\begin{proof}
Since $\size{\Gamma^{\le t}} \leq (t+1)\cdot |S|$,\footnote{This is thanks to the special structure of our infinite-state game.} we easily get from
Lemma~\ref{lemma-density} that there is some~$n\in\bbN$ and some configuration~$\gamma \in \Gamma^{\le t}$
such that for any~$n'\geq n$, $\density(\{\gamma\},\pi_{\leq n'}) \ge \frac{\tildet}{2(t +1)\size{\Gamma^{\le t}}} = \frac{\tildet}{2(t +1)^2
  |S|}$.

For this $n$ and such a configuration~$\gamma$, we have that for any $n' > n$:
\begin{xalignat*}1
  \density(\{\gamma\}, \pi_{\leq n'}) & = \frac{
    \size{\{1\leq i\leq n \mid (s_i,c_i)=\gamma\}} +
    \size{\{n+1\leq i\leq n' \mid (s_i,c_i)=\gamma\}}
  }{n'} \\
  & \leq \frac{n}{n'} + \frac{\size{\{n+1\leq i\leq n' \mid
      (s_i,c_i)=\gamma\}}}{n'}\\
  & \leq \frac{n}{n'} + \frac{\size{\{n+1\leq i\leq n' \mid
      (s_i,c_i)=\gamma\}}}{n'-n}\\
  &= \frac{n}{n'} + \density(\{\gamma\}, \pi_{[n,n']})
\end{xalignat*}
Thus $\density(\{\gamma\}, \pi_{[n,n']}) \geq
\frac{\tildet}{2(t+1)^2\size S} - \frac{n}{n'}$ for infinitely many~$n'$. For those $n'$ larger than $\frac{4n(t+1)^2\size
  S}{\tildet}$ (still infinitely many), it~holds
$\density(\{\gamma\}, \pi_{[n,n']}) \geq
\frac{\tildet}{4(t+1)^2\size S}$. Observe that this statement uses the $n$ from Lemma~\ref{lemma-density}. Nonetheless, it follows that it holds for any $n \in \bbN$ otherwise it would not hold either for the one given by Lemma~\ref{lemma-density} (as it needs to hold for infinitely many $n' > n$).
\qed
\end{proof}

We make a simple observation regarding cycles that have a mean-payoff exceeding the threshold.

\begin{restatable}{lemma}{lemmanotgood}
  \label{lemma:not_good}
  Consider a cycle $\rho = (e_i)_{1\leq i\leq n}$ in~$G'$, and write
  $\hat{\rho}_j = (s_j, c_j)$  for $0\leq j\leq n$. Then:
  \[
  \MP(\rho)> t =\frac{t_1}{t_2} \quad\Rightarrow\quad
  \sum_{i=1}^{n} c_i \geq \frac{t_1  n +1}{t_2}.
  \]
\end{restatable}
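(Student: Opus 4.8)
The statement is a purely arithmetic fact about cycles, so the plan is to unwind the definitions and argue about integers.

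\medskip

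\noindent\textbf{Plan.} First I would recall that, by the observation made at the start of Section~\ref{sec-algo}, for a cycle $\rho = (e_i)_{1\leq i\leq n}$ with configurations $(s_0,c_0),\dots,(s_n,c_n)$ we have $\MP(\rho) = \frac1n\sum_{i=1}^n c_i$, where crucially the initial configuration $c_0$ is omitted and $c_n$ is counted instead (and since $\rho$ is a cycle, $s_0 = s_n$, though this symmetry is not what matters here). So the hypothesis $\MP(\rho) > t = t_1/t_2$ translates directly into the strict inequality
\[
\frac{1}{n}\sum_{i=1}^n c_i \;>\; \frac{t_1}{t_2},
\qquad\text{i.e.}\qquad
t_2 \sum_{i=1}^n c_i \;>\; t_1\, n.
\]

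\medskip

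\noindent\textbf{Key step.} Now the point is an integrality argument: $t_2$, $t_1$, $n$ and each $c_i$ are all natural numbers, so both sides of $t_2\sum_{i=1}^n c_i > t_1 n$ are integers. A strict inequality between integers means the left-hand side exceeds the right-hand side by at least $1$, hence $t_2\sum_{i=1}^n c_i \geq t_1 n + 1$. Dividing by $t_2 > 0$ yields $\sum_{i=1}^n c_i \geq \frac{t_1 n + 1}{t_2}$, which is exactly the claimed conclusion.

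\medskip

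\noindent\textbf{Main obstacle.} There is essentially no obstacle: the only things to be careful about are that $t_2 \neq 0$ (so that the division is legitimate, and this holds since $t = t_1/t_2$ is a well-formed fraction of naturals), that all the $c_i$ are genuinely integers (which they are, being energy levels along a play of $G'$), and that the strictness of the hypothesis is really used — it is what lets us gain the ``$+1$''. The reason this lemma is stated at all is presumably the contrapositive flavour used later: if the $\sum c_i$ is \emph{not} at least $(t_1 n + 1)/t_2$, then $\MP(\rho) \leq t$, i.e., a cycle whose total accumulated energy is small enough is a ``good cycle''.
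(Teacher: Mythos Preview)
Your argument is correct and is essentially identical to the paper's own proof: rewrite $\MP(\rho)>t_1/t_2$ as $t_2\sum_i c_i > t_1 n$, use integrality to gain the $+1$, and divide by $t_2$. Nothing is missing.
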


\begin{proof}
  Indeed, $\MP(\rho)>\frac{t_1}{t_2}$ implies that $
  t_2\cdot\sum_{i=1}^n c_i > t_1\cdot n$. Since both sides are
  integers, it~must be $t_2\cdot\sum_{i=1}^n c_i \geq t_1\cdot n +1 $,
  which entails the result.  \qed
\end{proof}

Finally, we define  the notion of \emph{good cycle}, i.e., a cycle~$\rho= (e_i)_{1\leq i\leq n}$ such that $c_0\leq t$ and $\MP(\rho)
\le t$. That is, both the initial configuration and the mean-payoff of the cycle are below the threshold. Using the previous results, we prove that any winning play must contain a good cycle.

\begin{restatable}{lemma}{lemmagoodcycle}
  \label{lemma:good_cycle}
  Let~$\pi$ be a play in~$G'$ from $(\initState,0)$ with $\MPsup(\pi)\leq t$. Then there exist $1 \le i \le j$ such that
  $\pi_{[i,j]}$ is a good cycle.
\end{restatable}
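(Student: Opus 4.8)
The plan is to combine Lemma~\ref{coro:density}, Lemma~\ref{lemma-density}, and Lemma~\ref{lemma:not_good} to extract a good cycle from~$\pi$. First I would invoke Lemma~\ref{coro:density} to obtain a configuration $\gamma \in \Gamma^{\le t}$ whose density along suffixes $\pi_{[n,n']}$ is bounded below by a fixed positive constant $d = \tildet/(4(t+1)^2|S|)$ for arbitrarily large $n'$ (and for any starting point $n$). In particular $\gamma$ occurs infinitely often along~$\pi$, so we can find positions $1 \le i_1 < i_2 < \dots$ with $\hat{\pi}_{i_k} = \gamma$ for all $k$. Each segment $\pi_{[i_k+1, i_\ell]}$ with $k < \ell$ is then a cycle whose initial configuration is~$\gamma$, hence satisfies $c_0 \le t$; what remains is to show that at least one such cycle also has mean-payoff at most~$t$.

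\smallskip

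For the mean-payoff condition I would argue by contradiction: suppose \emph{every} cycle of the form $\pi_{[i_k+1,i_\ell]}$ (for consecutive or non-consecutive occurrences of $\gamma$) has mean-payoff strictly greater than~$t$. Consider a long prefix $\pi_{[i_{k_0}+1, i_{k_1}]}$ spanning many occurrences of~$\gamma$, and decompose it into the $m$ elementary blocks between consecutive occurrences of~$\gamma$. By Lemma~\ref{lemma:not_good}, each such block $\pi_{[i_k+1, i_{k+1}]}$ of length $\ell_k$ contributes a sum of counter values at least $(t_1 \ell_k + 1)/t_2$. Summing over the $m$ blocks, the total sum of counter values over $\pi_{[i_{k_0}+1,i_{k_1}]}$ is at least $(t_1 L + m)/t_2$ where $L = \sum_k \ell_k$ is the total length; equivalently, the mean-payoff of this long segment is at least $t + \frac{m}{t_2 L}$. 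Now the density lower bound on~$\gamma$ provided by Lemma~\ref{coro:density} guarantees $m \ge d \cdot L$ (up to lower-order terms, for suitably chosen endpoints), so the mean-payoff of $\pi_{[i_{k_0}+1,i_{k_1}]}$ is at least $t + \frac{d}{t_2}$, a constant bounded away from~$t$. Since this can be made to hold for arbitrarily large windows that are suffixes of~$\pi$ up to a bounded prefix, it forces $\MPsup(\pi) \ge t + \frac{d}{t_2} > t$, contradicting the hypothesis $\MPsup(\pi) \le t$. Hence some block — indeed some cycle $\pi_{[i,j]}$ through $\gamma$ — must have mean-payoff at most~$t$, and since its initial configuration is $\gamma \in \Gamma^{\le t}$ it is a good cycle.

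\smallskip

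The main obstacle I anticipate is the bookkeeping needed to turn the asymptotic density statement of Lemma~\ref{coro:density} (which only controls $\density(\{\gamma\},\pi_{[n,n']})$ for \emph{infinitely many} $n'$, with an additive $n/n'$ slack) into a clean lower bound $m \ge d \cdot L$ on the number of $\gamma$-occurrences inside a concrete window, uniformly enough that it survives the $\limsup$ in the definition of $\MPsup$. One must be careful that the "bad" scenario — all $\gamma$-cycles exceeding $t$ — is incompatible not just with a single long window but with the $\limsup$ along the tail of~$\pi$; this is why I would fix a late starting point (using the "for any $n$" clause of Lemma~\ref{coro:density}) and push the window's right endpoint to infinity along the infinitely-many good positions, so that the accumulated $+d/t_2$ surplus in the running mean-payoff does not get washed out. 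Everything else is the elementary integer arithmetic already packaged in Lemma~\ref{lemma:not_good}.
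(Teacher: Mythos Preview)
Your proposal is correct and follows essentially the same route as the paper: pick the frequent low-energy configuration~$\gamma$ from Lemma~\ref{coro:density}, decompose the tail of~$\pi$ into elementary cycles at~$\gamma$, and derive a contradiction from Lemma~\ref{lemma:not_good} by showing that if every such elementary cycle had mean-payoff~$>t$, then $\MPsup(\pi)\ge t+\frac{1}{t_2}\cdot\frac{\tildet}{4(t+1)^2|S|}>t$. The bookkeeping obstacle you flag is exactly the one the paper handles, and it resolves it just as you suggest, by arranging the endpoints so that the density bound applies to the full concatenation of elementary cycles.
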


\begin{proof}
  Pick $\gamma = (s,c)$ a frequent low-energy configuration as given by
  Lemma~\ref{coro:density}. First notice that for $n$ and $n'$ two indices given by Lemma~\ref{coro:density},
  $\density(\{\gamma\},\pi_{[n, 
    n'-1]})\ge \density(\{\gamma\},\pi_{[n,n']})$ if
  $\hat{\pi}_{n'}\not=\gamma$. Hence we may reinforce
  Lemma~\ref{coro:density} by stating that $\density(\{\gamma\},
  \pi_{[n,n']})\ge \frac {\tildet}{4(t+1)^2\size S}$ for infinitely
  many $n'$ such that $\hat{\pi}_{n'}=\gamma$. We can thus write $\pi = \rho \cdot \mathcal{C}_1 \cdot \mathcal{C}_2 \dots$ where
  $\last(\rho) = \gamma$ and for all~$i\geq 1$, $\first(\mathcal{C}_i) = \last(\mathcal{C}_i) = \gamma$, and
  $\density(\{\gamma\}, \mathcal{C}_1\cdots
  \mathcal{C}_i) \ge \frac{\tildet}{4(t+1)^2\size S}$. That is, we decompose $\pi$ according to positions where $\gamma$ appears and where the density lower bound holds.

  Each~cycle $\mathcal{C}_i$ may itself contain several occurrences
  of~$\gamma$. We~can further decompose it as $\mathcal{C}_i= \mathcal{D}_i^1\cdot
  \mathcal{D}_i^1\cdots \mathcal{D}_i^{\ell_i}$ where for all~$1\leq j\leq \ell_i$, $\first(\mathcal{D}_i^j)=\last(\mathcal{D}_i^j)=\gamma$ and
  $\gamma$ does not occur anywhere else in~$\mathcal{D}_i^j$.
  
  We claim that one of these cycles $\mathcal{D}_i^j$ must be good. Toward a contradiction, assume that none is good. Then:\footnote{For the sake of readability, we sometimes write the length of a prefix $\length(\rho)$ as $\size{\rho}$ in complicated expressions.}
  \begin{xalignat*}1
    \MPsup(\pi) &= \MPsup(\pi_{> \size{\rho}}) \tag{by prefix-independence of $\MPsup$}
    \\&\ge \limsup_{k\to\infty}
    \frac{\sum_{i=1}^k\sum_{j=1}^{\ell_i} \frac{1}{t_2} (t_1 
      |\mathcal{D}_i^j|+1)}{\sum_{i=1}^k\sum_{j=1}^{\ell_i}  |\mathcal{D}_i^j|}
       \tag{by Lemma~\ref{lemma:not_good}}\\
    &=  t + \frac{1}{t_2}\limsup_{k\to\infty}
       \frac{\sum_{i=1}^k\sum_{j=1}^{\ell_i}
         1}{\sum_{i=1}^k\sum_{j=1}^{\ell_i} 
      |\mathcal{D}_i^j|} \\
    & \ge  t+\frac{1}{t_2}\limsup_{k\to\infty} \frac{\tilde{t}}{4 (t +1)^2|S|}
    \tag{by Lemma~\ref{coro:density}} \\
    & =  t + \frac{\tilde{t}}{4 t_2 (t+1)^2|S|} >t
  \end{xalignat*}
  This contradicts the fact that $\pi$ is winning in~$G'$. Hence one of the
  cycles $\mathcal{D}_i^j$ must be good.  \qed
\end{proof}

Next we establish an upper-bound on the length of the shortest good cycle that can be observed along the path.
First we prove the following lemma.

\begin{restatable}{lemma}{lemmashort}
  \label{lemma:short}
  Let $\mathcal{C}$ be a reachable good cycle with no good strict sub-cycle. Then
  $\length(\mathcal{C}) \le 8 t_1 t_2 (t +1)^3|S|^2$.
\end{restatable}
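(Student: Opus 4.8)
The plan is to bound the length of a good cycle $\mathcal{C}$ with no good strict sub-cycle by exploiting two facts: (i) along $\mathcal{C}$, the energy level can never exceed roughly $t \cdot \length(\mathcal{C})$ since otherwise the single-cycle mean-payoff already exceeds $t$ contradicting that $\mathcal{C}$ is good; and (ii) every strict sub-cycle of $\mathcal{C}$ is \emph{not} good, so by Lemma~\ref{lemma:not_good} each such sub-cycle contributes a strict surplus of at least $\frac{1}{t_2}$ over the threshold in its energy-level sum. The key idea is that if $\mathcal{C}$ were too long, we could repeatedly split off non-good sub-cycles, each contributing a positive surplus, forcing the total energy-level sum of $\mathcal{C}$ to be so large that $\MP(\mathcal{C}) > t$, contradicting goodness.

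Concretely, I would first observe that since $\mathcal{C}$ is reachable and starts (and ends) at a configuration $\gamma = (s,c)$ with $c \le t$, and since $\MP(\mathcal{C}) \le t$, the energy levels $c_1, \dots, c_n$ along $\mathcal{C}$ satisfy $\sum_{i=1}^n c_i \le t n$; combined with the fact that each $c_i \ge 0$ and that between consecutive visits the counter changes by bounded amounts, one gets that no $c_i$ can be much larger than $t n$ (more precisely, if some $c_i > (t+1) n$, then, tracking how long the counter must stay high, one violates $\sum c_i \le tn$; I'd make this precise, likely via a pumping/averaging argument). So all configurations along $\mathcal{C}$ lie in a set of size at most $((t+1)n + 1)\cdot|S| = O(t \cdot n \cdot |S|)$ — but this alone is not enough, since that is already linear in $n$. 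Instead, the sharper route is: consider the configurations $(s_i,c_i)$; if two positions $i < j$ on $\mathcal{C}$ carry the same configuration, $\pi_{[i+1,j]}$ is a strict sub-cycle of $\mathcal{C}$, hence not good, hence by Lemma~\ref{lemma:not_good} its energy-level sum is $\ge \frac{t_1(j-i)+1}{t_2} = t(j-i) + \frac{1}{t_2}$. Summing a maximal disjoint collection of such repeated-configuration sub-cycles, and using that the ``core'' part of $\mathcal{C}$ with all-distinct configurations has length at most $|\Gamma^{\le (t+1)n}| = O(tn|S|)$... — wait, that's circular, so the real argument must instead bound $n$ directly.

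The clean argument I would carry out: decompose $\mathcal{C}$ greedily into simple cycles (cycles with no repeated configuration). There are $k$ of them, with lengths $n_1, \dots, n_k$ summing to $n$, and their base configurations all have energy level $\le (t+1)n$ (by the earlier bound). Each simple cycle is either good or not. Since $\mathcal{C}$ itself has no good strict sub-cycle, every simple cycle \emph{except possibly one} (or handled carefully: those strictly contained) is not good, so by Lemma~\ref{lemma:not_good} contributes energy-sum $\ge t n_\ell + \frac{1}{t_2}$, while the possibly-good one contributes $\ge 0 \ge t n_\ell$ minus a correction. But the base configurations differ between simple cycles only through the ``spine'' that threads them, so I'd instead sum over all the surpluses: $\sum_{i=1}^n c_i \ge t n + \frac{k-1}{t_2} - (\text{correction for re-basing each simple cycle, each of size} \le (t+1)n \cdot n_\ell)$. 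Since $\MP(\mathcal{C}) \le t$ forces $\sum c_i \le tn$, this yields $\frac{k-1}{t_2} \le \sum_\ell (t+1)n \cdot n_\ell \cdot(\text{something})$... This is getting tangled, which brings me to the honest statement of the obstacle.

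\textbf{The main obstacle} will be bookkeeping the ``re-basing'' correction cleanly: Lemma~\ref{lemma:not_good} is about a cycle's own energy-level sum (relative to energy $0$ at its base), but the sub-cycles of $\mathcal{C}$ sit at elevated base energy levels, and that elevation — which can be as large as $\Theta(tn)$ — inflates every $c_i$ inside the sub-cycle. So the naive surplus argument has a correction term proportional to (base height)$\times$(sub-cycle length), which competes with the $\frac{1}{t_2}$ gains. The resolution is to choose the sub-cycles to split off to be precisely the \emph{innermost} repetitions — i.e., take $i<j$ with $(s_i,c_i)=(s_j,c_j)$ and $j-i$ minimal, so $\pi_{[i+1,j]}$ is simple — remove it, and repeat; since we remove the \emph{configuration} (counter value included), the base of each removed cycle is its \emph{actual} energy level $c_i \le t n$, so Lemma~\ref{lemma:not_good} applies verbatim with no re-basing: $\sum_{\text{positions in that cycle}} c_\ell \ge t(j-i) + \frac{1}{t_2}$, directly in terms of the true $c_\ell$'s appearing in $\sum_{i=1}^n c_i$. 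After removing all $k-1$ non-good simple cycles we're left with one good simple cycle of length $\le |\Gamma^{\le (t+1)n}|$, but that simple cycle's configurations all have counter $\le (t+1)n$ so its length is at most $((t+1)n+1)|S|$ — still linear in $n$, so I must instead argue: the good simple cycle contributes $\ge 0$, hence $tn \ge \sum c_i \ge 0 + \sum_{\text{non-good simple cycles}} (t n_\ell + \frac{1}{t_2}) = t(n - n_{\mathrm{good}}) + \frac{k-1}{t_2}$, giving $t\, n_{\mathrm{good}} \ge \frac{k-1}{t_2}$, i.e. $k - 1 \le t\, t_2\, n_{\mathrm{good}} \le t\, t_2 \cdot ((t+1)n+1)|S|$. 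Combined with $n = n_{\mathrm{good}} + \sum_{\ell} n_\ell \le ((t+1)n+1)|S| + (k-1)\cdot n_{\max}$ where $n_{\max} \le ((t+1)n+1)|S|$... this still loops. The genuinely correct move, which I'd commit to, is: bound $n$ by noting each non-good simple cycle has length $\ge 1$ so there are at most $k-1 \le t t_2 n_{\mathrm{good}}$ of them, all of total length $n - n_{\mathrm{good}}$, and each has base-configuration counter $\le t n_{\mathrm{good}}$ (they all lie ``close'' to the good cycle on the spine — I'd need the energy span of $\mathcal{C}$ to be bounded in terms of $n_{\mathrm{good}}$, not $n$), whence $n_{\mathrm{good}} \le ((t+1) t n_{\mathrm{good}} + 1)|S| \cdot(\dots)$ — no. Given the target bound $8 t_1 t_2 (t+1)^3 |S|^2$, I expect the intended proof actually bounds the energy span over $\mathcal C$ by $O(t_1 t_2 (t+1)^2 |S|)$ first (using that a strictly-monotone-energy segment longer than $|\Gamma^{\le\cdot}|$ would embed a non-good sub-cycle and overshoot), then uses the pigeonhole on $|\Gamma^{\le \text{span}}| \le (\text{span}+1)|S|$ to conclude $\length(\mathcal C) \le (\text{span}+1)|S| = O(t_1 t_2(t+1)^3|S|^2)$, since a cycle with no repeated configuration has length at most the number of available configurations. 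I would structure the final write-up around \emph{first} establishing span $\le 4 t_1 t_2 (t+1)^2 |S|$ (or similar) via Lemma~\ref{lemma:not_good} applied to hypothetical long high-energy stretches, and \emph{then} a one-line pigeonhole for the length.
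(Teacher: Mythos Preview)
Your proposal does not arrive at a working argument, and the final plan you commit to --- bound the energy span of $\mathcal{C}$ and then apply a ``one-line pigeonhole for the length'' --- is genuinely broken. The pigeonhole step assumes that $\mathcal{C}$ has no repeated configuration, but that is false: the hypothesis only forbids \emph{good} strict sub-cycles, so $\mathcal{C}$ may revisit the same configuration arbitrarily often via \emph{non-good} sub-cycles. Bounded span therefore gives no bound on the length by itself. Your earlier attempts via simple-cycle decomposition do hit on the right surplus idea (each non-good sub-cycle contributes an extra $\frac{1}{t_2}$ via Lemma~\ref{lemma:not_good}), but you never obtain a lower bound on the \emph{number} of such sub-cycles that is proportional to $|\mathcal{C}|$, which is what is needed to close the argument.

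The missing ingredient is exactly Lemma~\ref{coro:density}: since $\mathcal{C}$ is good and reachable, the play $\rho\cdot\mathcal{C}^\omega$ is winning, so some configuration $\gamma \in \Gamma^{\le t}$ occurs in $\mathcal{C}$ with density at least $\frac{\tildet}{4(t+1)^2|S|}$. This immediately yields at least $\frac{\tildet}{4(t+1)^2|S|}\cdot|\mathcal{C}|$ sub-cycles of $\mathcal{C}$ starting and ending at $\gamma$; each is a strict sub-cycle with low-energy base, hence not good, hence by Lemma~\ref{lemma:not_good} contributes a surplus of $\frac{1}{t_2}$ over the threshold. The paper's proof then decomposes $\mathcal{C}$ into these $\gamma$-cycles plus a remainder; the crucial point (which your attempts never reach) is that the low-energy positions in the remainder number at most $2|S|(\lfloor t\rfloor+1)$, a quantity \emph{independent of $|\mathcal{C}|$}. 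Balancing the aggregate surplus $\frac{|\Omega|}{t_2} \ge \frac{\tildet\,|\mathcal{C}|}{4t_2(t+1)^2|S|}$ against this constant-size remainder under the constraint $\MP(\mathcal{C})\le t$ yields the bound directly. In short: the decomposition must be anchored at a single \emph{frequent low-energy} configuration supplied by the density lemma, not at arbitrary repeated configurations and not via an energy-span estimate.
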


\begin{proof}
  We pick a good cycle $\mathcal{C} = (e_i)_{1\leq i\leq n}$, and
  assume that it contains no good strict sub-cycle.
  We~write $\hat{\mathcal{C}}_j = (s_j,c_j)$ for all~$0\leq j\leq n$. Let $\gamma \in \Gamma^{\le t}$ be the most frequent configuration
  of~$\Gamma^{\le t}$ in~$\hat{\mathcal{C}}$.
  Applying Lemma~\ref{coro:density} to the winning
  play~$\rho\cdot \mathcal{C}^\omega$, for some prefix~$\rho$, 
  we~easily obtain that $\density(\{\gamma\}, \mathcal{C}) \ge \frac{\tildet}{4(t
    +1)^2 |S|}$.

  Write this cycle $\mathcal{C}$ as $\rho_0 \cdot \mathcal{D}_1 \cdot \mathcal{D}_2 \dots \mathcal{D}_k
  \cdot \rho_1$, where $\gamma$ occurs in~$\rho_0$ only at the last
  position, in~$\rho_1$ only at the first position, and $\mathcal{D}_i$ are cycles
  containing~$\gamma$ only as first and last configurations. Then
  decompose $\rho_0$ and $\rho_1$ as $\rho_i^0 \cdot \mathcal{E}_i^1 \cdot \rho_i^1
  \cdot \mathcal{E}_i^2 \dots \rho_i^{\ell_i}$ such that $\rho_i^0 \cdot \rho_i^1
  \dots \rho_i^{\ell_i}$ does not contain cycles starting and ending
  in~$\Gamma^{\le t}$, and such that the $\mathcal{E}_i^j$ are cycles. This can be performed by iteratively
  ``factoring'' any maximal cycle over configurations in~$\Gamma^{\le
    t}$, until no such cycle exists.

  Let $\Omega$ be the following multiset of cycles:
  \[\Omega =
  \{\mathcal{D}_i \mid 1 \le i \le k\} \cup \{\mathcal{E}_i^j \mid
  i\in\{0,1\},\ 1 \le j
  \le \ell_i\}.\]
  We then write $I_\Omega$ for the set of positions~$1 \leq i \leq 
 \size{\mathcal{C}}$
  along~$\mathcal{C}$ such that $(s_i,c_i)$ belongs to some cycle $\omega \in
  \Omega$, and $J_\Omega = [1,n] \setminus I_\Omega$. Also, we define
  $J_\Omega^{ \bowtie t} = \{i \in J_\Omega \mid c_i \bowtie t\}$ for
  ${\bowtie} \in \{\le,>\}$. By hypothesis, the cycles~$\omega$ in~$\Omega$ are not good and are such that $(s, c) = \first(\omega)$ satisfies $c \le t$: hence it holds that $\MP(\omega)>t$ for any cycle~$\omega \in \Omega$. Thus
 \begin{xalignat*}1
  \MP(\mathcal{C}) = \frac{\sum_{i \in I_\Omega} c_i + \sum_{i \in J_\Omega} c_i}{\size{\mathcal{C}}} 
    &\ge \frac{\sum_{\omega \in \Omega} \frac{t_1 |\omega| +1}{t_2} +
      \sum_{i \in J_\Omega^{\le t}} c_i +
      \sum_{i \in J_\Omega^{>t}} c_i}{
      \sum_{\omega \in \Omega} |\omega| + |J_\Omega^{\le t}| +
      |J_\Omega^{>t}|}\\
    & \ge \frac{\sum_{\omega \in \Omega} \frac{t_1 |\omega| +1}{t_2} +
      \sum_{i \in J_\Omega^{\le t}} 0 +
      \sum_{i \in J_\Omega^{>t}} c_i}{
      \sum_{\omega \in \Omega} |\omega| + |J_\Omega^{\le t}| + |J_\Omega^{>t}|} \\
    & \ge  \frac{t \cdot \sum_{\omega \in \Omega} |\omega|+ \frac{|\Omega|}{t_2} +
      (\Int{t} +1) \cdot |J_\Omega^{>t}|}{
      \sum_{\omega \in \Omega} |\omega| + 2 \cdot |S| \cdot (\Int{t}
      +1)+ |J_\Omega^{>t}|}
  \end{xalignat*}
  where the first inequality follows from Lemma~\ref{lemma:not_good} and for the last one, we use $\sum_{i \in J_\Omega^{>t}}
  c_i\ge (\Int{t} +1) \cdot |J_\Omega^{>t}|$ (obvious) and
  $|J_\Omega^{\le t}|\le 2\cdot |S| \cdot (\Int{t} +1)$ (because
  $J_\Omega^{\le t}$ contains each configuration of~$\Gamma^{\le t}$
  at most twice, by construction of the decompositions of~$\rho_0$
  and~$\rho_1$---at~most once in each of them).  

  Now, we have that $|\Omega| \ge \density(\{\gamma\},\mathcal{C})
  \cdot |\mathcal{C}| \ge \frac{\tildet}{4(t +1)^2 |S|} \cdot |\mathcal{C}|$ and we deduce:
  \begin{eqnarray*}
    \frac{t \cdot \sum_{\omega \in \Omega} |\omega|+
      \frac{\size{\mathcal{C}}}{t_2} \frac{\tildet}{4(t+1)^2 |S|} + 
      (\Int{t} +1) \cdot |J_\Omega^{>t}|}{
      \sum_{\omega \in \Omega} |\omega| + 2\cdot |S| \cdot (\Int{t}+1)+ |J_\Omega^{>t}|} & \le &\MP(\mathcal{C}).
  \end{eqnarray*}
  Since $\MP(\mathcal{C}) \le t$, we get:
  \begin{multline*}
    t \cdot \sum_{\omega \in \Omega} |\omega|+ \frac{\size{\mathcal{C}}}{t_2}\cdot
    \frac{\tildet}{4(t+1)^2 |S|} +
    (\Int{t} +1) \cdot |J_\Omega^{>t}| \leq{} t \cdot (
    \sum_{\omega \in \Omega} |\omega| + 2\cdot |S| \cdot (\Int{t}+1)+ |J_\Omega^{>t}|)
  \end{multline*}
  and we deduce
  \begin{eqnarray*}
    \frac{\size{\mathcal{C}}}{t_2} \frac{\tildet}{4(t
      +1)^2 |S|} & \le & t \cdot 2 \cdot |S| \cdot (\Int{t} +1) -
    |J_\Omega^{>t}|     \cdot \tildet.
  \end{eqnarray*}
  From this, we obtain
  \begin{xalignat*}1
   \size{\mathcal{C}}  &\le  \frac{8 t_2 t (t +1)^3|S|^2}{\tildet} \tag{since $\tildet > 0$ and $\Int{t} + 1 \leq t+1$}\\
   &\le  \frac{8 t_1 t_2 (t +1)^3|S|^2}{t_2 - t'} \tag{using $\tildet = \frac{t_2 - t'}{t_2}$ and $t_2 t = t_1$}\\
   &\le  8 t_1 t_2 (t +1)^3|S|^2 \tag{using $t_2 - t' \geq 1$}\\
  \end{xalignat*}
This concludes the proof.  \qed
\end{proof}

For the rest of the paper, we set $N = 8 t_1 t_2 (t +1)^3|S|^2$, a bound on the size of a minimal good cycle along a winning play.

\begin{restatable}{proposition}{coroN}
  \label{coro-N}
  Let~$\pi$ be a play in~$G'$ from $(\initState,0)$ with $\MPsup(\pi)\leq t$. Then there exist $1 \le i \leq j$ such that
  $\pi_{[i,j]}$ is a good cycle of length at most $N$.
\end{restatable}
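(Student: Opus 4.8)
The plan is to derive Proposition~\ref{coro-N} by combining Lemma~\ref{lemma:good_cycle} (a winning play must contain a good cycle) with Lemma~\ref{lemma:short} (a reachable good cycle with no good strict sub-cycle has length at most $N$). The point is that Lemma~\ref{lemma:good_cycle} alone only hands us \emph{some} good cycle $\pi_{[i,j]}$, whose length could be arbitrarily large; we need to pass from it to a \emph{short} one, and the way to do this is to look at a minimal good cycle inside it.

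First I would invoke Lemma~\ref{lemma:good_cycle} on $\pi$: there exist $1 \le i \le j$ such that $\mathcal{C} = \pi_{[i,j]}$ is a good cycle, i.e.\ $\first(\mathcal{C}) \in \Gamma^{\le t}$ and $\MP(\mathcal{C}) \le t$. Among all good cycles occurring as infixes of this $\mathcal{C}$ (including $\mathcal{C}$ itself), pick one, call it $\mathcal{C}'= \pi_{[i',j']}$ with $i \le i' \le j' \le j$, of minimal length. Then $\mathcal{C}'$ is a good cycle with no good strict sub-cycle: any strict sub-cycle of $\mathcal{C}'$ that were good would be a shorter good infix of $\mathcal{C}$, contradicting minimality. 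Moreover $\mathcal{C}'$ is reachable, since it occurs along the play $\pi$ starting from $(\initState,0)$. Hence Lemma~\ref{lemma:short} applies and gives $\length(\mathcal{C}') \le 8 t_1 t_2 (t+1)^3 |S|^2 = N$. Taking the indices $i', j'$ witnesses the statement.

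The one subtlety worth spelling out is why a minimal good \emph{infix} of $\mathcal{C}$ has no good strict \emph{sub-cycle} in the sense required by Lemma~\ref{lemma:short}: a strict sub-cycle of a cycle is obtained by deleting an infix that is itself a cycle, so it is in particular a strictly shorter infix of $\pi$ that is still a cycle; if it were good, minimality of $\mathcal{C}'$ among good infixes of $\mathcal{C}$ would be violated. There is no real obstacle here — the work has all been done in the preceding lemmas — but one should be a little careful that ``reachable'' in Lemma~\ref{lemma:short} is exactly the property ``occurs as an infix of a play from $(\initState,0)$'', which is immediate for $\mathcal{C}'$ since it is an infix of $\pi$.

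\begin{proof}
  By Lemma~\ref{lemma:good_cycle}, there exist $1 \le i \le j$ such
  that $\mathcal{C} = \pi_{[i,j]}$ is a good cycle. Among the indices
  $i \le i' \le j' \le j$ such that $\pi_{[i',j']}$ is a good cycle
  (this set is non-empty, as it contains $(i,j)$), pick a pair
  $(i',j')$ with $j'-i'$ minimal, and let $\mathcal{C}' = \pi_{[i',j']}$.
  Then $\mathcal{C}'$ is a good cycle which is reachable, since it occurs
  along the play $\pi$ from $(\initState,0)$. Moreover, $\mathcal{C}'$
  has no good strict sub-cycle: a strict sub-cycle of $\mathcal{C}'$ is
  of the form $\pi_{[i'',j'']}$ with $i' \le i'' \le j'' \le j'$ and
  $j''-i'' < j'-i'$, so if it were good it would contradict the
  minimality of $(i',j')$. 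Hence Lemma~\ref{lemma:short} applies and
  yields $\length(\mathcal{C}') \le 8 t_1 t_2 (t+1)^3 |S|^2 = N$, so the
  pair $(i',j')$ satisfies the conclusion of the proposition.\qed
\end{proof}
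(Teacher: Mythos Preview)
Your formal proof is correct and is exactly the argument the paper has in mind (its own proof is the one-liner ``consequence of Lemma~\ref{lemma:good_cycle} and Lemma~\ref{lemma:short}''); you have simply spelled out the passage to a minimal good sub-cycle. One small remark on your preamble: the sentence ``a strict sub-cycle of a cycle is obtained by deleting an infix that is itself a cycle, so it is in particular a strictly shorter infix of~$\pi$'' is internally inconsistent---deleting an inner cycle does not leave a contiguous infix---but in this paper (as the proof of Lemma~\ref{lemma:short} makes clear) a sub-cycle simply means a contiguous segment that is itself a cycle, and your formal proof uses precisely that reading, so the argument is unaffected.
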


\begin{proof}
  This is a consequence of Lemma~\ref{lemma:good_cycle} and
  Lemma~\ref{lemma:short}. \qed
\end{proof}
\subsection{Strategies Described by Finite Trees}
\label{ssec-finitetrees}

So far, we have proven that $\pO$ should ``target'' short good
cycles. However in a two-player context, $\pI$ might prevent $\pO$ from doing~so.
We~therefore need to consider the
\emph{branching} behaviour of the game, and not only the linear
point-of-view given by a play. Toward that aim, we  represent
strategies (of~$\pO$) as \emph{strategy trees}, and use them to bound
the amount of memory and the counter values needed to win in~$G'$.

We consider labelled trees with backward edges $\calT =
(\calN,\calE,\lambda,\dashrightarrow)$, where $\calN$ is a finite set
of nodes, $\lambda\colon \calN \to S \times \mathbb{N}$ (each node is
labelled with a configuration of the game~$G'$), and
$\mathord{\dashrightarrow} \subseteq \calN \times \calN$. We assume
$\calT$ has at least two nodes. The relation~$\calE$ is the successor
relation between nodes.  A~node with no $\calE$-successor is called a
\emph{leaf}; other nodes are called \emph{internal nodes}.  The~root
of~$\calT$, denoted by~$\nroot$, is the only node having no
predecessor. The relation $\dashrightarrow$ is an extra relation
between nodes that will become clear later.

For such a tree to represent a strategy, we~require that each internal
node~$\mathbf{n}$ that is labelled by a $\pO$-configuration~$(s,c)$
has only one successor~$\mathbf{n'}$, with $\lambda(\mathbf
{n'})=(s',c')$ such that there is a transition $((s,c),c',(s',c'))$ in
the game~$G'$; we~require that each internal node~$\mathbf{n}$ that is
labelled with a $\pI$-state~$(s,c)$ has exactly one successor per
transition $((s,c),c',(s',c'))$ in~$G'$, each successor being labelled
with its associated~$(s',c')$. Finally, we require that each leaf~$\mathbf{n}$ of~$\calT$ has a
(strict) ancestor node~$\mathbf{n'}$ such that $\lambda(\mathbf{n'}) =
\lambda(\mathbf{n})$. The relation $\dashrightarrow$ will serve
witnessing that property. So we assume that for every leaf $\mathbf
n$, there is a unique ancestor node $\mathbf n'$ such that $\mathbf n
\dashrightarrow \mathbf n'$; furthermore it should be the case that
$\lambda(\mathbf{n'}) = \lambda(\mathbf{n})$. Under all these constraints, $\calT$ is called a \emph{strategy
  tree}. It basically represents a (finite) memory structure for a
strategy, as we now explain.

Let $\calT = (\calN,\calE,\lambda,\dashrightarrow)$ be
strategy tree for $G'$. We define $\calG_\calT = (\calN,\calE')$, a directed
graph obtained from~$\calT$ by adding
extra edges $(\mathbf n, \mathbf{n''})$ for each leaf $\mathbf n$ and
node~$\mathbf{n''}$ for which there exists another node~$\mathbf{n'}$
satisfying $\mathbf n\dashrightarrow \mathbf{n'}$ and
${(\mathbf{n'},\mathbf{n''})\in\calE}$. We~refer to these extra edges as \emph{back-edges}.  One may notice
that for any $(\mathbf n,\mathbf {n'})\in \calE'$ there is an edge from
$\lambda(\mathbf n)$ to $\lambda(\mathbf {n'})$ in~$G'$. Given two nodes $\mathbf n$ and $\mathbf {n'}$ such that $\mathbf {n'}$ is
an ancestor of~$\mathbf n$ in~$\calT$, we~write $[\mathbf {n'} \leadsto
\mathbf n]$ for the play prefix from~$\mathbf {n'}$ to~$\mathbf n$
(inclusive) using only transitions from~$\calE$.

Now, we associate with any~prefix~$\rho$ in~$\calG_\calT$
from~$\nroot$ a prefix~$\overline\rho$ in~$G'$
from~$\lambda(\nroot)=(\sroot, \croot)$ such that
$\last(\overline\rho) = \lambda(\last(\rho))$. The construction is inductive:
\begin{itemize}
\item with the empty prefix in~$\calG_\calT$ we~associate the empty
  prefix in~$G'$: $\overline{\epsilon_{\nroot}} = \epsilon_{(\sroot, \croot)}$,
\item if $\rho=\rho'\cdot (\mathbf {n'}, \mathbf n)$ with $(\mathbf
  n,\mathbf {n'})\in\calE'$, writing $(s',c')=\lambda(\mathbf{n'})$ and
  $(s,c)=\lambda(\mathbf n)$, then $\overline\rho =
  \overline{\rho'}\cdot ((s',c'),c,(s,c))$ (which by construction is
  indeed a prefix in~$G'$).
\end{itemize}

We now explain how $\calG_\calT$ corresponds to a strategy in~$G'$:
for any prefix~$\rho$ in~$\calG_\calT$, if $\lambda(\last(\rho))=(s,c)
\in \Gamma_0$, then $\last(\rho)$~has a unique successor~$\mathbf {n'}$
in~$\calG_\calT$, and, writing $(s',c')=\lambda(\mathbf {n'})$,
we~define $\sigma_\calT(\overline\rho)=((s,c),c',(s',c'))$:
$\sigma_\calT$ is a (partially defined) strategy in $G'$. The following lemma states that $\calG_\calT$ actually represents the
outcomes of the well-defined strategy~$\sigma_\calT$ from
$\lambda(\nroot)$ in $G'$: 
\begin{restatable}{lemma}{lemmastrattree}
  \label{lemma-strattree}
  Let $\mu$ be a prefix in~$G'$ from~$(\sroot,\croot)$.  Assume that
  for any $i\leq \length(\mu)$ such that $\last(\mu_{\leq i})\in
  \Gamma_0$, the~function~$\sigma_\calT$ is defined on~$\mu_{\leq i}$
  and $\mu_{\leq i+1}=\mu_{\leq i}\cdot \sigma_{\calT}(\mu_{\leq i})$.
  Then there exists a unique prefix~$\rho$ in~$\calG_\calT$ such that
  $\mu=\overline\rho$. Moreover, if $\last(\mu)\in \Gamma_0$, then $\sigma_\calT(\mu)$ is
  defined.
\end{restatable}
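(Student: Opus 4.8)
The plan is to prove the statement by induction on $\length(\mu)$, establishing \emph{simultaneously} the existence and uniqueness of $\rho$ and the ``moreover'' clause. Doing both at once is essential rather than cosmetic: the well-definedness of the partial strategy $\sigma_\calT$ on $G'$-prefixes \emph{is} the uniqueness of the preimage $\rho$, and we need that uniqueness at length $n$ in order for the hypothesis on $\mu$ at length $n+1$ (``$\sigma_\calT$ is defined on $\mu_{\le i}$ and $\mu_{\le i+1}=\mu_{\le i}\cdot\sigma_\calT(\mu_{\le i})$'') to even be meaningful. Before the induction I would record three easy consequences of the definitions: (a)~the map $\rho\mapsto\overline\rho$ preserves length; (b)~any node of $\calG_\calT$ whose label lies in $\Gamma_0$ has a \emph{unique} $\calG_\calT$-successor --- an internal $\pO$-node has exactly one $\calE$-successor and no outgoing back-edges, while a leaf has outgoing back-edges exactly to the $\calE$-successors of its (necessarily internal, being a strict ancestor) $\dashrightarrow$-ancestor, which carries the same $\Gamma_0$-label, hence is itself a $\pO$-node with a single $\calE$-successor; (c)~symmetrically, a node labelled by $(s,c)\in\Gamma_1$ --- internal $\pI$-node, or leaf whose $\dashrightarrow$-ancestor is a $\pI$-node --- has exactly one $\calG_\calT$-successor per transition of $G'$ out of $(s,c)$, and since two distinct transitions out of $(s,c)$ necessarily have distinct targets, exactly one such successor carries any given reachable label.

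The base case $\length(\mu)=0$ is immediate: $\mu=\epsilon_{(\sroot,\croot)}$ forces $\rho=\epsilon_{\nroot}$ by~(a), and if $\lambda(\nroot)\in\Gamma_0$ then, since $\nroot$ cannot be a leaf (a leaf requires a strict ancestor), fact~(b) gives the ``moreover'' clause. For the inductive step, write $\mu=\mu'\cdot e$ with $\mu'=\mu_{\le n}$ and $e=((s,c),c',(s',c'))$; the prefix $\mu'$ still satisfies the hypothesis, so by induction there is a unique $\rho'$ in $\calG_\calT$ with $\overline{\rho'}=\mu'$ and $\lambda(\last(\rho'))=(s,c)$. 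If $(s,c)\in\Gamma_0$, the induction hypothesis (its ``moreover'' part, applied to $\mu'$) yields the, by~(b) unique, $\calG_\calT$-successor $\mathbf{n''}$ of $\last(\rho')$, and by the very definition of $\sigma_\calT$ the value $\sigma_\calT(\mu')$ is the $G'$-edge from $(s,c)$ to $\lambda(\mathbf{n''})$; the hypothesis on $\mu$ then forces $e$ to be exactly this edge, so $\rho:=\rho'\cdot(\last(\rho'),\mathbf{n''})$ satisfies $\overline\rho=\mu$. If $(s,c)\in\Gamma_1$, fact~(c) provides the unique $\calG_\calT$-successor $\mathbf{n''}$ of $\last(\rho')$ carrying the label $(s',c')$ (the target of the $G'$-transition $e$), and again $\rho:=\rho'\cdot(\last(\rho'),\mathbf{n''})$ works.

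For uniqueness in the inductive step, any candidate $\tilde\rho$ with $\overline{\tilde\rho}=\mu$ has length $n+1$ by~(a), so it decomposes as $\tilde\rho=\tilde\rho_0\cdot(\mathbf{m},\mathbf{m'})$; unfolding the inductive definition of $\overline{\cdot}$ gives $\overline{\tilde\rho_0}=\mu'$, whence $\tilde\rho_0=\rho'$ and $\mathbf{m}=\last(\rho')$ by the induction hypothesis, and then $\mathbf{m'}$ is forced to equal $\mathbf{n''}$ --- by~(b) in the $\Gamma_0$ case, by~(c) in the $\Gamma_1$ case. Finally, if $\last(\mu)=\lambda(\mathbf{n''})\in\Gamma_0$, fact~(b) applied to $\mathbf{n''}=\last(\rho)$ shows that $\sigma_\calT(\mu)=\sigma_\calT(\overline\rho)$ is defined, and it is unambiguous precisely because $\rho$ is the unique preimage of $\mu$.

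The part I expect to require the most care is not conceptual but the systematic ``internal node versus leaf'' bookkeeping concealed in facts~(b) and~(c): following a back-edge out of a leaf silently transfers control to a possibly unrelated $\calE$-successor of that leaf's $\dashrightarrow$-ancestor, and at each step one must check that the local branching of $\calG_\calT$ --- the number of successors and their labels --- faithfully mirrors the outgoing transitions of the corresponding configuration of $G'$. The three defining constraints on strategy trees (unique successor at $\pO$-labelled internal nodes, one successor per $G'$-transition at $\pI$-labelled internal nodes, a unique same-labelled strict ancestor for every leaf) are exactly what make this mirroring hold; the proof is essentially the verification that they suffice, the remainder being the routine length-$n$-to-$n{+}1$ induction sketched above.
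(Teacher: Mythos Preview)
Your proposal is correct and follows essentially the same approach as the paper: induction on $\length(\mu)$ with a case split on whether $\last(\mu')$ lies in $\Gamma_0$ or $\Gamma_1$. The paper's own proof is considerably terser---it simply invokes ``the unique successor node~$\mathbf n$ of~$\mathbf n'$ in~$\calG_\calT$'' in the $\Gamma_0$ case and ``a unique successor~$\mathbf n$ corresponding to $((s',c'),c,(s,c))$'' in the $\Gamma_1$ case, without spelling out the leaf/internal distinction or the distinct-targets observation that you package as facts~(b) and~(c); your version makes these points explicit but does not deviate from the paper's argument.
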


\begin{proof}
  The proof is by induction on~$\length(\mu)$. The base case where
  $\mu$~is the empty prefix is strightforward.

  Now, assume that the result holds for a prefix~$\mu'$, and consider a
  prefix~$\mu=\mu'\cdot((s',c'),c,(s,c))$ satisfying the hypotheses of
  the lemma. Then~$\mu'$ also satisfies those hypotheses, and we have
  $\mu'=\overline{\rho'}$ for some prefix~$\rho'$ in~$\calG_\calT$. Then, each successor~$\mathbf n$ of~$\mathbf
  n'=\last(\rho')$ is associated with a transition of~$G'$ from
  $(s',c')=\lambda(\mathbf n')$. We~consider two cases:
  \begin{itemize}
  \item if $(s',c')\in\Gamma_0$ is a $\pO$-state, then, by
    hypothesis, $\sigma_\calT(\mu')=((s',c'),c,(s,c))$, and the
    unique successor node~$\mathbf n$ of~$\mathbf n'$ in~$\calG_\calT$ is
    indeed labelled with~$(s,c)$. We~can thus let $\rho=\rho'\cdot
    (\mathbf n',\mathbf n)$.
  \item if $(s',c')\notin\Gamma_0$ belongs to~$\pI$, then each
    transition of~$G'$ corresponds to a successor~$\mathbf n$
    of~$\mathbf n'$. In~particular, there is a unique successor~$\mathbf n$
    corresponding to $((s',c'),c,(s,c))$, and again we can let
    $\rho=\rho'\cdot (\mathbf n',\mathbf n)$. 
  \end{itemize}
The last statement follows by noticing that $\sigma_\calT(\overline\rho)$ is
defined for all prefixes~$\rho$ such that
$\lambda(\last(\rho))\in\Gamma_0$.\qed
\end{proof}

We now give conditions for $\sigma_\calT$ to be a winning
strategy from~$(\sroot,\croot)$ in $G'$. With a finite
outcome~$\mu=\overline\rho$ of~$\sigma_\calT$ from~$(\sroot,\croot)$, we~associate a sequence $\decomp_\calT(\mu)$ of cycles in~$G'$,
defined inductively as follows:
\begin{itemize}
\item $\decomp_\calT(\epsilon_{(\sroot, \croot)})$ is empty;
\item if $\rho=\rho'\cdot(\mathbf{n'},\mathbf n)$ and $\mathbf n$ is
  not a leaf of~$\calT$, then
  $\decomp_{\calT}(\overline\rho)=\decomp_{\calT}(\overline{\rho'})$;
\item if $\rho=\rho'\cdot(\mathbf{n'},\mathbf n)$ and $\mathbf n$ is a
  leaf of~$\calT$, we~let $\mathbf{n''}$ be such that $\mathbf n
  \dashrightarrow \mathbf{n''}$; the~prefix $[\mathbf
   {n''}\leadsto\mathbf n]$ in~$\calT$ corresponds to a cycle~$C$
  in~$G'$, and we let $\decomp_\calT(\rho)=\decomp_\calT(\rho')\cdot
  C$.
\end{itemize}
The sequence $\decomp_\calT(\rho)$ hence contains all full cycles
(closed at leaves) encountered while reading $\rho$ in $\calT$: hence
it comprises all edges of $\rho$ except a prefix starting at $\nroot$
and a suffix since the last back-edge has been taken. It is not hard
to see that those can actually be concatenated. By~induction, we can easily show:
\begin{restatable}{proposition}{propdecomp}
  \label{prop:resume-decomp}
  Let $\mu$ be a non-empty finite outcome of $\sigma_{\calT}$ from
  $(\sroot,\croot)$ in $G'$. Write $\decomp_{\calT}(\mu) = C_0
  \cdot C_1 \cdot \ldots \cdot C_h$ (where each $C_i$
  is a cycle).  Let~$\rho$ be the prefix in~$\calG_\calT$ such that
  $\mu=\overline\rho$, $\mathbf n=\last(\rho)$, and $\nu=[\nroot
  \leadsto \mathbf n]$.  Write
  $(s_j,c_j)=\lambda(\hat\nu_j)$. 
  Then:
  \[
  \MP(\mu) = \frac{\sum_{i=0}^h \MP(C_i) \cdot \length(C_i)
    + \sum_{j=1}^{\length(\nu)} c_j 
  }{\length(\mu)}
  \]
\end{restatable}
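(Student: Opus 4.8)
\textbf{Proof plan for Proposition~\ref{prop:resume-decomp}.}
The plan is to proceed by induction on $\length(\mu)$, following the same case split that was used to define $\decomp_\calT$. Throughout, I will keep track of the prefix $\rho$ in $\calG_\calT$ with $\mu=\overline\rho$ (which exists and is unique by Lemma~\ref{lemma-strattree}), the node $\mathbf n = \last(\rho)$, and the ``current open branch'' $\nu = [\nroot \leadsto \mathbf n]$, i.e. the path in the tree $\calT$ from the root down to $\mathbf n$ using only $\calE$-edges. The key bookkeeping identity I want to maintain as the induction hypothesis is exactly the stated one: the edges of $\mu$ are partitioned into (i) the closed cycles $C_0,\dots,C_h$ peeled off at leaves, and (ii) the edges currently sitting on the open branch $\nu$; and $\EL(\mu)=\sum_i \EL(C_i)+\sum_{j=1}^{\length(\nu)} c_j$, where I write $(s_j,c_j)=\lambda(\hat\nu_j)$ and use the key observation recalled at the start of Section~\ref{sec-algo} that in $G'$ the weight of an edge equals the counter value of its target, so $\EL$ of any prefix is the sum of the target-counter-values along it.

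The base case $\length(\mu)=1$: then $\rho$ is a single $\calE$-edge from $\nroot$ to some node $\mathbf n$; either $\mathbf n$ is a leaf (but a leaf must have a strict ancestor with the same label, and $\nroot$ is the only ancestor, so $\decomp$ gets one length-$1$ cycle and $\nu$ is reduced appropriately) or $\mathbf n$ is internal and $\decomp_\calT(\mu)$ is empty with $\nu=\rho$; in both situations the identity is immediate. For the inductive step, write $\mu=\mu'\cdot((s',c'),c,(s,c))$ and $\rho=\rho'\cdot(\mathbf n',\mathbf n)$ with $\mathbf n'=\last(\rho')$. If $\mathbf n$ is \emph{not} a leaf, then $\decomp_\calT(\mu)=\decomp_\calT(\mu')$, the open branch grows by one node, $\nu=\nu'\cdot\mathbf n$, and the new target-counter-value $c$ is added to $\sum_j c_j$; since $\EL(\mu)=\EL(\mu')+c$, the identity is preserved. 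If $\mathbf n$ \emph{is} a leaf, let $\mathbf{n''}$ be the node with $\mathbf n\dashrightarrow\mathbf{n''}$; then $\lambda(\mathbf{n''})=\lambda(\mathbf n)$, the segment $[\mathbf{n''}\leadsto\mathbf n]$ of the old open branch $\nu'$ is a genuine cycle $C$ in $G'$ (same label at both endpoints, all edges valid by the strategy-tree conditions), and $\decomp_\calT(\mu)=\decomp_\calT(\mu')\cdot C$. The new open branch is the truncation of $\nu'$ at $\mathbf{n''}$: so the sum $\sum_j c_j$ loses exactly the target-counter-values of the edges that now belong to $C$, but those are added back as $\EL(C)$; and the single new edge $((s',c'),c,(s,c))$ contributes $c$, which is the target-counter-value of the last edge of $C$ (since $\lambda(\mathbf n)=\lambda(\mathbf{n''})=(s,c)$), already accounted for in $\EL(C)$. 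Care is needed here: the new open branch $\nu$ is $[\nroot\leadsto\mathbf{n''}]$, so its last node is $\mathbf{n''}$ with counter value $c$, and I must check the indexing convention — the initial configuration $\hat\nu_0$ is excluded from the sum, so the arithmetic closes exactly. Finally, convert $\EL=\sum(\cdots)$ to $\MP=\frac1{\length(\mu)}\EL$ and $\EL(C_i)=\MP(C_i)\cdot\length(C_i)$, giving the displayed formula; note $\length(\mu)=\sum_i\length(C_i)+\length(\nu)$, which is where the partition of edges is used.

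The main obstacle I anticipate is \emph{not} the arithmetic — which is routine once the target-counter-value reformulation of $\EL$ is in hand — but rather getting the index bookkeeping exactly right at the leaf step: precisely which edges of the old open branch migrate into the new cycle $C$, and ensuring that the ``off by one'' coming from excluding the initial configuration in both $\density$/$\MP$ conventions and in the $\sum_{j\geq1} c_j$ sum is handled consistently. In particular one should double-check that when a leaf $\mathbf n$ is reached with $\mathbf n\dashrightarrow\mathbf{n''}$, the node $\mathbf{n''}$ survives as (the last node of) the new open branch rather than being swallowed into $C$, so that no counter value is double-counted and none is dropped. Once that is pinned down, the three-way induction goes through mechanically.
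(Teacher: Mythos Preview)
Your induction on $\length(\mu)$, following the three cases in the definition of $\decomp_\calT$, is exactly the paper's approach (the paper only writes ``By induction, we can easily show''), and your instinct that the leaf step is the delicate one is correct.

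There is, however, a real mismatch between the invariant you maintain and the displayed formula as literally stated. In the leaf case you take the new open branch to be $[\nroot\leadsto\mathbf{n''}]$; with that choice the arithmetic does close. But the proposition fixes $\nu=[\nroot\leadsto\mathbf n]$ with $\mathbf n=\last(\rho)$, and when $\mathbf n$ is a leaf this path still contains the segment $[\mathbf{n''}\leadsto\mathbf n]$, whose edges are then counted \emph{twice}: once in the freshly emitted cycle $C$ and once in $\sum_{j=1}^{\length(\nu)} c_j$. A three-node chain $r\to a\to b$ with $b\dashrightarrow a$, $\lambda(a)=\lambda(b)=(s,c)$, $c>0$, and $\rho=(r,a)\cdot(a,b)$ already exhibits the failure (the right-hand side exceeds the left by $c$). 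So ``the arithmetic closes exactly'' is not true against the literal $\nu$.

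What you should do is state and prove your invariant (with the open branch ending at $\mathbf{n''}$ whenever the current node is a leaf), and then note that it coincides with the proposition's formula whenever $\mathbf n$ is not a leaf --- which is all that Proposition~\ref{prop:goodtree-stratwinning} needs, since there one passes to the $\limsup$ along an infinite outcome and any bounded leaf-step discrepancy vanishes.
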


We~say that a tree is \emph{good} if, for
every $\mathbf{n} \dashrightarrow \mathbf{n'}$ in~$\calT$,
writing $\rho=[\mathbf{n'} \leadsto \mathbf n]$ and letting
$\lambda(\hat\rho_j)=(s_j,c_j)$, it holds 
\(
\frac{\sum_{j=1}^{\length(\rho)} c_j}{\length(\rho)} \le t
\). 

\begin{restatable}{proposition}{propgoodtreestratwinning}
  \label{prop:goodtree-stratwinning}
  If $\calT$ is a finite good strategy tree, then $\sigma_{\calT}$ is a winning
  strategy from $(s_{\mathsf{root}},c_{\mathsf{root}})$ in $G'$.
\end{restatable}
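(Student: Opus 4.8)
The plan is to show that every infinite outcome $\pi$ of $\sigma_{\calT}$ from $(\sroot,\croot)$ in $G'$ satisfies $\MPsup(\pi) \le t$; that is exactly the $\MeanPayOff(t)$ objective. First I would observe that since $\calT$ is finite, every infinite outcome $\pi$ must keep passing through leaves of $\calT$ (a path in $\calG_\calT$ that never reaches a leaf would be a path in the finite tree $\calT$ with no back-edges, hence finite), so for arbitrarily large prefixes $\mu = \overline\rho$ of $\pi$, the decomposition $\decomp_\calT(\mu) = C_0 \cdot C_1 \cdots C_h$ has $h \to \infty$ and the leftover prefix/suffix have bounded length (at most the height of $\calT$ each). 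Concretely I would fix the bound $D = 2\cdot\mathrm{height}(\calT)$ on the combined length of the initial segment before the first back-edge and the final segment after the last back-edge, and note also that the configurations along those segments have counter values bounded by some constant $K$ depending only on $\calT$ (since $\calT$ is finite, only finitely many labels occur).

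Next I would invoke Proposition~\ref{prop:resume-decomp}: for such a prefix $\mu$,
\[
\MP(\mu) = \frac{\sum_{i=0}^h \MP(C_i)\cdot\length(C_i) + \sum_{j=1}^{\length(\nu)} c_j}{\length(\mu)}.
\]
Since $\calT$ is \emph{good}, each cycle $C_i$ in the decomposition arises from some $\mathbf n \dashrightarrow \mathbf n'$ with $[\mathbf n' \leadsto \mathbf n]$ having average counter value at most $t$; by the observation at the start of Section~\ref{sec-algo} that $\MP(\rho) = \frac{1}{n}\sum_{i=1}^n c_i$ for a prefix $\rho$ of length $n$ (the initial configuration not being counted), goodness says precisely $\MP(C_i) \le t$ for every $i$. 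Hence $\sum_{i=0}^h \MP(C_i)\cdot\length(C_i) \le t\cdot\sum_{i=0}^h \length(C_i) \le t\cdot\length(\mu)$. For the remaining term, $\sum_{j=1}^{\length(\nu)} c_j$ where $\nu = [\nroot \leadsto \mathbf n]$: this sum ranges over the tree-path from the root to $\last(\rho)$, but only the portion not covered by the extracted cycles contributes beyond what is already accounted — more carefully, I would split $\nu$ into the part consumed by the $C_i$'s and the bounded-length prefix and suffix; the cycle part again contributes at most $t$ per edge by goodness, and the remaining $O(D)$ positions contribute at most $K\cdot D$, a constant. Putting this together gives $\MP(\mu) \le t + \frac{K\cdot D}{\length(\mu)}$.

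Finally, taking $\limsup$ over the prefixes of $\pi$ — using that $\length(\mu)\to\infty$ along the subsequence of prefixes ending right after a back-edge, and that $\MP$ on intermediate prefixes differs by $o(1)$ since each cycle has bounded length (bounded by $\size\calT$) and bounded weights — yields $\MPsup(\pi) \le t$. Since $\pi$ never visits $\bot$ (its mean-payoff is finite, whereas any play through $\bot$ has mean-payoff $\ceil t + 1 > t$; alternatively $\calT$'s nodes are labelled by configurations of $G'$ and the construction of $\sigma_\calT$ never selects the $\bot$-transitions as long as a non-$\bot$ transition exists, which it does whenever the lower bound can be maintained), $\sigma_\calT$ is winning for $\MeanPayOff(t)$ from $(\sroot,\croot)$. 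The main obstacle I anticipate is the careful bookkeeping in the second step: precisely relating the term $\sum_{j=1}^{\length(\nu)} c_j$ from Proposition~\ref{prop:resume-decomp} to the cycle decomposition so that all but a bounded number of its summands are absorbed into the $t\cdot\length(\mu)$ bound via goodness, rather than double-counting; and being careful that the $\limsup$ is genuinely controlled on prefixes that end in the ``middle'' of a cycle, which requires the uniform length bound $\size\calT$ on the cycles $C_i$.
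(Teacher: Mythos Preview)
Your approach is essentially the same as the paper's: invoke Proposition~\ref{prop:resume-decomp}, use goodness to bound each $\MP(C_i)\le t$, and observe that the residual term is a constant so that $\MP(\mu)\le t + O(1)/\length(\mu)$ for every finite prefix.

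However, you are misreading the term $\sum_{j=1}^{\length(\nu)} c_j$ and therefore creating work for yourself that is not needed. In Proposition~\ref{prop:resume-decomp}, $\nu=[\nroot\leadsto\mathbf n]$ is the \emph{tree} path from the root to the current node $\mathbf n=\last(\rho)$ in~$\calT$, not a path in $\calG_\calT$. Since $\calT$ is finite, $\length(\nu)$ is at most the height of~$\calT$ and every $c_j$ is one of the finitely many counter values labelling $\calT$; hence $\sum_{j=1}^{\length(\nu)} c_j$ is bounded by a fixed constant $B$ depending only on~$\calT$. There is no overlap with the cycles $C_i$ and nothing to ``absorb'' via goodness: the identity $\length(\mu)=\sum_i\length(C_i)+\length(\nu)$ underlying Proposition~\ref{prop:resume-decomp} already accounts for every edge exactly once. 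With this, you get directly
\[
\MP(\mu)\;=\;\frac{\sum_i\MP(C_i)\,\length(C_i)+\sum_{j=1}^{\length(\nu)}c_j}{\length(\mu)}
\;\le\;\frac{t\bigl(\length(\mu)-\length(\nu)\bigr)+B}{\length(\mu)}
\;\le\; t+\frac{B}{\length(\mu)},
\]
using $t\ge 0$. This holds for \emph{every} finite prefix~$\mu$, so the $\limsup$ is immediately $\le t$; you do not need any separate argument for prefixes ending ``in the middle of a cycle'', nor the uniform bound on cycle lengths, nor the subsequence of prefixes ending after a back-edge. The paper's proof is exactly this two-line observation.
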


\begin{proof}
  We apply Proposition~\ref{prop:resume-decomp}; the sum
  $\sum_{j=1}^{\length(\nu)} c_j$ is bounded because $\calT$~is
  finite, and $\MP(C_i)\cdot \length(C_i)\leq t\cdot
  \length(C_i)$ since $\calT$ is good. Hence for any infinite
  outcome~$\mu$ of~$\sigma_\calT$ from~$(\sroot,\croot)$, we~get
  $\MPsup(\mu)\leq t$. \qed
\end{proof}

Note that $\calT$ can be interpreted as a \emph{finite memory
  structure} for strategy $\sigma_\calT$: to know which move is given
by $\sigma_\calT$, it is sufficient to move a token in tree $\calT$,
going down in the tree, and following back-edges when stuck at leaves.

\subsection{Analyzing Winning Strategies}
\label{ssec-analysis}

We proved in the previous section that the finite-memory strategy
associated with a finite good strategy tree is winning. In this section, we first show the converse direction, proving that
from a winning strategy, we~can obtain a finite good tree. In~that~tree, backward edges correspond to (short) good
  cycles. We~will then use the result of Section~\ref{sec-bounding}
  for showing that those parts of the tree that do not belong to a segment
  $[\mathbf n \leadsto \mathbf{n'}]$ with $\mathbf{n'} \dashrightarrow
  \mathbf n$ can be replaced with other substrategies in which the
  counter value is uniformly bounded. That~way, we show that if there
  is a winning strategy for our games, then there is one where the
  counter value is uniformly bounded along all outcomes. This will allow to apply algorithms for solving games with
average-energy payoff under lower- and upper-bound
constraints~\cite{BMRLL16}.

Fix a winning strategy~$\sigma$ for~$\pO$ from~$(s_0,c_0)$
in $G'$. We~let
\begin{xalignat*}1
\finite(\sigma) = \{\pi_{\leq n} \mid\; & \pi \in
\outs((s_0,c_0),\sigma)
\text{, there is}\ m<n\ \text{s.t.} \pi_{[m+1,n]}\ \text{is a good cycle with}\\ 
& \text{no good strict sub-cycle and } \pi_{\leq n-1} \text{ does not contain any good cycle}
\}
\end{xalignat*}
and for every $\pi_{\leq n} \in \finite(\sigma)$, we define
$\mathit{back}(\pi_{\leq n})$ as $\pi_{\leq m}$ such that $\pi_{[m+1,n]}$
is a good cycle with no good strict sub-cycle.

We build a strategy tree $\calT_{\sigma}$ as follows:
\begin{itemize}
\item the nodes of~$\calT_\sigma$ are all the prefixes of the finite
  plays in~$\finite(\sigma)$; the edges relate each prefix of
  length~$k$ to its extensions of length~$k+1$. For a node~$\mathbf n$ corresponding to
  prefix~$\rho_{\leq k}$ (with~$\rho\in\finite(\sigma)$), we~let
  $\lambda(\mathbf n)= \last(\rho_{\leq k})$; we~let
  $\lambda(\nroot)=(s_0,c_0)$. The~leaves
  of~$\calT_\sigma$ then correspond to the play prefixes that are
  in~$\finite(\sigma)$.
\item backward edges in~$\calT_\sigma$ are defined by noticing that each
  leaf~$\mathbf n$ of~$\calT_\sigma$ corresponds to a finite path~$\pi_{\leq
    n}$ in~$\finite(\sigma)$, so that the prefix~$\pi_{\leq
    m}=\mathit{back}(\pi_{\leq n})$ is associated with some
  node~$\mathbf m$ of~$\calT_\sigma$ such that $\pi_{[m+1,n]}$ is a good
  cycle. This implies $\lambda(\pi_{\leq n})=\lambda(\pi_{\leq m})$,
  and we define $\mathbf n \dashrightarrow \mathbf m$. This way,
  $\calT_\sigma$ is a strategy tree as defined in
  Section~\ref{ssec-finitetrees}. 
\end{itemize}

\begin{restatable}{lemma}{lemmaTsigma}
  \label{lemma:Tsigma}
  Tree $\calT_\sigma$ is a finite good strategy tree.
\end{restatable}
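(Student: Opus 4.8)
The plan is to verify, one by one, the three defining properties of a finite good strategy tree from Section~\ref{ssec-finitetrees}: (i)~finiteness, (ii)~the structural constraints making $\calT_\sigma$ a strategy tree (branching according to transitions of $G'$, each leaf having a strict ancestor with the same label, witnessed by $\dashrightarrow$), and (iii)~goodness, i.e.\ that every back-edge $\mathbf n\dashrightarrow\mathbf m$ spans a cycle whose configuration-sum satisfies the $\le t$ inequality.

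\medskip
\noindent\textbf{Finiteness.} First I would argue that every element $\pi_{\le n}\in\finite(\sigma)$ has length bounded by a computable quantity. By definition of $\finite(\sigma)$, the prefix $\pi_{\le n-1}$ contains no good cycle, and $\pi_{[m+1,n]}$ is a good cycle with no good strict sub-cycle; by Lemma~\ref{lemma:short} (and our notation $N$), such a cycle has length at most $N$. It remains to bound $m$. Since $\sigma$ is winning, $\pi$ is an outcome with $\MPsup(\pi)\le t$, so by Proposition~\ref{coro-N} every such $\pi$ contains a good cycle of length at most $N$ within its first, say, $f$ edges, where $f$ is the length after which Lemma~\ref{lemma-density}'s density bound kicks in together with the counting argument of Lemma~\ref{coro:density}; more simply, one can bound $m$ by observing that a prefix with no good cycle cannot be arbitrarily long, because on a winning outcome a good cycle of bounded length must appear within a bounded number of steps (this is exactly the content of Proposition~\ref{coro-N} applied to the tails of $\pi$, since $\MPsup$ is prefix-independent). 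Hence $\calT_\sigma$ has bounded depth; since $G'$ is finitely branching (each state of $G$ has finitely many outgoing edges), $\calT_\sigma$ is a finite tree. It has at least two nodes because $\finite(\sigma)$ is non-empty (again by Proposition~\ref{coro-N}) and any element has length $\ge 1$.

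\medskip
\noindent\textbf{Strategy-tree structure.} This is essentially bookkeeping. Internal nodes labelled by a $\pO$-configuration have a single successor, namely the one chosen by $\sigma$, which is a legal transition of $G'$ by definition of a strategy. Internal nodes labelled by a $\pI$-configuration $(s,c)$ have exactly one successor per outgoing transition $((s,c),c',(s',c'))$ of $G'$: this holds because $\finite(\sigma)$ is closed under "$\pI$ plays any legal move", i.e.\ if $\pi_{\le k}$ is a strict prefix of some element of $\finite(\sigma)$ and $\last(\pi_{\le k})\in\Gamma_1$, then every one-step extension is again a strict prefix of an element of $\finite(\sigma)$ (because the stopping condition---containing a good cycle---does not depend on $\pI$'s last move in a way that could be avoided). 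Each leaf $\mathbf n$ corresponds to $\pi_{\le n}$ with $\pi_{[m+1,n]}$ a good cycle, so $\last(\pi_{\le n})=\last(\pi_{\le m})$, giving the required ancestor with the same label; setting $\mathbf n\dashrightarrow\mathbf m$ with $\mathbf m$ the node of $\mathit{back}(\pi_{\le n})$ provides the witness. One must check this $\mathbf m$ is a genuine (internal, strict) ancestor of $\mathbf n$ in the tree, which is immediate since $\pi_{\le m}$ is a strict prefix of $\pi_{\le n}$.

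\medskip
\noindent\textbf{Goodness, and the main obstacle.} For goodness we must show that for every $\mathbf n\dashrightarrow\mathbf m$, writing $\rho=[\mathbf m\leadsto\mathbf n]$ and $\lambda(\hat\rho_j)=(s_j,c_j)$, we have $\frac{1}{\length(\rho)}\sum_{j=1}^{\length(\rho)}c_j\le t$. By construction $\rho$ corresponds exactly to the good cycle $\pi_{[m+1,n]}$, and by the observation at the start of Section~\ref{sec-algo} the average of the $c_j$ over a cycle is precisely its mean-payoff $\MP(\pi_{[m+1,n]})$, which is $\le t$ by the definition of "good cycle". This is the one place where the specific design of $\finite(\sigma)$---stopping precisely at a good cycle---pays off, so the only real content is making sure the "good cycle" in the definition of $\finite(\sigma)$ is literally the segment $[\mathbf m\leadsto\mathbf n]$; this is true by definition of $\mathit{back}$. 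I expect the genuinely delicate step to be the finiteness argument, specifically bounding the length of the good-cycle-free prefix $\pi_{\le m}$: one needs Proposition~\ref{coro-N} to produce a good cycle of length $\le N$ "early", but Proposition~\ref{coro-N} as stated only asserts existence of indices $i\le j$, not a uniform bound on $j$ itself. The fix is to track the quantitative thresholds in Lemmas~\ref{lemma-density} and~\ref{coro:density}: the density bound holds from some index $n$ onward, and $n$ can be taken uniform across outcomes of $\sigma$ only if we also bound the initial counter values, which are bounded here since the root is $(s_0,c_0)$ and we may invoke the argument iteratively; alternatively, one shows directly that a simple-cycle-free-over-$\Gamma^{\le t}$ prefix has bounded length and combines this with the density lower bound to force a good cycle within $\mathcal{O}(N\cdot\size{\Gamma^{\le t}})$ steps. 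I would present this as a short lemma "every outcome of $\sigma$ contains a good cycle of length $\le N$ among its first $\mathcal{O}(N^2\size S^2\cdots)$ edges" and cite it here.
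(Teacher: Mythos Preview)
Your arguments for the strategy-tree structure and for goodness are fine and coincide with the paper's (very brief) treatment: goodness is immediate because every back-edge $\mathbf n\dashrightarrow\mathbf m$ was defined precisely so that $[\mathbf m\leadsto\mathbf n]$ is a good cycle, hence has mean-payoff at most~$t$.

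The finiteness argument, however, is where you take an unnecessarily hard route and, as you yourself note, hit a genuine obstacle. You try to establish an \emph{explicit, uniform} bound on the length of every $\pi_{\le n}\in\finite(\sigma)$, i.e.\ a bound on~$m$ valid for all outcomes. Proposition~\ref{coro-N} does not give this: it only guarantees that \emph{some} good cycle of length at most~$N$ occurs along each winning play, with no control on where. Your proposed repairs (tracking the index~$n$ from Lemma~\ref{lemma-density}, or the density argument forcing a good cycle within $\mathcal{O}(N\cdot|\Gamma^{\le t}|)$ steps) are delicate because that index~$n$ in Lemma~\ref{lemma-density} depends on the particular play~$\pi$, not uniformly on~$\sigma$; pushing this through would require additional work that the lemma statements do not directly support.

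The paper avoids all of this with a one-line argument via K\H{o}nig's Lemma. The tree $\calT_\sigma$ is finitely branching (since each state of~$G$ has finitely many outgoing edges). If $\calT_\sigma$ were infinite, K\H{o}nig's Lemma would yield an infinite branch; by construction of $\finite(\sigma)$, such a branch corresponds to an infinite outcome of~$\sigma$ whose proper prefixes contain no good cycle, hence the outcome itself contains no good cycle. But $\sigma$ is winning, so $\MPsup$ of every outcome is at most~$t$, and Lemma~\ref{lemma:good_cycle} then guarantees a good cycle---contradiction. Thus no uniform depth bound is needed at this stage; ``every branch is finite'' plus finite branching already gives finiteness.
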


\begin{proof}
  If $\calT_\sigma$ were infinite, by K\H{o}nig's Lemma it would contain
  an infinite branch. This infinite branch would correspond to an
  infinite outcome of~$\sigma$ containing no good cycle, contradicting
  Lemma~\ref{lemma:good_cycle}. 

  The fact that the tree is good is because all prefixes selected by
  $\finite(\sigma)$ correspond to good cycles.
  \qed
\end{proof}

Applying Proposition~\ref{prop:goodtree-stratwinning}, we immediately get:
\begin{corollary}\label{coro-finite-mem}
  Strategy $\sigma_{\calT_{\sigma}}$ is a winning strategy from
  $(s_0,c_0)$.
\end{corollary}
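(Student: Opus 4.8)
The plan is simply to chain together the two results just established. First I would invoke Lemma~\ref{lemma:Tsigma}, which asserts that $\calT_\sigma$ --- the tree whose nodes are the prefixes of the plays in $\finite(\sigma)$, with a back-edge from each leaf to the node marking the start of the (minimal) good cycle it closes off --- is both a \emph{finite} and a \emph{good} strategy tree. Finiteness is obtained from K\H{o}nig's Lemma together with Lemma~\ref{lemma:good_cycle}: every infinite outcome of the winning strategy $\sigma$ must contain a good cycle, hence no branch of $\calT_\sigma$ can avoid one and grow infinitely, so the tree has no infinite branch and is finite. Goodness is immediate from the definition of $\finite(\sigma)$, since each back-edge $\mathbf n \dashrightarrow \mathbf m$ is witnessed by a good cycle, whose average of counter values along $[\mathbf m \leadsto \mathbf n]$ is by definition at most $t$.

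Next I would apply Proposition~\ref{prop:goodtree-stratwinning} essentially verbatim: any finite good strategy tree $\calT$ induces a winning strategy $\sigma_\calT$ from $\lambda(\nroot)$ in $G'$. Recall the reason: Proposition~\ref{prop:resume-decomp} writes the mean-payoff of any finite outcome $\mu$ as $\bigl(\sum_{i=0}^h \MP(C_i)\cdot\length(C_i) + \sum_{j=1}^{\length(\nu)} c_j\bigr)/\length(\mu)$, where the residual term $\sum_{j=1}^{\length(\nu)} c_j$ ranges over a single branch of the finite tree $\calT$ and is therefore bounded, while each closed cycle satisfies $\MP(C_i)\cdot\length(C_i)\le t\cdot\length(C_i)$ by goodness; passing to the $\limsup$ over an infinite outcome annihilates the bounded residual and leaves a value $\le t$, so $\sigma_\calT$ wins for $\MeanPayOff(t)$.

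It then only remains to observe that the root of $\calT_\sigma$ carries the label $(s_0,c_0)$ --- which holds by construction, as we set $\lambda(\nroot)=(s_0,c_0)$ in the definition of $\calT_\sigma$ --- so that $\sigma_{\calT_\sigma}$ is winning precisely from $(s_0,c_0)$, as claimed. There is no obstacle specific to this corollary; all the difficulty has been front-loaded into Lemma~\ref{lemma:Tsigma} and Proposition~\ref{prop:goodtree-stratwinning}. If I had to flag the one delicate point in the supporting material, it would be the finiteness argument of Lemma~\ref{lemma:Tsigma}: it is crucial that $\finite(\sigma)$ is defined so that each play prefix is cut off at the \emph{first} good cycle (and moreover a minimal one, with no good strict sub-cycle), as this is exactly what halts the growth of every branch and makes K\H{o}nig's Lemma applicable.
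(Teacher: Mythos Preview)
Your proposal is correct and matches the paper's approach exactly: the corollary is stated immediately after Lemma~\ref{lemma:Tsigma} with the one-line justification ``Applying Proposition~\ref{prop:goodtree-stratwinning}, we immediately get,'' and your argument is precisely this chaining (with some helpful unpacking of why those two results hold). There is nothing to add.
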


\begin{figure}[t]
  \centering
  \scalebox{1}{\begin{tikzpicture}[yscale=.5,xscale=.8]
    \path[use as bounding box] (-3,-.3) -- (9,-7.3);
    \draw (0,0) node[minirond,bleu] (root) {} node[right=3mm] {$\nroot$};
    \draw (-2,-1) node[minirond,gris] (s11) {} node[coordinate] (n11) {};
    \draw (-1,-2) node[minirond,rouge] (s22) {} node[coordinate] (n22) {};
    \draw (-2.5,-3) node[minirond,rouge] (s32) {} node[coordinate] (n32) {};
    \draw (-.5,-4) node[minirond,jaune] (s44) {} node[coordinate] (n44) {};
    \draw (-2.5,-4.5) node[minirond,gris] (s51) {} node[coordinate] (n51) {};
    \draw (-2.5,-5) node[minirond,bleu] (s52) {};
    \draw (-3.2,-6) node[minirond,rouge] (s62) {} node[coordinate] (n62) {};
    \draw (-3.2,-7) node[minirond,jaune] (s72) {} node[coordinate] (n72) {};
    \draw (-1.2,-6.5) node[minirond,jaune] (s73) {} node[coordinate] (n73) {};
    
    \begin{scope}[yshift=5mm]
    \draw (2.5,-2) node[minirond,gris] (s24) {};
    \draw (1.5,-3) node[minirond,gris] (s36) {};
    \draw (3.5,-3) node[minirond,rouge] (s38) {} node[coordinate] (n38) {};
    \draw (1,-4) node[minirond,rouge] (s45) {} node[coordinate] (n45) {};
    \draw (2.2,-4.5) node[minirond,rouge] (s46) {} node[coordinate] (n46) {};
    \draw (3.5,-4) node[minirond,gris] (s47) {} node[coordinate] (n47) {};
    \draw (3.2,-4.5) node[minirond,bleu] (s48) {};
    \draw (4,-5) node[minirond,gris] (s57) {} node[coordinate] (n57) {};
    \draw (4,-6) node[minirond,jaune] (s67) {} node[coordinate] (n67) {};
    \draw (1,-5.5) node[minirond,jaune] (s65) {} node[coordinate] (n65) {};
    \draw (2.2,-5.5) node[minirond,jaune] (s66) {} node[coordinate] (n66) {};
    \draw (3.2,-5.3) node[minirond,rouge] (s55) {} node[coordinate] (n55) {};
    \draw (3.2,-6) node[minirond,rouge] (s75) {} node[coordinate] (n75) {};
    \draw (3.2,-6.5) node[minirond,gris] (s85) {} node[coordinate] (n85) {};
    \draw (2.7,-7.7) node[minirond,jaune] (s90) {} node[coordinate] (n90) {};
    \draw (3.6,-7.3) node[minirond,jaune] (s91) {} node[coordinate] (n91) {};
    \end{scope}
    
    \draw (root) -- (s11);
    \draw (root) -- (s24);
    \draw (s11) -- (s22);
    \draw (s11) -- (s32);
    \draw (s24) -- (s36);
    \draw (s24) -- (s38);
    \draw (s22) -- (s44); % leaf
    \draw (s32) -- (s51);
    \draw (s51) -- (s52);
    \draw (s52) -- (s62);
    \draw (s51) -- (s73); %
    \draw (s62) -- (s72); %
    
    \draw (s36) -- (s45);
    \draw (s36) -- (s46);
    \draw (s38) -- (s47);
    \draw (s47) -- (s48);
    \draw (s47) -- (s57);
    \draw (s57) -- (s67); % leaf
    \draw (s45) -- (s65); % leaf
    \draw (s46) -- (s66); % leaf
    \draw (s48) -- (s55);
    \draw (s55) -- (s75);
    \draw (s75) -- (s85);
    \draw (s85) -- (s90);
    \draw (s85) -- (s91);

    \draw (s72) edge[dashed,-latex',draw=red!50!black,out=140,in=-140] (s62);
    \draw (s73) edge[dashed,-latex',draw=red!50!black,out=60,in=-50] (s32);
    \draw (s44) edge[dashed,-latex',draw=red!50!black,out=60,in=-50] (s22);
    \draw (s65) edge[dashed,-latex',draw=red!50!black,out=140,in=-140] (s45);
    \draw (s66) edge[dashed,-latex',draw=red!50!black,out=50,in=-50] (s46);
    \draw (s90) edge[dashed,-latex',draw=red!50!black,out=140,in=-140] (s75);
    \draw (s91) edge[dashed,-latex',draw=red!50!black,out=50,in=-50] (s55);
    \draw (s67) edge[dashed,-latex',draw=red!50!black,out=50,in=-50] (s38);

    \begin{scope}[line width=2.4mm,green!70!black!20!white,round cap-round cap,shorten >=-2mm,shorten <=-2mm]
    \draw (n32) -- (n51) -- (n73);
    \draw (n62) -- (n72);
    \draw (n22) -- (n44);
    \draw (n65) -- (n45);
    \draw (n66) -- (n46);
    \draw (n75) -- (n85) -- (n90);
    \draw (n55) -- (n85) -- (n91);
    \draw (n38) -- (n47) -- (n57) -- (n67);
    \end{scope}
    \draw (-2,-1) node[minirond,gris] (s11) {} node[coordinate] (n11) {};
    \draw (-1,-2) node[minirond,rouge] (s22) {} node[coordinate] (n22) {};
    \draw (-2.5,-3) node[minirond,rouge] (s32) {} node[coordinate] (n32) {};
    \draw (-.5,-4) node[minirond,jaune] (s44) {} node[coordinate] (n44) {};
    \draw (-2.5,-4.5) node[minirond,gris] (s51) {} node[coordinate] (n51) {};
    \draw (-2.5,-5) node[minirond,bleu] (s52) {};
    \draw (-3.2,-6) node[minirond,rouge] (s62) {} node[coordinate] (n62) {};
    \draw (-3.2,-7) node[minirond,jaune] (s72) {} node[coordinate] (n72) {};
    \draw (-1.2,-6.5) node[minirond,jaune] (s73) {} node[coordinate] (n73) {};
    
    \begin{scope}[yshift=5mm]
    \draw (2.5,-2) node[minirond,gris] (s24) {};
    \draw (1.5,-3) node[minirond,gris] (s36) {};
    \draw (3.5,-3) node[minirond,rouge] (s38) {} node[coordinate] (n38) {};
    \draw (1,-4) node[minirond,rouge] (s45) {} node[coordinate] (n45) {};
    \draw (2.2,-4.5) node[minirond,rouge] (s46) {} node[coordinate] (n46) {};
    \draw (3.5,-4) node[minirond,gris] (s47) {} node[coordinate] (n47) {};
    \draw (3.2,-4.5) node[minirond,bleu] (s48) {};
    \draw (4,-5) node[minirond,gris] (s57) {} node[coordinate] (n57) {};
    \draw (4,-6) node[minirond,jaune] (s67) {} node[coordinate] (n67) {};
    \draw (1,-5.5) node[minirond,jaune] (s65) {} node[coordinate] (n65) {};
    \draw (2.2,-5.5) node[minirond,jaune] (s66) {} node[coordinate] (n66) {};
    \draw (3.2,-5.3) node[minirond,rouge] (s55) {} node[coordinate] (n55) {};
    \draw (3.2,-6) node[minirond,rouge] (s75) {} node[coordinate] (n75) {};
    \draw (3.2,-6.5) node[minirond,gris] (s85) {} node[coordinate] (n85) {};
    \draw (2.7,-7.7) node[minirond,jaune] (s90) {} node[coordinate] (n90) {};
    \draw (3.6,-7.3) node[minirond,jaune] (s91) {} node[coordinate] (n91) {};
    \end{scope}
    \draw (root) -- (s11);
    \draw (root) -- (s24);
    \draw (s11) -- (s22);
    \draw (s11) -- (s32);
    \draw (s24) -- (s36);
    \draw (s24) -- (s38);
    \draw (s22) -- (s44); % leaf
    \draw (s32) -- (s51);
    \draw (s51) -- (s52);
    \draw (s52) -- (s62);
    \draw (s51) -- (s73); %
    \draw (s62) -- (s72); %
    
    \draw (s36) -- (s45);
    \draw (s36) -- (s46);
    \draw (s38) -- (s47);
    \draw (s47) -- (s48);
    \draw (s47) -- (s57);
    \draw (s57) -- (s67); % leaf
    \draw (s45) -- (s65); % leaf
    \draw (s46) -- (s66); % leaf
    \draw (s48) -- (s55);
    \draw (s55) -- (s75);
    \draw (s75) -- (s85);
    \draw (s85) -- (s90);
    \draw (s85) -- (s91);

    \draw (s72) edge[dashed,-latex',draw=red!50!black,out=140,in=-140] (s62);
    \draw (s73) edge[dashed,-latex',draw=red!50!black,out=60,in=-50] (s32);
    \draw (s44) edge[dashed,-latex',draw=red!50!black,out=60,in=-50] (s22);
    \draw (s65) edge[dashed,-latex',draw=red!50!black,out=140,in=-140] (s45);
    \draw (s66) edge[dashed,-latex',draw=red!50!black,out=50,in=-50] (s46);
    \draw (s90) edge[dashed,-latex',draw=red!50!black,out=140,in=-140] (s75);
    \draw (s91) edge[dashed,-latex',draw=red!50!black,out=50,in=-50] (s55);
    \draw (s67) edge[dashed,-latex',draw=red!50!black,out=50,in=-50] (s38);

    \begin{scope}[xshift=5cm,yscale=.8]
      \draw (0,0) node[minirond,jaune] {}
        node[right=2mm, text width=2cm] {leaf};
      \draw (0,-1) node[minirond,rouge] {}
        node[right=2mm, text width=3cm] {start of good cycle};
      \draw (0,-2) node[minirond,bleu] {}
        node[right=2mm, text width=2cm] {critical node};
      \draw[dashed,-latex',draw=red!50!black] (-.3,-2.9) -- +(.6,0)
        node[right, text width=3cm] {backward edge};
      \draw[line width=1.4mm,green!70!black!20!white] (-.3,-3.9) -- +(.6,0)
        node[right, text width=2cm,black] {good cycle};
    \end{scope}
  \end{tikzpicture}}
  \caption{Example of a finite strategy tree, with backward edges and
    critical nodes.}\label{fig-critnodes}
\end{figure}
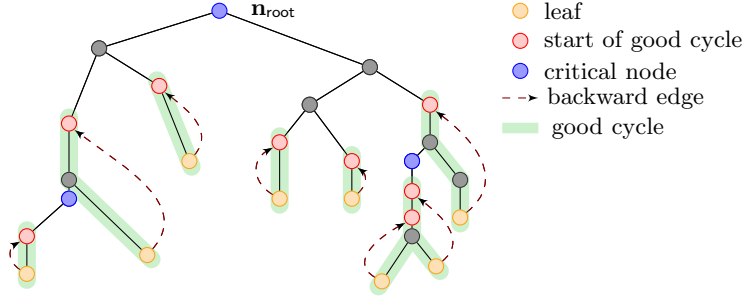

Let $\mathbf{n} \dashrightarrow \mathbf{n'}$ be two related nodes
in~$\calT_\sigma$. We say that a node~$\mathbf{n''}$ is \emph{just
  below} $[\mathbf{n'} \leadsto \mathbf{n}]$ in~$\calT_\sigma$
whenever its predecessor appears along $[\mathbf{n'} \leadsto
  \mathbf{n}]$, but node~$\mathbf{n''}$ itself does not appear along
any path $[\mathbf{n_1} \leadsto \mathbf{n_2}]$ for which
$\mathbf{n_2} \dashrightarrow \mathbf{n_1}$. Such nodes, together with
the root of the tree, are called the \emph{critical}
nodes. Fig.~\ref{fig-critnodes} illustrates this definition.

\begin{restatable}{lemma}{lemmaboundcrit}
  If $\mathbf{n}$ is a critical node in~$\calT_\sigma$, then writing
  $\lambda(\mathbf{n}) = (s,c)$, we have that $c \le t + W \cdot
  (N+1)$.
\end{restatable}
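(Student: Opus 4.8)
The plan is to split on whether $\mathbf{n}$ is the root of $\calT_\sigma$ or a critical node hanging just below a good-cycle segment. For the root, $\lambda(\nroot)$ is the initial configuration, whose counter value is $0$; since $\sigma$ is winning for $\MeanPayOff(t)$ and every configuration of $G'$ other than the sink $\bot$ (which no outcome of $\sigma$ visits) carries a non-negative counter, every outcome has non-negative mean-payoff, so $t\ge 0$ and hence $0\le t\le t+W\cdot(N+1)$. From now on I would assume $\mathbf{n}$ is critical but not the root.

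By definition, such an $\mathbf{n}$ is \emph{just below} some segment $[\mathbf{a}\leadsto\mathbf{b}]$ with $\mathbf{b}\dashrightarrow\mathbf{a}$; that is, the $\calE$-predecessor $\mathbf{p}$ of $\mathbf{n}$ lies on $[\mathbf{a}\leadsto\mathbf{b}]$. The first step is to observe that $[\mathbf{a}\leadsto\mathbf{b}]$ corresponds in $G'$ to a \emph{reachable good cycle with no good strict sub-cycle}: it is reachable since nodes of $\calT_\sigma$ are prefixes of outcomes of $\sigma$, and the "good cycle with no good strict sub-cycle" property is immediate from the construction of $\calT_\sigma$, whose back-edges are created exactly for the prefixes in $\finite(\sigma)$, together with the definition of $\finite(\sigma)$. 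Consequently Lemma~\ref{lemma:short} applies and $\length([\mathbf{a}\leadsto\mathbf{b}])\le N$. The second step is a triangle-inequality estimate on the counter: writing $\lambda(\mathbf{a})=(s_a,c_a)$, the definition of a good cycle gives $c_a\le t$, and each step along $[\mathbf{a}\leadsto\mathbf{b}]$ changes the counter by a weight of $G$, hence by at most $W$ in absolute value. Since $\mathbf{b}$ is a leaf (being the source of a back-edge) while $\mathbf{p}$ has a child in $\calT_\sigma$, we have $\mathbf{p}\ne\mathbf{b}$, so $\mathbf{p}$ is at distance at most $N-1$ from $\mathbf{a}$ along the segment, and its counter value is at most $t+W\cdot(N-1)$; since $\mathbf{n}$ is obtained from $\mathbf{p}$ by one further $G'$-edge, writing $\lambda(\mathbf{n})=(s,c)$ we obtain $c\le t+W\cdot(N-1)+W\le t+W\cdot(N+1)$.

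The whole argument is essentially bookkeeping; the only genuinely non-routine point is ensuring Lemma~\ref{lemma:short} is applicable here, i.e., that the portion of $\calT_\sigma$ lying between two $\dashrightarrow$-related nodes really is a reachable good cycle with no good strict sub-cycle — which is exactly how $\finite(\sigma)$, and hence the back-edges of $\calT_\sigma$, were defined (this is also what underlies Lemma~\ref{lemma:Tsigma}). The small care needed is to exclude the leaf endpoint of the segment from being $\mathbf{p}$, which is why the "$+1$" of slack in the stated bound is comfortable.
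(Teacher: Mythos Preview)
Your argument is correct and follows essentially the same approach as the paper's proof: bound the length of the good-cycle segment by $N$, note that its starting configuration has counter $\le t$, and then use that each step changes the counter by at most $W$. The paper cites Proposition~\ref{coro-N} rather than Lemma~\ref{lemma:short} directly and does not explicitly separate out the root case, so your version is in fact slightly more careful (and your observation that $\mathbf{p}\ne\mathbf{b}$ actually yields the tighter bound $t+W\cdot N$, well within the stated $t+W\cdot(N+1)$).
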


\begin{proof}
By~Proposition~\ref{coro-N}, for every $\mathbf{n} \dashrightarrow
\mathbf{n'}$, the length of the prefix $[\mathbf{n'}\leadsto
  \mathbf{n}]$ in~$\calT_\sigma$ is at most~$N$. So the length of the prefix from~$\mathbf n'$ to a critical node is at most~$N+1$. Since node~$\mathbf n'$ corresponds to the beginning of a good cycle, we have $\lambda(\mathbf n')=(s',c')$ with $c'\leq t$. Finally, since edges in~$G'$ have weight at most~$W$, we~get our result.\qed
\end{proof}

Given a critical node~$\mathbf{n}$, we define
\begin{multline*}
  \target(\mathbf{n}) = \{\mathbf{n'} \text{ in subtree of }\mathbf{n}\mid
   \text{there exists $\mathbf{n''}$ such that }\mathbf{n''}\dashrightarrow \mathbf{n'} \\
  \text{and}\ [\mathbf{n} \leadsto \mathbf{n'}]\ \text{contains no other such
    node}\}.
\end{multline*}
Looking again at Fig.~\ref{fig-critnodes}, the targets of a critical
node are the start nodes of the good cycles that are closest to that
critical node.
In~particular, for the rightmost critical node on
Fig.~\ref{fig-critnodes}, there are two candidate target nodes
(because there are two overlapping good cycles), but only the topmost
one is a target.

For every critical node~$\mathbf{n}$, we apply
Lemma~\ref{lemma:pushdown_games} with $\gamma = \lambda(\mathbf{n})$
and $\Gamma'_\gamma = \{\gamma' = \lambda(\mathbf{n'}) \mid
\mathbf{n'} \in \target(\mathbf{n})\}$, setting $M = t + W \cdot
(N+1)$.  We write $\sigma_{\mathbf{n}}$ for the corresponding
strategy: applying $\sigma_{\mathbf{n}}$ from $\lambda(\mathbf{n})$,
player $P_0$ will reach some configuration $(s',c')$ such that there
is a node $\mathbf{n'} \in \target(\mathbf{n})$ with
$\lambda(\mathbf{n'}) = (s',c')$.

\looseness=-1
Now, for any node $\mathbf{n'}$ that is the target of a backward edge
$\mathbf{n} \dashrightarrow \mathbf{n'}$, but whose immediate
predecessor does not belong to any segment $[\mathbf{n_1} \leadsto
  \mathbf{n_2}]$ with $\mathbf{n_2} \dashrightarrow \mathbf{n_1}$, we
define strategy $\sigma_{[\mathbf{n'}]}$ which follows good cycles as
much as possible; when a leaf~$\mathbf m$ is reached, the strategy
replays similarly as from the equivalent node $\mathbf{m'}$ for which
$\mathbf m \dashrightarrow \mathbf{m'}$.
If,~while playing that strategy, the play ever leaves a good cycle 
(due to a move of player~$P_1$), then it reaches
a critical node $\mathbf{n''}$. From that node, we will apply strategy
$\sigma_{\mathbf{n''}}$ as defined above, and iterate like this.

This defines a strategy~$\sigma'$. Applying
Lemma~\ref{lemma:pushdown_games} to strategies
$\sigma_{\mathbf{n}}$ when $\mathbf{n}$ is critical, and the previous
analysis of good cycles, we get the following doubly-exponential bound
on the counter value (which is only exponential in case constants $W$, $t_1$, and~$t_2$ are encoded in unary):

\begin{restatable}{proposition}{propboundM}
\label{prop:boundM}
  Strategy $\sigma'$ is a winning strategy from $(s_0,c_0)$, and all
  visited configurations $(s,c)$ when applying $\sigma'$ are such that
  $c \le M'$ with 
  \[
  M' = 2^{\mathcal{O}(t+W \cdot (8 t_1 t_2 (t+1)^3 |S|^2+1) + \size{S}
    + \size{E} \cdot W + \size{S} \cdot (\lceil t\rceil +1))}.
  \]
\end{restatable}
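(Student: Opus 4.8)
The plan is to realize $\sigma'$ as the strategy $\sigma_{\calT'}$ induced by a \emph{finite good} strategy tree $\calT'$, obtained from $\calT_\sigma$ by replacing every ``transient'' piece — a part of the tree not lying on a segment $[\mathbf{n'}\leadsto\mathbf{n}]$ with $\mathbf{n}\dashrightarrow\mathbf{n'}$ — by a bounded-counter reachability sub-strategy provided by Lemma~\ref{lemma:pushdown_games}. Once $\calT'$ is built, that $\sigma'$ is winning is immediate from Proposition~\ref{prop:goodtree-stratwinning}, and the bound on the counter comes from the uniform bound~$M'$ on these sub-strategies together with the bound on the counter at critical nodes.

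First I~would check that Lemma~\ref{lemma:pushdown_games} applies at every critical node~$\mathbf{n}$ of~$\calT_\sigma$. By the bound on critical nodes established above, $\lambda(\mathbf{n})=(s,c)$ with $c\le M:=t+W\cdot(N+1)$, and $\Gamma'_{\mathbf n}:=\{\lambda(\mathbf{n'})\mid\mathbf{n'}\in\target(\mathbf{n})\}\subseteq\Gamma^{\le t}$ since targets start good cycles. It remains to see that $\pO$ \emph{can} reach $\Gamma'_{\mathbf n}$ from $\lambda(\mathbf n)$: every branch of $\calT_\sigma$ below~$\mathbf n$ ends at a leaf~$\mathbf m$ whose back-edge points to an ancestor~$\mathbf{m'}$ that starts a good cycle, and since $\mathbf n$ is critical — hence lies on no good-cycle segment, or is the root — $\mathbf{m'}$ cannot be a strict ancestor of~$\mathbf n$, so $\mathbf{m'}$ sits on that branch below~$\mathbf n$; the first start of a good cycle met on the branch then belongs to $\target(\mathbf n)$, and therefore the sub-tree of~$\mathbf n$ in~$\calT_\sigma$ already witnesses a strategy reaching $\Gamma'_{\mathbf n}$. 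Lemma~\ref{lemma:pushdown_games}, applied with this~$M$, yields $\sigma_{\mathbf n}$ reaching $\Gamma'_{\mathbf n}$ from $\lambda(\mathbf n)$ while never exceeding counter value $M'=2^{\mathcal O(M+\size{S}+\size{E}\cdot W+\size{S}\cdot(\lceil t\rceil+1))}$, with the same $M'$ for every~$\mathbf n$; I~additionally take $\sigma_{\mathbf n}$ to be an attractor strategy in the finite game obtained by capping the counter at~$M'$, so that it avoids~$\bot$, reaches its target within $\size{S}\cdot(M'+1)$ steps, and is finitely representable as a tree.

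Next I~would assemble $\calT'$ by structural recursion on $\calT_\sigma$. The root being critical, replace the initial transient part of~$\calT_\sigma$ (down to the first nodes of $\target(\nroot)$) by the finite tree of~$\sigma_{\nroot}$; onto each of its leaves, labelled $\lambda(\mathbf t)$ for some $\mathbf t\in\target(\nroot)$, graft verbatim the good-cycle segment $[\mathbf t\leadsto\mathbf m]$ of~$\calT_\sigma$ and its back-edge $\mathbf m\dashrightarrow\mathbf t$, then recurse into the finitely many critical nodes lying just below that segment. The recursion strictly descends in the finite tree~$\calT_\sigma$, so~$\calT'$ is finite; every node inherits its branching behaviour from~$\calT_\sigma$ or from some~$\sigma_{\mathbf n}$, so~$\calT'$ is a legal strategy tree; its back-edges are exactly those of~$\calT_\sigma$ with unchanged good cycles, so~$\calT'$ is \emph{good}; and every node of~$\calT'$ carries a configuration of counter at most~$M'$ — inside a kept good-cycle segment the counter is $\le t+N\cdot W\le M$ (segments start below~$t$, have length $\le N$, and weights are $\le W$), the lone extra edge from such a segment into a critical node lands at counter $\le M$ by the critical-node bound, and inside the reachability trees the counter is $\le M'$ by construction, while $M\le M'$.

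Finally, $\sigma':=\sigma_{\calT'}$ is precisely the strategy described before the statement; Proposition~\ref{prop:goodtree-stratwinning} shows it is winning from $(s_0,c_0)$ — here the finiteness of~$\calT'$ is what bounds the ``current partial path'' term in the decomposition of Proposition~\ref{prop:resume-decomp}, so that $\MPsup\le t$ along every outcome — and the counter bound above shows every configuration $(s,c)$ visited by~$\sigma'$ has $c\le M'$. Substituting $N=8t_1t_2(t+1)^3\size{S}^2$ and $M=t+W\cdot(N+1)$ into the expression for~$M'$ yields exactly the displayed formula. I~expect the surgery producing~$\calT'$ to be the delicate point: one must ensure that replacing transient pieces neither creates new back-edges nor destroys existing ones (so that goodness, hence winning, is preserved) and that the recursion terminates (so that~$\calT'$ stays finite), while keeping track that the only configurations living outside a reachability tree or a good-cycle segment — those on the single edges into critical nodes — stay bounded by~$M$; granting this, the winning claim and the counter bound both follow at once from the already-established results.
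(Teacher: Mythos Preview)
Your proposal is correct and follows essentially the same approach as the paper: replace the transient parts of~$\calT_\sigma$ between critical nodes and their targets by the bounded-counter reachability strategies~$\sigma_{\mathbf n}$ from Lemma~\ref{lemma:pushdown_games}, keep the good-cycle segments intact, and read off both the winning property and the counter bound~$M'$. The paper in fact gives no formal proof beyond the description of~$\sigma'$ preceding the statement; your explicit packaging of~$\sigma'$ as~$\sigma_{\calT'}$ for a finite good tree~$\calT'$, so that Proposition~\ref{prop:goodtree-stratwinning} applies directly, is a legitimate and slightly more careful rendering of that sketch.
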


\subsection{Conclusion}

Gathering everything we have done above, we get the following equivalence.

\begin{proposition}
  Player $\pO$ has a winning strategy in game $G$ from $\initState$
  for the objective $\AvgLower(t)$ if, and only if, he has a wining
  strategy in $G$ from $\initState$ for the objective
  $\AvgLowerUpper(t,U) =\LUBound(U) \cap
  \AvgEnergyLevel(t)$, where $U = M'$ is the bound from Proposion~\ref{prop:boundM}.
\end{proposition}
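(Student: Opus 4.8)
The plan is to prove the two implications separately; the forward one is immediate and the backward one is a matter of assembling the machinery built in Sections~\ref{subsec:reductionToMP}--\ref{ssec-analysis}. For the easy direction ($\Leftarrow$), note that $\LUBound(U) \subseteq \LBound$ for every $U \in \bbN$, so $\AvgLowerUpper(t,U) = \LUBound(U) \cap \AvgEnergyLevel(t) \subseteq \LBound \cap \AvgEnergyLevel(t) = \AvgLower(t)$; hence any strategy winning from $\initState$ for $\AvgLowerUpper(t,U)$ is a fortiori winning for $\AvgLower(t)$.

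For the backward direction ($\Rightarrow$), assume $\pO$ wins from $\initState$ for $\AvgLower(t)$ in $G$. By Lemma~\ref{lem:reducToMP}, $\pO$ then has a winning strategy $\sigma$ in $G'$ from $(\initState,0)$ for $\MeanPayOff(t)$. I would feed this $\sigma$ into the construction of Section~\ref{ssec-analysis}: form the strategy tree $\calT_\sigma$, which is finite and good by Lemma~\ref{lemma:Tsigma}, and then the modified strategy $\sigma'$ obtained by following good cycles as prescribed by $\calT_\sigma$ and, at every critical node $\mathbf{n}$, switching to the energy-bounded reachability strategy $\sigma_{\mathbf{n}}$ supplied by Lemma~\ref{lemma:pushdown_games} (with target set $\target(\mathbf{n})$, i.e. the starts of the closest good cycles, and parameter $M = t + W\cdot(N+1)$), iterating this whenever $\pI$ forces the play out of a good cycle. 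Proposition~\ref{prop:boundM} then yields exactly what is needed: $\sigma'$ is winning for $\MeanPayOff(t)$ in $G'$, and every configuration $(s,c)$ reachable under $\sigma'$ satisfies $c \le M'$.

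It remains to push $\sigma'$ back through the reduction. As $\sigma'$ is winning for $\MeanPayOff(t)$, no outcome of $\sigma'$ may visit $\bot$ — otherwise the play would loop in $\bot$ with weight $\lceil t\rceil + 1$, giving mean-payoff $\lceil t\rceil + 1 > t$ — so the bijection $\iota$ between $\bot$-free plays of $G'$ and plays of $G$ applies to every outcome of $\sigma'$, and I would let $\tau$ be the strategy in $G$ that mimics $\sigma'$ along it, exactly as in the converse direction of the proof of Lemma~\ref{lem:reducToMP}. By construction of $\Delta$, for any outcome $\pi$ of $\tau$ the $n$-th configuration of $\iota(\pi)$ is $(\hat\pi_n, \EL(\pi_{\le n}))$ with $\EL(\pi_{\le n}) \ge 0$; combined with the bound $c \le M'$ from Proposition~\ref{prop:boundM}, this gives $\EL(\pi_{\le n}) \in [0,M']$ for all $n$, while $\AEsup(\pi) = \MPsup(\iota(\pi)) \le t$ by the properties of $\iota$. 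Hence every outcome of $\tau$ lies in $\LUBound(M') \cap \AvgEnergyLevel(t) = \AvgLowerUpper(t,M')$, so $\tau$ is winning for $\AvgLowerUpper(t,U)$ with $U = M'$.

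The proposition carries no real difficulty of its own; all the work has been discharged into Lemma~\ref{lemma:pushdown_games}, Proposition~\ref{coro-N}, and Proposition~\ref{prop:boundM}. The only step deserving care is the last one: one must check that the uniform bound on the counter value in $G'$ — where the counter is merely a bookkeeping copy of the energy level introduced by the reduction — genuinely translates into a uniform upper bound on the energy level along all outcomes in $G$, which works precisely because $\sigma'$ avoids the sink $\bot$ and because $\iota$, restricted to $\bot$-free plays, is a length- and energy-level-preserving bijection.
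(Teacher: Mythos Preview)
Your proof is correct and follows exactly the approach the paper intends: the paper does not spell out a proof of this proposition at all, merely introducing it with ``Gathering everything we have done above, we get the following equivalence'', so your write-up is simply a faithful unpacking of that sentence via Lemma~\ref{lem:reducToMP} and Proposition~\ref{prop:boundM}. There is nothing to compare or correct.
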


Hence we can use the algorithm for games with objectives
$\AvgLowerUpper(t,U)$ in~\cite{BMRLL16}, which is polynomial  in $\size{S}$, $\size{E}$, $t$, and $U$ (hence pseudo-polynomial only). Having in mind that the upper bound $U$ is doubly-exponential, we can deduce our main
decidability result. The memory required is also a consequence
of~\cite{BMRLL16}.

\begin{theorem}
  The AEL threshold problem is in \EXPTIME[2].  Furthermore
  dou\-bly-exponential memory is sufficient to win (for player $\pO$).
\end{theorem}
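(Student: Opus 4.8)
The plan is simply to assemble the pieces developed in Sections~\ref{subsec:reductionToMP}--\ref{ssec-analysis}. Combining Proposition~\ref{prop:boundM} with the equivalence stated just above it, Player~$\pO$ has a winning strategy from $\initState$ in $G$ for $\AvgLower(t)$ if, and only if, he has one for the lower- and upper-bounded objective $\AvgLowerUpper(t,U)=\LUBound(U)\cap\AvgEnergyLevel(t)$, where $U=M'$ is the bound of Proposition~\ref{prop:boundM}. So the AEL threshold problem reduces, in one step, to an instance of the AELU threshold problem in which the upper bound $U$ is given explicitly (though only in value, not succinctly), and it suffices to invoke the known algorithm for AELU games from~\cite{BMRLL16}.

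First I would recall that algorithm: one expands $G$ into a finite weighted arena that tracks the current energy level inside $[0,U]$ in its state space (routing to a losing sink as soon as the level leaves this interval), weighting each edge by the energy level of its target; the AELU objective then becomes a mean-payoff objective on this finite arena, which has $\mathcal{O}(\size{S}\cdot(U+1))$ states and edge weights bounded by $U$. Solving this mean-payoff game (e.g.\ by a pseudo-polynomial algorithm) runs in time polynomial in $\size{S}$, $\size{E}$, $t$, and $U$, i.e.\ pseudo-polynomial in the binary-encoded input; this is exactly the complexity quoted for AELU in~\cite{BMRLL16}. Likewise, \cite{BMRLL16} bounds the memory needed to win an AELU game by a quantity polynomial in $\size{S}$ and $U$.

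Second, I would substitute the size of $U$. By Proposition~\ref{prop:boundM}, $U=M'=2^{\mathcal{O}(\cdot)}$ with an exponent that is polynomial in $\size{S}$, $\size{E}$, $W$, $t_1$, $t_2$, hence polynomial in the length of the binary encoding of the input; thus $U$ is doubly-exponential in the input size, while $t$ itself is only (singly) exponential. Feeding a doubly-exponential value of $U$ (and an exponential value of $t$) into a running time that is polynomial in $U$ and $t$ still yields a doubly-exponential running time, so the AEL threshold problem is in $\EXPTIME[2]$. For the memory bound, the memory structure for $\AvgLowerUpper(t,U)$, of size polynomial in $U$, is doubly-exponential; transported back through the equivalence it gives a doubly-exponential winning strategy for $\AvgLower(t)$. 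Alternatively one can read the memory bound off Proposition~\ref{prop:boundM} directly: the strategy $\sigma'$ only visits configurations of $G'$ with counter value at most $M'$, so it lives in the finite restriction of $G'$ to $\Gamma^{\le M'}$, whose size is doubly-exponential.

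All the genuine difficulty has already been discharged in Proposition~\ref{prop:boundM} and the analysis of good cycles and strategy trees preceding it; the remaining step is essentially a bookkeeping argument. The only point requiring care is checking that every input parameter---$t$ presented as a binary fraction $\frac{t_1}{t_2}$, the weight bound $W$, and the sizes $\size{S}$, $\size{E}$---enters the bound on $U$ only inside an exponent that is polynomial in the input length, so that $U$ is genuinely at most doubly-exponential and $t$ contributes at most one further exponential, which is absorbed into the same doubly-exponential estimate for both time and memory.
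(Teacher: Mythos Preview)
Your approach is exactly the paper's: reduce $\AvgLower(t)$ to $\AvgLowerUpper(t,U)$ with $U=M'$ from Proposition~\ref{prop:boundM}, then invoke the pseudo-polynomial AELU algorithm of~\cite{BMRLL16}; the doubly-exponential size of $U$ yields the $\EXPTIME[2]$ bound and the memory bound. One wording slip: you write that the exponent in $M'$ is ``polynomial in the length of the binary encoding of the input'', but that would make $U$ only singly exponential; the exponent is polynomial in the \emph{values} $W,t_1,t_2$ (hence exponential in the input length), which is precisely why $U$ is doubly-exponential---your conclusion is right, only the intermediate phrase needs fixing.
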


In~\cite{Hunter14arxiv}, a super-exponential lower bound is given for
the required memory to win a succinct one-counter game. While the
model of games is not exactly the same, the actual family of games
witnessing that lower bound on the memory happens to be usable as well
for the AEL threshold problem (with threshold zero). The reduction is similar to the one in the proof of Theorem~\ref{thm:hardness}. This yields a
lower bound on the required memory to win games with
$\AvgEnergyLevel(t)$ objectives which is $2^{({2^{\sqrt{n}}}/{\sqrt{n}})-1}$.

For unary encodings or small weights we get better results from our technique:

\begin{corollary}
The AEL threshold problem is in \EXPTIME and
  exponential memory is sufficient to win (for player $\pO$), if the weights and the threshold are encoded in unary or polynomial in the size of the graph.
\end{corollary}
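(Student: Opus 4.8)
The plan is to revisit the chain of bounds obtained in Sections~\ref{sec-bounding}--\ref{ssec-analysis} and observe that, once $W$, $t_1$ and $t_2$ are polynomial in $|S|$ (equivalently, encoded in unary, so that their \emph{values} are polynomial in the total input length~$n$), every quantity that was previously doubly-exponential drops by one exponential level. Concretely I would proceed in three steps: (i)~track these parameters through Lemma~\ref{lemma:short}, Proposition~\ref{coro-N} and Lemma~\ref{lemma:pushdown_games} down to Proposition~\ref{prop:boundM}; (ii)~deduce a \emph{singly}-exponential upper bound $U = M'$ on the energy level needed to win; (iii)~re-run the algorithm of~\cite{BMRLL16} for the objective $\AvgLowerUpper(t,U)$, whose pseudo-polynomial complexity now yields an exponential-time procedure and an exponential-size memory structure.

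For step~(i): since $t \le t_1$ and $t_1$, $t_2$, $|S|$ are all bounded by $\mathrm{poly}(n)$, the good-cycle length bound $N = 8 t_1 t_2 (t+1)^3 |S|^2$ of Lemma~\ref{lemma:short} (hence of Proposition~\ref{coro-N}) is $\mathrm{poly}(n)$. Therefore the parameter $M = t + W\cdot(N+1)$ with which Lemma~\ref{lemma:pushdown_games} is instantiated at the critical nodes is also $\mathrm{poly}(n)$, and so is the entire exponent
\[
t + W\cdot\bigl(8 t_1 t_2 (t+1)^3 |S|^2 + 1\bigr) + |S| + |E|\cdot W + |S|\cdot(\lceil t\rceil + 1)
\]
occurring in Proposition~\ref{prop:boundM} (using also $W, |E| \le \mathrm{poly}(n)$). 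Consequently the uniform bound on the counter value provided by Proposition~\ref{prop:boundM} is $M' = 2^{\mathrm{poly}(n)}$, i.e., only singly exponential.

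For steps~(ii)--(iii): by the equivalence Proposition of Section~\ref{sec-algo}, a winning strategy for $\AvgLower(t)$ in $G$ from $\initState$ exists if and only if one exists for $\AvgLowerUpper(t,U)$ in $G$ with $U = M'$. The algorithm of~\cite{BMRLL16} decides the latter in time polynomial in $|S|$, $|E|$, $t$ and $U$; substituting $|S|, |E|, t \le \mathrm{poly}(n)$ and $U = 2^{\mathrm{poly}(n)}$ gives running time $2^{\mathrm{poly}(n)}$, so the AEL threshold problem is in \EXPTIME. Moreover the winning strategy obtained from that reduction is a memoryless mean-payoff strategy on the energy-expanded arena, which as a strategy on $G$ amounts to tracking the current energy level in $[0,U]$; it therefore uses a finite-memory structure of size $O(|S|\cdot U) = 2^{\mathrm{poly}(n)}$, yielding the claimed exponential memory upper bound. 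The whole argument is essentially bookkeeping on top of the machinery already developed; the only point that deserves care is to confirm that the exponent in Lemma~\ref{lemma:pushdown_games} and Proposition~\ref{prop:boundM} is genuinely a \emph{polynomial} in $W$, $t_1$, $t_2$, $|S|$ and $|E|$ (and not, say, exponential in $W$), so that plugging in the unary-encoding hypothesis collapses $M'$ to a single exponential rather than leaving a double one.
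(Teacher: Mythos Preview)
Your proposal is correct and follows exactly the route the paper intends: the corollary is stated without proof, relying on the parenthetical remark just before Proposition~\ref{prop:boundM} that the bound~$M'$ ``is only exponential in case constants $W$, $t_1$, and~$t_2$ are encoded in unary''; your three steps make that bookkeeping explicit. The only minor comment is that the memory bound for $\AvgLowerUpper$ in~\cite{BMRLL16} is pseudo-polynomial in~$U$ (not necessarily exactly $O(|S|\cdot U)$, since the reduction to mean-payoff passes through an intermediate arena), but this still yields $2^{\mathrm{poly}(n)}$ once $U = 2^{\mathrm{poly}(n)}$, so your conclusion stands.
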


\section{EXPSPACE-hardness}
\label{sec-hard}
In this section, we show that the AEL threshold problem is $\EXPSPACE$-hard by a reduction from succinct one-counter reachability games with counter values in $\mathbb{N}$, thereby improving the previously best $\EXPTIME{}$ lower bound. Such a game is played in a graph~$(S_0, S_1, E)$ as defined above, but without parallel edges. However, the semantics are slightly different: assume a play prefix~$\rho$ has been produced by the players thus far. Then, the player whose turn it is may only pick an edge~$e$ such that $\EL(\rho) + \weg(e) \ge 0$, i.e., edges that would lead to a negative energy level are disabled. If the player has no enabled edges available, then he loses immediately. Note that this indeed differs from the semantics of the games we consider here, where a negative energy level is a direct loss for $\pO$, no matter who picked the edge leading to the negative level. 
The goal of $\pO$ in a succinct one-counter game for is to reach, from a given initial state~$\initState \in S$ with energy level zero, a state~$s \neq \initState$ with energy level zero.

\begin{proposition}[\cite{Hun15}]
The following problem is $\EXPSPACE{}$-complete: given a graph~$(S_0, S_1, E)$ without parallel edges and a vertex~$\initState \in S$, has $\pO$ a strategy (respecting the blocking semantics described above) so that every outcome is infinite and has a prefix that ends in a state~$s \neq \initState$ and has energy level zero.
\end{proposition}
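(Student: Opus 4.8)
The claim combines two parts, a lower bound (\EXPSPACE{}-hardness) and an upper bound (membership in \EXPSPACE{}), and I would establish them separately. For hardness, the plan is to reduce from a canonical \EXPSPACE{}-complete problem, namely acceptance of an input word by a (deterministic) Turing machine that is promised to use at most exponentially many tape cells. The two players of the succinct one-counter game play a \emph{Prover}/\emph{Refuter} protocol: $\pO$, the Prover, claims the machine accepts and plays out an encoding of its computation, while $\pI$, the Refuter, is given the power to \emph{challenge} the Prover --- for instance by spelling out, bit by bit, a tape position or a time step about whose consistency he disputes --- and $\pO$ must answer. The counter is the medium in which the (exponentially long, hence numerically doubly-exponential) encodings are built up, manipulated and cross-checked; the point of using \emph{succinct}, binary-encoded updates is precisely that they let the gadget perform loops and comparisons on such large values in polynomially many moves, while the game structure lets the Refuter contest the Prover's implicit loop counts and intermediate values rather than recomputing them. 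The blocking semantics (``no enabled edge'' $\Rightarrow$ immediate loss) is the device that penalizes whichever player makes an illegal move, and reaching ``some state $s\neq\initState$ with energy level zero'' is the terminal acknowledgment of a faithfully completed, accepting simulation. Constructing such a polynomial-size gadget and proving it correct is delicate but follows the by-now-standard Lipton-style template for counter-machine lower bounds; I would take the construction from~\cite{Hun15}.

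For membership in \EXPSPACE{}, the game is a one-counter game on the configuration graph $S\times\bbN$, with the combined objective ``the play is infinite and visits some $(s,0)$ with $s\neq\initState$''. The naive approaches do not suffice on their own: truncating the counter at a doubly-exponential bound (after arguing that $\pO$, if he wins, can win while keeping the counter that low) leaves a doubly-exponentially large arena, and reducing to a pushdown parity game and applying Walukiewicz's algorithm~\cite{Walukiewicz01} is just as expensive, so both give only doubly-exponential time. The key is instead a structural analysis of the winning region: viewing, for each control state, the set of counter values from which $\pO$ wins as a subset of $\bbN$, one shows this set is eventually regular --- intuitively, once the counter is ``large'' only finitely much information about it matters --- with a degree of periodicity that, although doubly-exponential because the binary-encoded weight magnitude~$W$ inflates the effective number of relevant control states, is regular enough to be described and computed using only exponentially many bits. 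Deciding whether the initial configuration $(\initState,0)$ lies in this region can then be carried out by an exponential-space fixpoint / attractor computation performed symbolically on these eventually-regular sets (or by guessing the relevant parameters and verifying the guess in \EXPSPACE{}).

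The step I expect to be the main obstacle is exactly this last one: pinning down the eventual regularity of the winning region with quantitative bounds precise enough to yield an \EXPSPACE{} (rather than doubly-exponential-time) decision procedure, and checking that the symbolic representation and manipulation of the winning region stay within exponential space. The hardness direction, though technically involved, is conceptually routine given standard exponential-space simulation gadgets; the real work is on the algorithmic side, and for it I would rely on the detailed treatment of~\cite{Hun15}.
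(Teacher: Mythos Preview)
The paper does not prove this proposition at all: it is stated with the citation~\cite{Hun15} and used as a black box for the subsequent reduction to the AEL threshold problem. There is therefore no ``paper's own proof'' to compare your proposal against; the paper's entire argument for this statement is the reference. Your sketch is a plausible outline of what a proof in the style of~\cite{Hun15} would look like, and indeed you yourself defer both directions to that reference, which is exactly what the paper does.
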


We reduce this problem to the AEL threshold problem by simulating a one-counter reachability game by an AEL game: the reachability objective can easily be encoded by giving $\pO$ the ability to stop the simulation at any state~$s\neq \initState$ by going (with weight zero) to a fresh sink state~$s_{\sf sink}$ that is equipped with a self-loop of weight zero. Then, if he ever encounters a prefix of energy level zero, he can stop the simulation, go to the sink state, and achieve an average-energy of zero. On the other hand, if he never reaches energy level zero during the simulation, he has two options: either simulating the one-counter game ad infinitum while never reaching energy level zero or going to the sink state with non-zero energy level. In both cases, the average-energy is greater than zero. Thus, the reachability objective can be taken care of by an average-energy objective with threshold zero. 

It remains to show how to implement the disabling of edges leading to a negative energy level. As already remarked above, such edges lead to a direct loss for $\pO$ in an average-energy game. Thus, a winning strategy for $\pO$ never picks a disabled edge, i.e., we only have to consider the case of disabled edges at vertices of $\pI$. Intuitively, we do the following: every time $\pI$ picks an edge~$e$ with negative weight~$\weg(e)$, $\pO$ has the choice to let this edge be taken or to claim that the energy level is strictly below~$-\weg(e)$. If his claim is correct, he will be able to reach the sink state~$s_{\sf sink}$ with energy level zero, and thereby ensure an average-energy of zero. In contrast, if his claim is incorrect, then he either reaches a negative energy level or cannot ensure an average-energy of zero. 
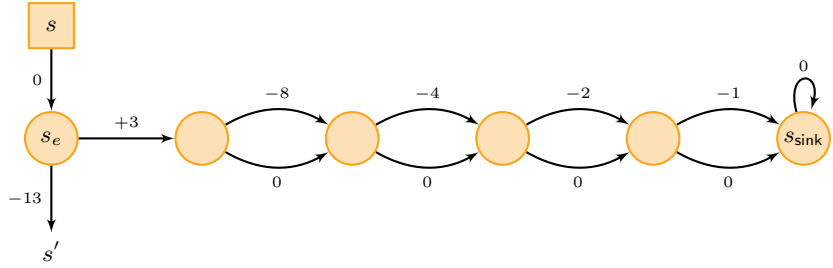
\begin{figure}[ht]
\centering

\begin{tikzpicture}[auto,node distance=2 cm, thick,every loop/.append style={-latex'}]

\node[carre,jaune] (v) at (0,1.5) {$s$};
\node[rond,jaune] (se) at (0,0) {$s_e$};
\node[]		(v') at (0,-1.5) {$s'$};	

\node[rond,jaune] (8) at (2,0) {};
\node[rond,jaune] (4) at (4,0) {};
\node[rond,jaune] (2) at (6,0) {};
\node[rond,jaune] (1) at (8,0) {};
\node[rond,jaune] (s) at (10,0) {$s_{\sf sink}$};

\path[-latex']
(v)  edge node[left] {$\scriptstyle 0$} (se)
(se) edge node[left] {$\scriptstyle -13$} (v')
(se) edge node[above] {$\scriptstyle  +3$} (8)
(8) edge[bend left] node[above] {$\scriptstyle -8$} (4)
(8) edge[bend right] node[below] {$\scriptstyle 0$} (4)
(4) edge[bend left] node[above] {$\scriptstyle -4$} (2)
(4) edge[bend right] node[below] {$\scriptstyle 0$} (2)
(2) edge[bend left] node[above] {$\scriptstyle -2$} (1)
(2) edge[bend right] node[below] {$\scriptstyle 0$} (1)
(1) edge[bend left] node[above] {$\scriptstyle -1$} (s)
(1) edge[bend right] node[below] {$\scriptstyle 0$} (s)
(s) edge[loop above] node[above] {$\scriptstyle 0$}();
	\end{tikzpicture} 
      \caption{The gadget implementing the blocking of edges (here, $e = (s, -13,s')$): if a play prefix ends in $s$ with energy level $c$, then $\pO$ can reach $s_{\sf sink}$ with energy level zero while maintaining non-negative energy levels if, and only if, $c \le 12$. If $c \ge 13$ he has to go to $s'$ when maintaining non-negative energy levels. }
\label{fig-onecountergadget}

\end{figure}

As an example, assume there is an edge $e = (s, -13, s')$ in the one-counter game. It is replaced by the gadget shown in Fig.~\ref{fig-onecountergadget}, where all states but $s$ and $s'$ are fresh. It is straightforward to show that  $\pO$ can, when at from $s_e$, enter the gadget (i.e., not go to $s'$) and ensure an non-negative energy level and average energy level zero if, and only if, the current energy level at $s$ is at most~$12$, by reaching the sink state with energy level zero. On the other hand, if the energy level is larger than $12$, then he cannot reach the sink with energy level zero. Thus, he has to take the edge to $s'$ as intended by $\pI$.

 Obviously, this gadget can be generalized to arbitrary negative weights~$w(e)$ with $\log(w(e))$ many states, which may even be shared among gadgets. 

Using the encoding of the reachability objective by an average-energy objective and the blocking-gadget described above, it is straightforward to reduce succinct once-counter games with counters in $\mathbb{N}$ to the AEL threshold problem with threshold zero.

\begin{restatable}{theorem}{theoremhardness}
\label{thm:hardness}
The AEL threshold problem is $\EXPSPACE{}$-hard, even for the fixed threshold zero. 
\end{restatable}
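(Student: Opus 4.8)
The plan is to establish $\EXPSPACE$-hardness by a polynomial-time reduction from the $\EXPSPACE$-complete succinct one-counter reachability game problem stated in the preceding proposition. Given an instance $(S_0,S_1,E)$ with distinguished vertex $\initState$, I construct a weighted game $G^{\star}$ together with the fixed threshold $t = 0$ such that $\pO$ wins the succinct one-counter reachability game (under the blocking semantics) if, and only if, $\pO$ has a winning strategy in $G^{\star}$ from $\initState$ for $\AvgLower(0) = \LBound \cap \AvgEnergyLevel(0)$.

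The construction proceeds in two independent pieces, each already sketched above. First, I encode the reachability objective: I add a fresh sink state $s_{\sf sink}$ with a weight-$0$ self-loop, and from every state $s \neq \initState$ a weight-$0$ edge to $s_{\sf sink}$ controlled by $\pO$. The key observation is that, along a play respecting the lower-bound energy constraint, $\pO$ can achieve average-energy $0$ \emph{exactly when} he can route the play through a configuration with energy level $0$ at a state $s \neq \initState$ and then move to the sink: once in $s_{\sf sink}$ with energy level $0$, all subsequent energy levels are $0$, so $\AEsup = 0$; conversely, since energy levels are non-negative throughout and not all eventually $0$ unless such a configuration is reached, any other behaviour yields $\AEsup > 0$. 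Second, I implement the blocking (disabling) semantics for $\pI$'s negative edges using the gadget of Fig.~\ref{fig-onecountergadget}: an edge $e=(s,w,s')$ with $w<0$ is replaced by a binary-counter-subtraction gadget with $\mathcal{O}(\log|w|)$ fresh states, giving $\pO$ at each bit the choice between "pass" (weight $0$) and "subtract" (the corresponding power of two), so that $\pO$ can steer into $s_{\sf sink}$ with energy level $0$ precisely when the current energy level is at most $|w|-1$, i.e. exactly when $\pI$'s move would have been blocked; if the energy level is at least $|w|$, maintaining non-negativity forces the play to $s'$, faithfully simulating the taken edge. Edges at $\pO$'s own states need no gadget, since a disabled edge leads to a negative energy level and hence an immediate loss in the $\AvgLower$ game, so a winning $\pO$ simply never uses one — matching the "lose if no enabled edge" clause because $\pI$ can then force $\pO$ into a blocked move.

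The correctness argument then runs in both directions. If $\pO$ wins the succinct one-counter game, he plays the corresponding strategy in $G^{\star}$, using the gadgets to verify $\pI$'s claims when an edge is genuinely blocked, and as soon as the simulated play reaches some $s \neq \initState$ with counter value $0$ he diverts to $s_{\sf sink}$, securing $\AEsup = 0$ and non-negative energy throughout. Conversely, from a winning $\pO$-strategy in $G^{\star}$: non-negativity of the energy level means $\pO$ never takes a disabled self-edge, and whenever $\pO$ enters a blocking gadget his entry must be "justified" (otherwise he cannot reach the sink at level $0$ and cannot avoid a negative level, contradicting that the strategy is winning), so the induced play in the original game respects the blocking semantics; and $\AEsup = 0$ forces the play to pass through a configuration at level $0$ in a state $s\neq \initState$, which is exactly the reachability goal. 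Since the gadgets contribute only logarithmically many states per edge and $t=0$ is a fixed constant, the reduction is polynomial, yielding $\EXPSPACE$-hardness even for the fixed threshold zero.

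The step I expect to require the most care is verifying the blocking gadget at the level of energy values — making precise that, starting from energy level $c$ at $s$ (after the initial weight-$0$ and $+|w|$ moves that offset the binary decomposition), $\pO$ reaches $s_{\sf sink}$ with energy level $0$ while keeping all intermediate levels non-negative iff $c < |w|$, and that when $c \ge |w|$ the only non-negative continuation is the edge to $s'$ with net weight $w$. This is a routine but fiddly induction on the bits of $|w|$, and one must also confirm that sharing fresh states between gadgets for different edges does not create spurious plays; once this lemma is nailed down, the rest of the reduction is bookkeeping.
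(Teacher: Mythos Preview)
Your proposal is correct and follows essentially the same approach as the paper: reduce from succinct one-counter reachability games by adding a weight-$0$ sink reachable from every $s\neq\initState$ and replacing each negative $\pI$-edge by the binary-subtraction gadget of Fig.~\ref{fig-onecountergadget}, so that $\pO$ achieves $\AvgLower(0)$ exactly when he can reach some $s\neq\initState$ at energy level zero. One small slip: the initial positive offset in the gadget is not $+|w|$ but $+(2^k-|w|)$ for the least $k$ with $2^k\ge |w|$ (e.g., $+3$ for $|w|=13$), which is precisely what makes ``reachable at level $0$'' equivalent to ``$c\le |w|-1$''; with that correction your bit-by-bit induction goes through exactly as you outline.
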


\section{Multi-dimensional Average-energy Games}
\label{sec-undec}

We now turn to a more general class of games where integer weights on the edges are replaced by \textit{vectors of integer weights}, representing changes in different quantitative aspects. That is, for a game $\Game =
(S_0, S_1, \trans)$ of dimension $k \geq 1$, we now have $\trans \subseteq S \times [-W,W]^k\times S$ for $W \in \bbN$. Multi-dimensional games have recently gained interest as a powerful model to reason about interplays and trade-offs between different resources; and multi-dimensional versions of many classical objectives have been considered in the literature: e.g., mean-payoff~\cite{Chatterjee2013,VelnerC0HRR15}, energy~\cite{Chatterjee2013,DBLP:conf/icalp/JurdzinskiLS15,VelnerC0HRR15}, or total-payoff~\cite{DBLP:journals/iandc/Chatterjee0RR15}. 
 We consider the natural extensions of threshold problems in the multi-dimensional setting: we take the zero vector in $\bbN^k$ as lower bound for the energy, a vector $U \in \bbN^k$ as upper bound, a vector $t \in \bbQ^k$ as threshold for the average-energy, and the payoff functions are defined using component-wise limits. That is, we essentially take the \textit{conjunction} of our objectives for all dimensions. We quickly review the situation for the three types of average-energy objectives.

\paragraph*{Average-energy games (without energy bounds).} In the one-dimensional version of such games, memoryless strategies suffice for both players and the threshold problem is in $\NP\,\cap\,\coNP$~\cite{BMRLL16}. We prove here that already for games with three dimensions, the threshold problem is undecidable, based on a reduction from two-dimensional robot games~\cite{RobotGames}. Decidability for average-energy games with two dimensions remains open.

\begin{restatable}{theorem}{thmAEundecidable}
\label{thm:AEundecidable}
The threshold problem for average-energy games with three or more dimensions is undecidable. That is, given a finite $k$-dimensional game~$\Game = (S_0, S_1, E)$, for $k \geq 3$, an~initial state $\initState \in
\states$, and a threshold $t
\in \bbQ^k$, determining whether~$\pO$ has a winning
strategy from~$\initState$ for objective~$\AvgEnergyLevel(t)$ is undecidable.
\end{restatable}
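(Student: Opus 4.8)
The plan is to reduce from two-dimensional robot games, which are known to be undecidable~\cite{RobotGames}. Recall that in a two-dimensional robot game, two players (Adam and Eve) alternately add integer vectors from fixed finite sets $A, E \subseteq \bbZ^2$ to a counter pair starting at some $\mathbf{x}_0 \in \bbZ^2$; Eve wins if the configuration $(0,0)$ is reached (say, after Eve's move), and Adam wins otherwise. The key difficulties in encoding this as an average-energy game are (i)~robot-game counters range over $\bbZ$, whereas the energy levels in an average-energy game are unconstrained integers only if there is no energy bound --- so that part is actually fine here since we consider $\AvgEnergyLevel(t)$ \emph{without} any lower-bound constraint; and (ii)~the \emph{winning condition} of the robot game is a reachability-of-a-precise-value condition, whereas average-energy is a long-run quantitative condition. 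The trick to bridge (ii) is the standard one used for these undecidability proofs: use a third dimension to ``freeze'' the play once the target is reached and force the long-run average to witness exactly whether the first two coordinates were simultaneously zero.

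Concretely, I~would build a $3$-dimensional game $\Game$ as follows. The first two dimensions simulate the two robot-game counters: each move of Adam (resp.\ Eve) in the robot game is implemented by an edge owned by $\pI$ (resp.\ $\pO$) whose weight vector has the corresponding robot-game vector in coordinates $1$ and $2$ (and $0$ in coordinate $3$); intermediate states enforce strict alternation. At the states corresponding to ``it is Eve's turn and she has just completed a move'' --- more precisely, just after each of $\pO$'s simulated moves --- $\pO$ is given an additional edge to a \emph{claim gadget}, by which she asserts that the current energy vector is $(0,0,\ast)$. In the claim gadget the play is forced into a structure whose long-run average-energy in the first two dimensions equals $0$ if and only if the claim is correct (i.e.\ coordinates $1$ and $2$ were both zero when the gadget was entered), using a self-loop of weight-zero sink once the correctness check is passed, and a cycle that drives the average strictly above the threshold in at least one coordinate otherwise. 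The third dimension is used only as a bookkeeping/consistency device if needed to rule out degenerate behaviours (e.g.\ $\pO$ never entering the gadget). Finally, I~would set the threshold vector $t$ so that $\pO$ wins the AEL game from the initial state iff she can force entry of the claim gadget at a genuinely-zero configuration, which is exactly winning the robot game.

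The main obstacle, and the part that requires care, is the design of the claim gadget so that an \emph{incorrect} claim is genuinely punished in the $\limsup$ sense and cannot be ``rescued'' by later behaviour --- the average-energy objective uses $\limsup$, so one must ensure that once the gadget is entered the play stays inside it forever and the induced infinite play has $\AEsup$ strictly exceeding $t$ in some coordinate whenever a counter was nonzero. This is where the sink-with-self-loop trick is essential: if the claim is correct, $\pO$ routes into a state where all three coordinates are exactly zero from then on, giving average-energy $0$ (matching $t = \mathbf{0}$ in coordinates $1,2$); if the claim is incorrect, a nonzero residual energy in coordinate $1$ or $2$ is held forever by a zero-weight self-loop there, so the energy level stays at a nonzero constant in that coordinate and the average-energy equals that nonzero constant --- but since counters range over $\bbZ$, a residual of, say, $-5$ would give average $-5 < 0 = t$, which is \emph{not} a loss. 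To fix this, the gadget must first let $\pI$ (or a forced sequence of moves) drive the relevant coordinate to a large positive value proportional to its absolute residual before freezing, or equivalently I should encode each robot-game counter $x$ redundantly as the pair $(x, -x)$ across two of our dimensions (using a total of $2\cdot 2 + 1 = 5$ dimensions, or more cleverly reusing a dimension), so that a nonzero residual always shows up as a strictly positive value in at least one encoded coordinate and thus as $\AEsup > 0$. Choosing the right number of dimensions and the exact gadget so that ``three dimensions'' suffices is precisely the delicate optimization; I~would first prove undecidability with a comfortable number of dimensions and then compress, exactly as the analogous arguments for multi-dimensional total-payoff and energy games do.

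Once the gadget is correct, the equivalence ``$\pO$ wins the robot game $\iff$ $\pO$ wins $\Game$ for $\AvgEnergyLevel(t)$'' follows by a routine simulation argument in both directions: a robot-game strategy for Eve reaching $(0,0)$ yields an AEL strategy that enters the gadget at a zero configuration (average-energy $0 \le t$); conversely any AEL-winning strategy for $\pO$ must eventually enter the gadget (staying in the simulation forever leaves the first-two-coordinate energy unbounded or oscillating, and one arranges the gadget-free part of the arena so that $\AEsup$ there is forced above $t$ --- this is a second place where care is needed, but it is handled by making the simulation states carry a small fixed positive offset in a spare coordinate that is only cancelled inside the gadget), and it must enter at a genuinely zero configuration, which gives a winning robot-game strategy. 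Undecidability of the two-dimensional robot game then transfers, proving the theorem.
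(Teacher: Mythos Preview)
Your high-level strategy matches the paper's: reduce from two-dimensional robot games, simulate the counters in the first two dimensions, let $\pO$ branch to an absorbing zero-weight sink to ``claim'' the configuration is $(0,0)$, and choose the threshold~$t=\mathbf 0$. You also correctly isolate the two obstacles: (a)~a nonzero residual may be \emph{negative}, which is not punished by an upper threshold, and (b)~$\pO$ might never branch at all.

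The gap is that you do not actually solve either obstacle in three dimensions. Your proposed fixes (encode each counter as $(x,-x)$, add a positive offset in a spare coordinate) cost extra dimensions, and ``first prove it comfortably and then compress'' is not a proof. The paper's single idea handles both obstacles simultaneously and is exactly what gets the dimension down to three: set the third weight of every simulation edge to $-(a+b)$, and choose the initial edge so that the three energy levels always sum to a fixed positive constant (the paper uses $3$). Then (b)~is automatic, since by subadditivity of $\limsup$ the three average-energies must sum to at least $3$ along any play that never branches, so some coordinate exceeds~$0$; and (a)~is automatic too, because after the $(-1,-1,-1)$ branch to the sink the frozen energy vector $(e_1,e_2,e_3)$ still sums to~$0$, so all three coordinates are $\le 0$ iff all three are exactly~$0$, i.e.\ iff the pre-branch level was exactly $(1,1,1)$, i.e.\ iff the robot-game counters were $(0,0)$. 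No checking gadget and no $\pI$ involvement in the claim are needed; the sink is just one edge plus a self-loop.

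In short, your plan is on the right track but is missing the invariant ``sum of the three energies is constant'', which is the whole content of the three-dimensional bound.
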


\begin{proof}
Two-dimensional robot games~\cite{RobotGames} are a special case of counter reachability games, expressible as follows: $R = (\{q_0\}, \{q_1\}, T)$, where $q_0$ (resp.~$q_1$) is the unique state belonging to $\pO$ (resp.~$\pI$), $Q = \{q_0, q_1\}$, and $T \subseteq Q \times [-V,V]^2 \times Q$, $V \in \bbN$, is a finite set of transitions. The game starts in $q_0$ with given counter values $(x_0, y_0) \in \bbZ^2$ and, when in configuration $(q, (x, y))$, taking transition $t = (q, (a, b), q')$ takes the game to configuration $(q', (x+a, y+b))$. The goal of $\pO$ is to reach configuration $(q_0, (0, 0))$ and $\pI$ tries to prevent it. It was recently proved that deciding the winner in such games is undecidable~\cite{RobotGames}.

Given $R = (\{q_0\}, \{q_1\}, T)$ with initial configuration $(q_0, (x_0, y_0))$, we build a three-dimensional average-energy game $\Game$ with threshold $(0, 0, 0)$ such that $\pO$ wins in $\Game$ if, and only if, he wins in $R$, which implies undecidability of the AEL threshold problem. Let $\Game = (S_0, S_1, E)$ with $S_0 = \{q_{\text{init}}, q_0, q_{\text{stop}}\}$ (with $q_{\text{init}}$ and $q_{\text{stop}}$ being fresh states), $S_1 = \{q_1\}$ and $E \subseteq S \times [-W,W]^3 \times S$, $W \in \bbN$, built as follows:
\begin{itemize}
\item if $(q, (a,b), q') \in T$, then $(q, (a, b, -a -b), q') \in E$,
\item $(q_0, (-1, -1, -1), q_{\text{stop}}) \in E$ and $(q_{\text{stop}}, (0, 0, 0), q_{\text{stop}}) \in E$,
\item $(q_{\text{init}}, (x_0 + 1, y_0 + 1, - x_0 - y_0 + 1), q_0) \in E$ (where $(x_0, y_0)$ are the initial counter values in $R$).
\end{itemize}
Essentially, we add a third dimension that contains the opposite of the sum of the first two dimensions of each transition in $T$; we add the possibility to branch from $q_0$ to $q_{\text{stop}}$ using a $(-1, -1, -1)$-edge to reach an absorbing state with a $(0, 0, 0)$-loop; and we also add an initial edge that encodes the initial configuration of $R$. Now, the initial state of $\Game$ is $q_{\text{init}}$ and the threshold is $t = (0, 0, 0)$.

We proceed to prove the reduction to be correct. First, let $\St_0$ be a winning strategy for~$\AvgEnergyLevel(t)$ in $\Game$. We will show that $\pO$ can also win in $R$. We start by claiming that $\pO$ has to branch to $q_{\text{stop}}$ at some point otherwise he cannot win. If the $(-1, -1, -1)$-edge is not taken along a play $\pi$, then it holds that for all $i \geq 1$, the energy level $\EL(\pi_{\leq i}) = (x_i, y_i, z_i)$ satisfies $x_i + y_i + z_i = 3$. Indeed, it holds in the first step thanks to the initial edge and it continues to hold at each step as for each edge issued from $T$, the sum of the three dimensions is zero. Now observe the following:
  \begin{xalignat*}1
   3  &=  \limsup_{n \rightarrow \infty} \frac{1}{n} \sum_{i=1}^n (x_i + y_i + z_i)\\
   &\leq \limsup_{n \rightarrow \infty} \frac{1}{n} \sum_{i=1}^n x_i + \limsup_{n \rightarrow \infty} \frac{1}{n} \sum_{i=1}^n y_i + \limsup_{n \rightarrow \infty} \frac{1}{n} \sum_{i=1}^n z_i\\
   &= x + y + z
  \end{xalignat*}
for $(x, y, z) = \AEsup(\pi)$. Hence, at least one dimension has an average-energy strictly greater than zero (otherwise their sum cannot be greater than or equal to three), and the threshold $t = (0, 0, 0)$ cannot be met. Thus, we know that branching to $q_{\text{stop}}$ is necessary since $\St_0$ is winning. But using the decomposition lemma of~\cite{BMRLL16} and the fact that $q_{\text{stop}}$ is an absorbing state with a $(0, 0, 0)$-loop, we also know that for the average-energy to be less than $(0, 0, 0)$, the energy level when branching must be no more than $(1, 1, 1)$ (as we will use a $(-1, -1, -1)$-edge). Furthermore, if the energy level on some dimension is strictly less than one, then it must be strictly greater on another dimension (because the sum is always equal to three before branching), and we are not below $(1, 1, 1)$ either. In conclusion, strategy $\St_0$ must ensure reaching the exact configuration $(q_0, (1, 1, 1))$ in order to win. Finally, observe that by construction of our game $\Game$, reaching this configuration is equivalent to reaching $(q_0, (0, 0))$ in $R$. Hence, we can easily build a strategy $\St^R_0$ in $R$ that mimics $\St_0$ in order to win the robot game. This strategy $\St^R_0$ could in general use arbitrary memory (since we start with an arbitrary strategy $\St_0$) while robot games as defined in~\cite{RobotGames} only allow strategies to look at the current configuration. Still, from $\St^R_0$, one can easily build a corresponding strategy that meets this restriction ($R$ being a counter reachability game, there is no reason to choose different actions in two visits of the same configuration).

Similarly, from a winning strategy $\St^R_0$ in $R$, we can define a strategy $\St_0$ that mimics it in $G$ in order to reach $(q_0, (1, 1, 1))$, and at that point, branches to $q_{\text{stop}}$ and wins in $G$ for $\AvgEnergyLevel(t)$. Thus, $\pO$ wins in $G$ if, and only if, he wins in $R$, which concludes our proof.\qed
\end{proof}

\paragraph*{Average-energy games with lower and upper bounds.} 
One-dimensional versions of those games were proved to be $\EXPTIME$-complete in~\cite{BMRLL16}. The algorithm consists in reducing (in two steps) the original game to a mean-payoff game on an expanded graph of pseudo-polynomial size (polynomial in the original game but also in the upper bound $U \in \bbN$) and applying a pseudo-polynomial time algorithm for mean-payoff games (e.g.,~\cite{fmsd38(2)-BCDGR}). Intuitively, the trick is that the bounds give strong constraints on the energy levels that can be visited along a play without losing and thus one can restrict the game to a particular graph where acceptable energy levels are encoded in the states and exceeding the bounds is explicitely represented by moving to ``losing'' states, just as we did in Section~\ref{subsec:reductionToMP} for the lower bound. Carefully inspecting the construction of~\cite{BMRLL16}, we observe that the same construction can be generalized straightforwardly to the multi-dimensional setting. However, the overall complexity is higher: first, the expanded graph will be of \textit{exponential size in }$k$, the number of dimensions, while still polynomial in $S$ and $U$. Second, multi-dimensional \textit{limsup} mean-payoff games are in $\NP \cap \coNP$~\cite{VelnerC0HRR15}.

\begin{theorem}
The threshold problem for multi-dimensional average-energy ga\-mes with lower and upper bounds is in $\ComplexityFont{NEXPTIME} \cap \ComplexityFont{coNEXPTIME}$. That is, given a finite $k$-dimensional game~$\Game = (S_0, S_1, E)$, an~initial state $\initState \in
\states$, an upper bound $U \in \bbN^k$, and a threshold $t
\in \bbQ^k$, determining if~$\pO$ has a winning
strategy from~$\initState$ for objective~$\LUBound(U) \cap \AvgEnergyLevel(t)$ is in $\ComplexityFont{NEXPTIME} \cap \ComplexityFont{coNEXPTIME}$.
\end{theorem}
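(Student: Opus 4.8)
The plan is to mimic exactly the two-step reduction of~\cite{BMRLL16} from one-dimensional average-energy games with lower and upper bounds to mean-payoff games on a finite expanded arena, carefully tracking how each step scales with the number of dimensions~$k$ and then invoking the known algorithm for the resulting multi-dimensional mean-payoff game. First I would recall the construction from Section~\ref{subsec:reductionToMP}: given $\Game$, the upper bound vector $U \in \bbN^k$, and the threshold $t \in \bbQ^k$, build the expanded game whose states encode, along with the current state of $\Game$, a vector of energy levels in $\prod_{i=1}^k [0, U_i]$; any transition that would push some coordinate out of its interval $[0,U_i]$ is redirected to a losing sink for~$\pO$. This is the same trick as for the lower bound alone, except that now both bounds are given as input, so the set of admissible energy vectors is finite and the expanded arena is a genuine finite graph.

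The key points to verify, in order, are: (i) the expanded arena has size $|S| \cdot \prod_{i=1}^k (U_i + 1)$, which is polynomial in $|S|$ and in the \emph{values} of the $U_i$ (hence exponential in the bit-size of~$U$), but crucially also exponential in~$k$ even for unary-encoded bounds, since the product runs over all~$k$ dimensions; (ii) on this expanded arena, the average-energy objective of $\Game$ becomes, component-wise, a $\limsup$ mean-payoff objective, by labelling each edge of the expanded game with the (vector of) energy levels reached at its target --- this is precisely the analogue of the relation $\MP(\iota(\pi_{\le j})) = \AEpay(\pi_{\le j})$ established in the proof of Lemma~\ref{lem:reducToMP}, applied coordinate-wise; (iii) thus $\pO$ wins the multi-dimensional $\LUBound(U) \cap \AvgEnergyLevel(t)$ game on $\Game$ from $\initState$ if and only if he wins the multi-dimensional $\limsup$ mean-payoff game with threshold $t$ on the expanded arena from the configuration $(\initState, \vec 0)$. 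Finally, I would invoke the result of~\cite{VelnerC0HRR15} that multi-dimensional $\limsup$ mean-payoff games are in $\NP \cap \coNP$: guessing (resp.\ refuting) a winning strategy on the expanded arena, whose size is exponential in the input, puts the whole problem in $\ComplexityFont{NEXPTIME} \cap \ComplexityFont{coNEXPTIME}$.

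The main obstacle --- and the step most worth spelling out carefully --- is the correctness of step~(ii), namely that the component-wise $\limsup$ of the mean-payoff on the expanded arena really coincides with the component-wise $\limsup$ of the average-energy on $\Game$, and that nothing is lost by taking the bounds coordinate-wise and the limits coordinate-wise rather than, say, over a common subsequence. Here one has to be a little careful because $\limsup$ does not commute with sums, so the equivalence must be argued dimension by dimension, exactly as in the proof of Lemma~\ref{lem:reducToMP}: on each dimension~$i$, as long as no bound is violated, the energy level in the $i$-th coordinate of the expanded state equals $\EL(\pi_{\le j})$ restricted to dimension~$i$, so $\MP$ of the expanded play on coordinate~$i$ equals $\AEpay$ of $\pi$ on coordinate~$i$, for every finite prefix; taking $\limsup_j$ on each coordinate then yields the claim, and a play violates some bound in $\Game$ iff the corresponding expanded play reaches the sink, whose self-loop is given a weight vector whose $i$-th component exceeds $t_i$ (e.g.\ $\lceil t_i \rceil + 1$) in at least the violated dimension. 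Once this is in place, the complexity bookkeeping of step~(i) and the appeal to~\cite{VelnerC0HRR15} in step~(iii) are routine, and $\EXPTIME$-hardness is inherited verbatim from the one-dimensional case already established in~\cite{BMRLL16}.
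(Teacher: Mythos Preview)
Your proposal is correct and follows essentially the same approach as the paper: generalize the expanded-arena construction of~\cite{BMRLL16} to $k$ dimensions (yielding a graph of size polynomial in $|S|$ and in the values~$U_i$ but exponential in~$k$), observe that the average-energy objective becomes a multi-dimensional $\limsup$ mean-payoff objective on this arena, and then invoke the $\NP\cap\coNP$ bound of~\cite{VelnerC0HRR15}. The paper in fact gives only the sketch in the paragraph preceding the theorem, so your write-up is more detailed than the original; your care about the coordinate-wise $\limsup$ equivalence is appropriate but, as you note, reduces dimension by dimension to the argument of Lemma~\ref{lem:reducToMP}.
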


Whether the $\EXPTIME$-hardness that trivially follows from the one-dimen\-sion\-al case~\cite{BMRLL16} can be enhanced to meet this upper bound (or conversely) is an open problem.

\paragraph*{Average-energy games with lower bound but no upper bound.} Finally, we consider the core setting of this paper, which we just proved decidable in one-dimension, solving the open problem of \cite{BMRLL16}. Unfortunately, we show that those games are undecidable \textit{as soon as two-dimensional weights are allowed}. To prove it, we reuse some ideas of the proof of undecidability for multi-dimensional total-payoff games presented in~\cite{DBLP:journals/iandc/Chatterjee0RR15}, but specific gadgets need to be adapted. Hence we provide a full proof here.

\begin{restatable}{theorem}{thmAELundecidable}
\label{thm:AELundecidable}
The threshold problem for lower-bounded average-energy games with two or more dimensions is undecidable. That is, given a finite $k$-dimensional game~$\Game = (S_0, S_1, E)$, for $k \geq 2$, an~initial state $\initState \in
\states$, and a threshold $t
\in \bbQ^k$, determining whether~$\pO$ has a winning
strategy from~$\initState$ for objective~$\AvgLower(t)$ is undecidable.
\end{restatable}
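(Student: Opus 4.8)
The plan is to reduce from two-dimensional robot games~\cite{RobotGames}, which are known to be undecidable and which were already used for Theorem~\ref{thm:AEundecidable}, adapting the gadget construction of~\cite{DBLP:journals/iandc/Chatterjee0RR15} for multi-dimensional total-payoff games. Given a robot game $R = (\{q_0\},\{q_1\},T)$ with $T\subseteq Q\times[-V,V]^2\times Q$ and initial configuration $(q_0,(x_0,y_0))$, whose goal for $\pO$ is to reach $(q_0,(0,0))$, I would build a two-dimensional weighted game $\Game$ with an initial state $\initState$ and a threshold $t\in\bbQ^2$ such that $\pO$ wins $\AvgLower(t)$ from $\initState$ in $\Game$ if, and only if, $\pO$ wins $R$. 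As in the three-dimensional case, the skeleton of $\Game$ mimics $R$, adding a fresh absorbing state reachable from $q_0$ that carries a $0$-loop on both dimensions; reaching it corresponds to $\pO$ claiming that the current robot configuration is $(q_0,(0,0))$, after which the running average of the energy levels converges and $\pO$ wins provided the claim is legitimate. The substance of the proof is in how the two counters of $R$ — which range over $\bbZ$ — are encoded by the two energy dimensions, which by the $\LBound$ part of the objective must stay non-negative at all times.

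I would encode the two robot counters following the idea of~\cite{DBLP:journals/iandc/Chatterjee0RR15}: rather than storing $x$ and $y$ as raw energy levels (impossible, since these may become negative), each counter is represented through the limiting behaviour of the average-energy on its dimension, the energy level being made to oscillate inside the non-negative range by means of ``pumping'' gadgets controlled so that any unfaithful pumping shifts $\AEsup$ away from the threshold on that dimension. Concretely, each simulated transition $(q,(a,b),q')$ of $R$ is replaced by a small sub-game in which $\pI$ is offered, at every step, the option of interrupting the simulation to \emph{verify} the current value of dimension~$1$ (resp.\ dimension~$2$): the verification sub-gadget forces the play into an infinite loop whose average-energy on the tested dimension equals the tested value shifted by a fixed offset, so that $\pO$ survives the test exactly when the simulated counter has the value he implicitly claimed. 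Since $\pO$ cannot tell in advance whether $\pI$ will verify, $\pO$ is forced to simulate $R$ faithfully; and $\pI$, dually, gains nothing by verifying a correct value, because then $\pO$ can recover by emptying the other dimension and idling at average-energy $t$. The final ``reach $(q_0,(0,0))$'' test is just the verification gadget applied to both dimensions at once at the moment $\pO$ branches to the absorbing state.

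Putting this together, a winning strategy for $\pO$ in $R$ yields one in $\Game$ by playing faithfully, never triggering an unrecoverable verification, and branching to the sink upon reaching $(q_0,(0,0))$ in $R$; conversely, a winning strategy in $\Game$ must, by the threat of verification, induce a faithful play of $R$, and since an infinite faithful play that never reaches $(q_0,(0,0))$ keeps the long-run average strictly above $t$ on at least one dimension — by an argument analogous to the one used in the proof of Theorem~\ref{thm:AEundecidable} to force $\pO$ to branch — it must eventually branch to the sink, witnessing a win in $R$. The case $k>2$ follows by padding with $k-2$ dimensions carrying weight $0$ everywhere. The main obstacle, and the reason a full proof is needed rather than a direct quotation of~\cite{DBLP:journals/iandc/Chatterjee0RR15}, is the design and verification of the pumping/verification gadgets under the combined $\LBound\cap\AvgEnergyLevel(t)$ objective: one must keep both energy dimensions non-negative at \emph{every} step of every gadget (so that $\pO$ never loses ``for the wrong reason'' via a lower-bound violation), while simultaneously (i)~making every dishonest move of $\pO$ punishable by $\pI$ through an arbitrarily small but persistent excess in the long-run average on some dimension, and (ii)~making every spurious verification by $\pI$ harmless to $\pO$. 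Checking that the $\limsup$-based value $\AEsup$ indeed takes the intended value on the infinite plays produced by these interacting gadgets is the technical heart of the argument; once the gadgets are in place, the strategy-transfer equivalence above is routine.
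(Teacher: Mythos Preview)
Your proposal takes a different source of undecidability than the paper, and that choice creates a real obstacle you have not resolved. The paper reduces from the halting problem for two-counter machines, not from robot games. This matters: 2CM counters live in~$\bbN$, so the encoding is direct --- energy level $(v_1+1,v_2+1)$ stores the counter pair $(v_1,v_2)$, and the $\LBound$ constraint works \emph{with} the reduction rather than against it. The gadgets are then very simple: $\pI$ may, after any simulation step, branch through a $(-1,-1)$ edge to punish a negative counter (the lower bound is violated), and may branch through a $(-1,0)$ or $(0,-1)$ edge after a zero-test to punish a false ``$=0$'' claim (the residual positive energy makes $\AEsup$ strictly positive on that dimension). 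A faithful halting run ends at energy $(1,1)$, takes one $(-1,-1)$ edge, and loops forever at $(0,0)$. No pumping, no oscillation, no limit-based encoding of counter values is needed.

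By contrast, your robot-game route inherits $\bbZ$-valued counters, and your plan to ``represent each counter through the limiting behaviour of the average-energy'' while ``oscillating inside the non-negative range'' is exactly the part that would need to be invented; you acknowledge this yourself. It is not clear such gadgets exist in two dimensions: a fixed additive offset cannot absorb arbitrarily negative counter values, splitting each counter into positive and negative parts costs extra dimensions, and any scheme where the \emph{current} counter value is recovered from a \emph{long-run} average must explain how $\pI$'s finite-time verification can read it off. The argument ``analogous to the one used in the proof of Theorem~\ref{thm:AEundecidable}'' does not transfer either, since that argument relied on a third dimension carrying $-a-b$ to force the component sum to a constant --- a trick unavailable here. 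In short, the paper's 2CM reduction sidesteps precisely the difficulty you identify as the technical heart of your approach; unless you can exhibit concrete two-dimensional gadgets with the properties you list, your proposal is a plan rather than a proof.
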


\begin{proof}
We reduce the halting problem for two-counter machines (2CMs) to the threshold problem for lower-bounded average-energy games with two dimensions. From a two-counter machine $\calM$, we construct a two-player game $\Game$ with two dimensions and a lower-bounded average-energy objective such that $\pO$ wins for threshold $(0,0)$ if, and only if, the 2CM halts. Counters take values $(v_{1}, v_{2}) \in \bbN^{2}$ along an execution, and can be incremented or decremented (if positive). A~counter can be tested for equality to zero, and the machine can branch accordingly.
The halting problem for 2CMs is undecidable~\cite{minsky1961}. Assume w.l.o.g. that we have a 2CM $\calM$ such that if it halts, it halts with the two counters equal to zero, which is possible as it suffices to plug a machine that decreases both counters to zero at the end of the execution of the considered machine. In the game we construct, $\pO$ has to faithfully simulate the 2CM~$\calM$ and wins only if it halts. The role of $\pI$ is to ensure that he does so by retaliating if it is not the case, hence making the outcome losing for the lower-bounded average-energy objective.

The game is built as follows. The states of $\Game$ are copies of the control states of~$\calM$ plus some special states discussed in the following. Edges represent transitions between these states. Let $C_1$ and $C_2$ be the two counters, starting with value zero. We start the game by taking an edge of weight $(1, 1)$ and we build our game such that, at all times along a faithful execution, the counters have value $(v_{1}, v_{2})$ if, and only, if the current energy level is $(v_1 + 1, v_2 +1)$. Each increment of $C_1$ (resp.~$C_2$) in $\calM$ is implemented in $\Game$ by an edge of weight $(1, 0)$ (resp.~$(0, 1)$); each decrement of $C_1$ (resp.~$C_2$) by an edge of weight $(-1, 0)$ (resp.~$(0,-1)$).

We first present how to ensure a faithful simulation of the 2CM $\calM$ by $\pO$. The necessary gadgets are depicted in Fig.~\ref{fig:gadgets}.

\begin{figure}[htb]
        \centering
\subfloat[Counters are always non-negative.]{\label{fig:gadgetPos}\scalebox{1}{\begin{tikzpicture}[->,>=stealth',shorten >=1pt,auto,node
    distance=2.5cm,bend angle=45, scale=1, font=\normalsize,inner sep=.5mm, thick,every loop/.append style={-latex'}]
    \everymath{\scriptstyle}
 
    \node[carre,jaune]  (0)  at (0, 0) {};
    \node[rond,jaune]  (1) at (1.8, 0) {};
    \node[rond,jaune]  (2) at (3.6, 0) {};
    \node[rond,jaune]  (3) at (5.4, 0) {};
    
    \coordinate[shift={(0mm,5mm)}] (from) at (0.north);
    \coordinate[shift={(0mm,-5mm)}] (to) at (0.south);
    \path
    (1) edge [loop below, out=240, in=300,looseness=2, distance=8mm,-latex'] node [below] {$\scriptstyle (-1,0)$} (1)
    (2) edge [loop below, out=240, in=300,looseness=2, distance=8mm,-latex'] node [below] {$\scriptstyle (0,-1)$} (2)
    (3) edge [loop below, out=240, in=300,looseness=2, distance=8mm,-latex'] node [below] {$\scriptstyle (0,0)$} (3);
	\draw[-latex',dotted] (from) to (0);
	\draw[-latex',dotted] (0) to (to);
	\draw[-latex'] (0) to node[above,yshift=1mm] {$\scriptstyle (-1,-1)$} (1);
	\draw[-latex'] (1) to node[above,yshift=1mm] {$\scriptstyle (0,0)$} (2);
	\draw[-latex'] (2) to node[above,yshift=1mm] {$\scriptstyle (0,0)$} (3);
      \end{tikzpicture}}}
      \hspace{16mm}
      \subfloat[Halting.]{\scalebox{1}{\label{fig:gadgetHalt}\begin{tikzpicture}[->,>=stealth',shorten >=1pt,auto,node
    distance=2.5cm,bend angle=45, scale=1, font=\normalsize,inner sep=.5mm, thick,every loop/.append style={-latex'}]
    \everymath{\scriptstyle}

    \node[rond,jaune]  (0)  at (0, 0) {};
    \node[rond,jaune]  (1) at (1.6, 0) {};
    
    \coordinate[shift={(-5mm,0mm)}] (from) at (0.west);
    \path
    (1) edge [loop below, out=240, in=300,looseness=2, distance=8mm,-latex'] node [below] {$\scriptstyle (0,0)$} (1);
	\draw[-latex',dotted] (from) to (0);
	\draw[-latex'] (0) to node[above,yshift=1mm] {$\scriptstyle (-1,-1)$} (1);
      \end{tikzpicture}}}
      
      \subfloat[$C_1$ is equal to zero.]{\label{fig:gadgetZeroOne}\scalebox{1}{\begin{tikzpicture}[->,>=stealth',shorten >=1pt,auto,node
    distance=2.5cm,bend angle=45, scale=1, font=\normalsize,inner sep=.5mm, thick,every loop/.append style={-latex'}]
    \everymath{\scriptstyle}

    \node[carre,jaune]  (0)  at (0, 0) {};
    \node[rond,jaune]  (1) at (1.6, 0) {};
    \node[rond,jaune]  (2) at (3.2, 0) {};
    
    \coordinate[shift={(0mm,5mm)}] (from) at (0.north);
    \coordinate[shift={(0mm,-5mm)}] (to) at (0.south);
    \path
    (1) edge [loop below, out=240, in=300,looseness=2, distance=8mm,-latex'] node [below] {$\scriptstyle (0,-1)$} (1)
    (2) edge [loop below, out=240, in=300,looseness=2, distance=8mm,-latex'] node [below] {$\scriptstyle (0,0)$} (2);
	\draw[-latex',dotted] (from) to (0);
	\draw[-latex',dotted] (0) to (to);
	\draw[-latex'] (0) to node[above,yshift=1mm] {$\scriptstyle (-1,0)$} (1);
	\draw[-latex'] (1) to node[above,yshift=1mm] {$\scriptstyle (0,0)$} (2);
      \end{tikzpicture}}}
      \hspace{20mm}
      \subfloat[$C_2$ is equal to zero.]{\label{fig:gadgetZeroTwo}\scalebox{1}{\begin{tikzpicture}[->,>=stealth',shorten >=1pt,auto,node
    distance=2.5cm,bend angle=45, scale=1, font=\normalsize,inner sep=.5mm, thick,every loop/.append style={-latex'}]
    \everymath{\scriptstyle}

    \node[carre,jaune]  (0)  at (0, 0) {};
    \node[rond,jaune]  (1) at (1.6, 0) {};
    \node[rond,jaune]  (2) at (3.2, 0) {};
    
    \coordinate[shift={(0mm,5mm)}] (from) at (0.north);
    \coordinate[shift={(0mm,-5mm)}] (to) at (0.south);
    \path
    (1) edge [loop below, out=240, in=300,looseness=2, distance=8mm,-latex'] node [below] {$\scriptstyle (-1,0)$} (1)
    (2) edge [loop below, out=240, in=300,looseness=2, distance=8mm,-latex'] node [below] {$\scriptstyle (0,0)$} (2);
	\draw[-latex',dotted] (from) to (0);
	\draw[-latex',dotted] (0) to (to);
	\draw[-latex'] (0) to node[above,yshift=1mm] {$\scriptstyle (0,-1)$} (1);
	\draw[-latex'] (1) to node[above,yshift=1mm] {$\scriptstyle (0,0)$} (2);
      \end{tikzpicture}}}
      
	\caption{Gadgets for the reduction from the 2CM halting problem to lower-bounded average-energy games.}
	\label{fig:gadgets}
\end{figure}
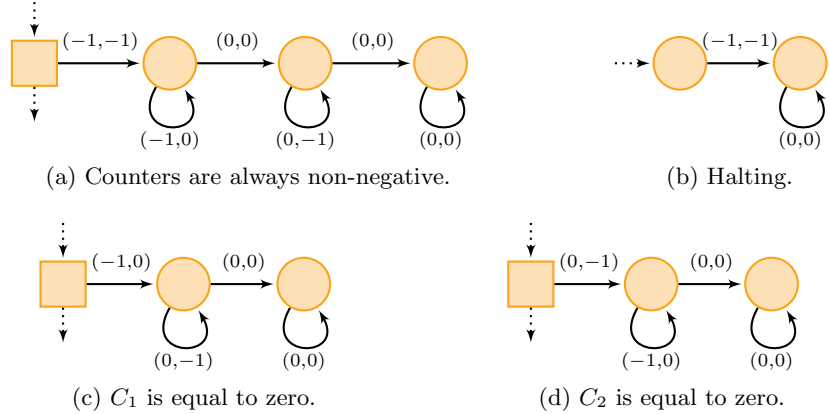
\textit{Increments and decrements} of counters are easily simulated as presented above.

\textit{Values of counters may never go below zero.} To ensure this, we allow $\pI$ to branch after every step of the 2CM simulation to a special gadget represented in Subfig.~\ref{fig:gadgets}\subref{fig:gadgetPos}. If $\pO$ cheats by reaching a negative value on counter $C_1$ or $C_2$, $\pI$ can easily win by branching: indeed, say $C_1$ has value $v_1 < 0$, then the energy level on the first dimension before branching is a most zero and because of the $(-1, -1)$ edge, the energy becomes negative and $\pO$ loses. On the contrary, if $\pO$ does not cheat and maintains both counters non-negative at all times, taking the $(-1, -1)$ edge is safe and the energy level after this edge is $(v_1, v_2)$. Then, $\pO$ can win the game by taking the first loop $v_1$ times, the second one $v_2$ times, and finally going to the absorbing state with energy level $(0,0)$. Indeed, because of the decomposition lemma of~\cite{DBLP:journals/iandc/Chatterjee0RR15}, the average-energy of the resulting play will be equal to $(0,0)$.

\textit{Zero-tests are correctly executed.} Consider counter $C_1$. To ensure that $\pO$ does not cheat by claiming that $C_1$ has value zero while it is strictly positive, we give the possibility to $\pI$ to check zero-tests on $C_1$ by branching using the gadget in Subfig.~\ref{fig:gadgets}\subref{fig:gadgetZeroOne}. Assume $C_1$ has value $v_1 > 0$ and $\pO$ claims it has value zero. Then, $\pI$ branches and after the $(-1,0)$ edge, the energy level on the first dimension is still strictly positive. Since $\pO$ can never decrease it afterwards, he will not meet the threshold $(0,0)$ on the average-energy and will lose the play. Now, assume $C_1$ has indeed value $v_1 = 0$. If $\pI$ decides to branch nonetheless, the energy level after branching is $(0, v_2)$ and $\pO$ can win by looping $v_2$ times in the first cycle and then going to the absorbing state. For counter $C_2$, we use the symmetric gadget depicted in Subfig.~\ref{fig:gadgets}\subref{fig:gadgetZeroTwo}. Similarly, we can check that $\pO$ does not cheat by claiming that $C_1$ (resp.~$C_2$) is strictly positive while it is not. To do so, we give the possibility to $\pI$ to branch after such a claim and decrement $C_1$ (resp.~$C_2$) using an edge $(-1, 0)$ (resp.~$(0,-1)$) and then go to the gadget in Subfig.~\ref{fig:gadgets}\subref{fig:gadgetPos}: if $C_1$ (resp.~$C_2$) had value zero, then the play is losing because the energy level drops below zero, and if $C_1$ (resp.~$C_2$) was strictly positive, $\pO$ can win as described before. 

Therefore, if $\pO$ does not faithfully simulate $\calM$, he is guaranteed to lose in~$\Game$. On the other hand, if $\pI$ stops a faithful simulation, $\pO$ is guaranteed to win.

It remains to argue that $\pO$ wins if, and only if, the machine halts. Indeed, if the machine $\calM$ halts, then $\pO$ simulates its execution faithfully and either he is interrupted and wins, or the simulation ends in the gadget depicted in Subfig.~\ref{fig:gadgets}\subref{fig:gadgetHalt} and he also wins. Indeed, given that this gadget can only be reached with values of counters equal to zero (by hypothesis on $\calM$ and by our assumption on $\calM$), the energy level reaches $(0,0)$ after the $(-1,-1)$ edge, and using the decomposition lemma mentioned above, the average-energy for the play is $(0,0)$. Hence, $\pO$ wins. On the opposite, if $\calM$ does not halt, $\pO$ has no way to reach the halting gadget by means of a faithful simulation, hence if $\pO$ never cheats and $\pI$ never branches, the energy level in both dimensions is at all times at least equal to one (since it is equal to $(v_1 + 1, v_2 + 1)$): the average-energy will be at least $(1, 1)$, hence above the threshold, and $\pO$ loses.

Consequently, we have that $\pO$ wins if, and only if, the 2CM halts, which implies undecidability for lower-bounded average-energy games with two or more dimensions.\qed
\end{proof}

\section{Conclusion}
\label{sec_conc}
We have presented the first algorithm for solving average-energy games with only a lower bound, but no upper bound, on the energy level, thereby solving an open problem from~\cite{BMRLL16}. The algorithm is based on the first upper bound on the necessary memory to implement a winning strategy for $\pO$ in such a game, which solves another open problem from~\cite{BMRLL16}. 

Our algorithm has a doubly-exponential running time, which we complemented by showing the problem to be $\EXPSPACE{}$-hard. This is an improvement over the previous $\EXPTIME{}$-hardness result~\cite{BMRLL16}. An obvious open problem concerns closing this gap. On the other hand, a game due to Hunter~\cite{Hunter14arxiv} shows our doubly-exponential upper on the memory to be almost tight. Finally, for a unary encoding of the weights and the threshold (or if they are polynomially bounded in the size of the graph), our algorithm runs in exponential time and the upper bound on the memory is exponential as well.

Finally, we also investigated multi-dimensional average-energy games in various settings: we showed such games to be undecidable, both for the case without any bounds on the energy level and for the case of only lower bounds. In contrast, the case of games with both a lower and an upper bound in every dimension is in $\ComplexityFont{NEXPTIME}\cap \ComplexityFont{coNEXPTIME}$ and trivially $\EXPTIME{}$-hard, another gap to be closed in further research. Also, we left open the decidability of two-dimensional average-energy games without bounds. 

Our results where obtained by solving a certain type of mean-payoff (one-counter) pushdown game with unbounded weight function. In further research, we investigate whether our techniques are able to solve arbitrary mean-payoff pushdown games with such a weight function.

\bibliographystyle{plain}
\bibliography{biblio}

\end{document}